\documentclass[a4paper,12pt]{article}

\usepackage{bm}
\usepackage{graphicx}
\usepackage{amsbsy}
\usepackage{amsmath}
\usepackage{amsfonts}
\usepackage{amsthm}
\usepackage{listings}
\usepackage{hyperref}  
\usepackage{blindtext}
\usepackage{titlesec}
\usepackage{epigraph}

\usepackage{CJKutf8}

\usepackage{geometry}
\usepackage[T1]{fontenc}      

\graphicspath{ {./images/} }

\numberwithin{equation}{section}
\numberwithin{figure}{section}

\usepackage{cite}

\usepackage{amssymb}
\usepackage{tikz-cd}
\usepackage{tikz}
\usepackage{mathrsfs}
\usepackage{mathtools}

\usepackage{pgfplots}
\pgfplotsset{compat=1.18}

\def\mg{\mathfrak{g}}

\def\tr{\text{tr}}

\def\Ad{\text{Ad}}

\def\tx{\text}
\def\bdy{\scalebox{0.5}{$\bigcirc$}}
\def\cft{\tx{CFT}}
\def\ads{\tx{AdS}}
\def\PSL{\tx{PSL}(2, \mathbb{R})}
\def\psl{\mathfrak{sl}(2, \mathbb{R})}
\def\G{G_{\tx{N}}}

\theoremstyle{definition}

\newtheorem{proposition}{Proposition}[section]

\newcommand{\RNum}[1]{\uppercase\expandafter{\romannumeral #1\relax}}

\usepackage{authblk}
\title{Filtering out Erratic Observables:
\\
Wormholes from Gauging Nonlocal Symmetries}

\usepackage{hyperref}

\renewcommand{\thefootnote}{\fnsymbol{footnote}}

\begin{document}
\begin{CJK*}{UTF8}{gbsn}

\author{Qi-Feng Wu (吴奇峰)\thanks{qifeng.wu@ugent.be}}

\affil{Department of Physics and Astronomy

Ghent University, Krijgslaan, 281-S9, 9000 Gent, Belgium}

\maketitle


{\small  \noindent 
\begin{center} 
\textbf{Abstract}
\end{center} 
The wormhole contribution to the gravitational path integral may be interpreted as smooth
remnant of correlations among the erratic large-$N$ behaviors of dual CFTs. In this work, we investigate this idea in (2+1)-dimensional gravity. We show that one-sided boundary gravitons are intrinsically incomplete in the sense that the associated observable algebra has a nontrivial center regardless of choices of boundary conditions.  Based on asymptotic symmetries, we bootstrap a general Poisson bracket to construct completions of the boundary gravitons. In the simplest completion, the commutant of the boundary graviton observable algebra is given by an observable algebra of monodromy data which we interpret as an effective description of one-sided black holes. We show that, to describe Lorentzian multi-boundary wormholes, only the monodromy data with a positivity restriction is needed. The positivity restriction results in emergent erratic large-$N$ behaviors for some observables. We filter out the erratic observables by restricting to a subspace on which they act trivially. The monodromy observables generate nonlocal symmetries lack of corresponding local currents. We show that gauging the nonlocal symmetries is equivalent to filtering out the erratic observables. For one CFT, gauging the nonlocal symmetries at the quantum level removes all black hole states. Filtering the partition function of CFTs leads to an apparent ensemble averaging. For two CFTs, a Hilbert subspace describing wormholes survives after gauging global part of the nonlocal symmetries. The filtered partition function of the two CFTs is an ensemble average over quantum gates entangling the monodromy degrees of freedom of the two CFTs. The correlation between the erratic observables of the two CFTs is preserved, which contributes to the filtered partition function as a wormhole term. 
}

\newpage

\tableofcontents
\newpage

\setcounter{footnote}{0}
\renewcommand{\thefootnote}{\arabic{footnote}}

\section{Introduction}
\label{sec: Introduction}

A quantum mechanical description of horizons is not yet fully established. It leads to various puzzles, e.g.~\cite{Hawking:1976ra,  Mathur:2009hf, Almheiri:2012rt, Marolf:2013dba}. As a novel interpretation of black hole horizons, an operational limit on physical observers was introduced in~\cite{Harlow:2013tf} and was further developed, e.g., in~\cite{Susskind:2015toa, Aaronson:2016vto, Bouland:2019pvu, Kim:2020cds}. A central idea is that, for observers outside the horizon, the operational complexity required to access information behind the black hole horizon may be extremely high such that the interior is effectively inaccessible, and the horizon may be interpreted as a geometric manifestation of the operational limit. Similar ideas also appear in~\cite{Papadodimas:2012aq, Papadodimas:2013jku, Engelhardt:2017aux,Engelhardt:2018kcs, Bousso:2019dxk,Brown:2019rox,Engelhardt:2021qjs,Akers:2022qdl}. In the context of AdS/CFT correspondence~\cite{Maldacena:1997re, Gubser:1998bc, Witten:1998qj}, this operational limit may be interpreted as a form of the information loss~\cite{Maldacena:2001kr} of boundary theory in the large $N$ limit~\footnote{$N$ is a schematic parameter characterizing the effective degrees of freedom and the coupling constant of the boundary theory in AdS/CFT. The Newton's constant $\G$ scales as $N^\gamma$ for some negative exponent $\gamma$.}. It can be reformulated as the phenomenon that the ``single trace'' operators or observables of low complexity do not generate the full boundary observable algebra in the large $N$ limit~\cite{Leutheusser:2021frk,Leutheusser:2021qhd  , Furuya:2023fei, Engelhardt:2023xer,Gesteau:2024rpt,Liu:2025cml, Liu:2025krl}.

Based on the operational limit, the algebra $\mathcal{A}_\cft$ of boundary observables is divided into the set $\mathcal{A}_\cft^{\tx{simple}}$ of simple observables (often called ``single trace'' operators) and the set $\mathcal{S}_\cft^{\tx{complex}}$ of complex observables. $\mathcal{A}^{\tx{simple}}_\cft$ becomes an algebra in the large $N$ limit because of the information loss. In other words, in the large $N$ limit, (finite) compositions of simple operations do not produce intrinsically complex operations. The operational limit in the boundary theory is expected to be dual to the presence of the horizon in the bulk, so the algebra $\mathcal{A}_\ads^{\tx{exterior}}$ of observables outside the horizon may be dual to $\mathcal{A}_\cft^{\tx{simple}}$, i.e.
\begin{equation}
\label{eq: exterior algebra  = simple algebra}
    \mathcal{A}_\ads^{\tx{exterior}} \cong \mathcal{A}_\cft^{\tx{simple}} .
\end{equation}
One may expect that the algebra $\mathcal{A}_\ads^{\tx{interior}}$ of observables behind the horizon is dual to $\mathcal{S}_\cft^{\tx{complex}}$, but $\mathcal{S}_\cft^{\tx{complex}}$ is not an algebra. This is because the composition of a simple operation and a complex operation is generically complex, which implies that the composition of two complex operations can be simple. So $\mathcal{S}_\cft^{\tx{complex}}$ essentially generates the full algebra $\mathcal{A}_\cft$. We hence expect that the boundary observable subalgebra $\mathcal{A}_\cft^{\tx{complex}}$ dual to $\mathcal{A}_\ads^{\tx{interior}}$ is composed of part of $\mathcal{S}_\cft^{\tx{complex}}$ and part of $\mathcal{A}_\cft^{\tx{simple}}$. 

Besides the operational limit, part of the information loss can also be understood as arising from the fact that some observables do not survive the large $N$ limit~\cite{Witten:2021jzq, Schlenker:2022dyo, Kudler-Flam:2025cki,Kudler-Flam:2026nzz}. The interior observable algebra $\mathcal{A}_\cft^{\tx{complex}}$ are accordingly divided into two parts: the set~$\mathcal{A}_\cft^{\tx{smooth}}$ of observables that survives the large $N$ limit, and the set~$\mathcal{S}_\cft^{\tx{erratic}}$ of observables that does not. The algebra of a set of observables does not (resp. does) survive the large $N$ limit if it depends on $N$ erratically (resp. smoothly). A (finite) sum of (finite) products of smooth observables is smooth, so $\mathcal{A}_\cft^{\tx{smooth}}$ is an algebra. $\mathcal{A}_{\cft}^{\tx{smooth}}$ may govern the semiclassical physics behind the horizon since $\mathcal{S}_\cft^{\tx{erratic}}$ has no large $N$ limit. While $\mathcal{S}_\cft^{\tx{erratic}}$ may capture complementary aspects. It was shown that wormhole contributions to the gravitational path integral admit an ensemble averaging interpretation~\cite{Saad:2019lba}. See \cite{Coleman:1988cy, Giddings:1988cx, Cotler:2016fpe} for earlier developments, \cite{Stanford:2019vob, Saad:2019pqd,Marolf:2020xie, Blommaert:2020seb, Post:2022dfi} for further discussions on 2d gravity, and \cite{Cotler:2020ugk,Belin:2020hea, Afkhami-Jeddi:2020ezh,Chandra:2022bqq, DiUbaldo:2023qli, Belin:2023efa, Jafferis:2025vyp, Geng:2025efs, Yu:2026gdf} for higher dimensional generalizations. See also~\cite{Klinger:2025tvg,Klinger:2026kqj} for related discussions.

In~\cite{Schlenker:2022dyo}, it was proposed that the erratic $N$-dependence of observables is essential for the ensemble-averaging behavior of black holes: if an observable $\mathcal{O}_N$ depends erratically on $N$ in the large $N$ limit, then $\mathcal{O}_N$ for nearing values of $N$ can be effectively regarded as independent samples in an ensemble. See~\cite{Kudler-Flam:2025cki, Kudler-Flam:2026nzz } for further developments. In \cite{Liu:2025ikq}, it was systematically formulated and argued that the smooth remnants of correlations among the erratic large-$N$ behaviors of boundary CFTs are manifested as wormholes on the gravity side. The map extracting the smooth remnants in the large $N$ limit is called a large-$N$ filter. More precisely, suppose that in the large $N$ limit the partition function $Z_\cft[M]$ of the boundary CFT on a Euclidean manifold $M$ decomposes into two parts 
\begin{equation}
    Z_\cft[M] = Z_\cft^{\tx{smooth}}[M] + Z_\cft^{\tx{erratic}} [M]
\end{equation}
with $Z_\cft^{\tx{smooth}}[M]$ (resp. $Z_\cft^{\tx{erratic}} [M]$) depending on $N$ smoothly (resp. erratically). The large-$N$ filter $\mathbb{F}$ projects $Z_\cft[M]$ onto the smooth part, i.e.
\begin{equation}
\label{eq: filtered one CFT partition function}
    \mathbb{F}\{ Z_\cft[M] \} = Z_\cft^{\tx{smooth}} [M].
\end{equation}
It was argued that the large-$N$ filter extracts the wormhole contribution from the erratic correlations, i.e.
\begin{equation}
    \mathbb{F} \left\{ \, Z_\cft^{\tx{erratic}}[M_1] Z_\cft^{\tx{erratic}}[M_2]\, \right\} = Z_{\tx{gravity}}^{\tx{wormhole}} [M_1, M_2],
\end{equation}
where $Z_{\tx{gravity}}^{\tx{wormhole}} [M_1, M_2]$ is the gravitational path integral over the wormholes with boundary $M_1 \cup M_2$. Then the filtered partition function of two disjoint CFTs decomposes into a disconnected part and a connected wormhole part:
\begin{equation}
\label{eq: F( Z Z )}
    \mathbb{F} \left\{\, Z_\cft[M_1] Z_\cft [M_2]  \, \right\} = \mathbb{F} \{ Z_\cft[M_1] \} \,\mathbb{F}\{ Z_\cft [M_2]   \} +  Z_{\tx{gravity}}^{\tx{wormhole}} [M_1, M_2].
\end{equation}

In this work, we investigate the aforementioned ideas in (2+1)-dimensional gravity. We show that the CFT dual of a one-sided black hole possesses a subalgebra of observables with erratic large-$N$ behaviors. We filter out the erratic observables by projecting onto a subspace on which they act trivially. This generalizes the notion of the large-$N$ filter to the level of quantum states. The CFT partition function with respect to the filtered quantum states is equal to an ensemble-averaged partition function with respect to all states. The ensemble average is over nonlocal symmetry transformations. For two CFTs, Eq.~\eqref{eq: F( Z Z )} is realized as Eq.~\eqref{eq: realizing large N filter}. 
\\

\textbf{Summary and outline:}

With a negative cosmological constant, the (2+1)-d pure gravity admits the BTZ black hole solutions~\cite{Banados:1992wn}. In (2+1)-d pure gravity, there are no local propagating degrees of freedom except at the asymptotic boundary. They are called boundary gravitons. In Section~\ref{sec: Intrinsic Incompleteness of One-Sided Boundary Gravitons}, we show that one-sided boundary gravitons describing a set of general BTZ black holes do not form a complete system. They must be a subsystem of a complete system. In other words, the classical observable algebra $C^\infty (\mathcal{P}_{\bdy})$ of the boundary gravitons has a nontrivial center.  From the boundary perspective, this means some CFT degrees of freedom are missing. The incompleteness of boundary gravitons is intrinsic in the sense that it is irrelevant of boundary conditions. In Section~\ref{sec: Bootstrapping Completions from Asymptotic Symmetries}, we illustrate a method to bootstrap a general Poisson bracket of Wilson lines to construct potential completions of the boundary gravitons. The Poisson bracket is parametrized by a function $r$. In Section~\ref{sec: An Effective Observable Algebra of One-Sided Black Holes}, we show that the simplest completion (i.e. $r$ being a constant) reproduces the observable algebra $C^\infty (\mathcal{P}_\odot)$ of the chiral WZNW model~\cite{Gawedzki:1990jc, Alekseev:1990vr, Alekseev:1991wq} and the observable algebra of the Chern-Simons theory defined on a punctured disc~\cite{Mertens:2025ydx}. The chiral WZNW filed hence can be identified with the Chern-Simons Wilson line connecting the puncture to the boundary of the disc. The observable algebra $C^\infty (\mathcal{P}_{\bdy})$ of boundary gravitons has a nontrivial commutant in the simplest completion $C^\infty (\mathcal{P}_\odot)$. The commutant is the algebra $C^\infty (\mathcal{P}_\bullet)$ of the puncture monodromy. In our construction, the boundary graviton observable algebra $C^\infty (\mathcal{P}_{\bdy})$ is identified with the algebra $C^\infty (\mathcal{P}_\ads^{\tx{exterior}})$ of classical observable outside the horizon. By the commutant property, the puncture monodromy may be interpreted as an effective description of the interior degrees of freedom. The commutant property will be proved in Section~\ref{sec: Gauging Nonlocal Symmetries}.

In Section~\ref{sec: Interior Phase Spaces from Gauging Asymptotic Symmetries}, we show that the asymptotic symmetry is effectively a gauge symmetry for the effective interior degrees of freedom since it acts trivially. The associated physical phase subspaces are identified as a coset space. These physical phase subspaces are classified in Section~\ref{sec: Classification of Interior Phase Spaces}. In the spirit of the black hole complementarity~\cite{Susskind:1993if,Lowe:1995ac}, the puncture monodromy admits another interpretation complementary to the infalling observer perspective. From the perspective of an observer outside the horizon, the effective interior degrees of freedom can be alternatively interpreted as horizon edge modes proposed in~\cite{Mertens:2022ujr}. In Section~\ref{sec: Positivity Restrictions Lorentzian Wormholes}, We show that only the monodromy data with a positivity restriction is needed to describe Lorentzian multi-boundary wormholes~\cite{Aminneborg:1997pz, Aminneborg:1998si, Brill:1998pr, barbot2008causal, Barbot:2005qk}. In Section~\ref{ref: Loss of Erratic Information in the Large $N$ Limit}, we show that this positivity restriction automatically extends the quantized observable algebra of the puncture monodromy such that a set of observables with erratic large-$N$ behaviors emerges.

In Section~\ref{sec: Nonlocal Symmetries of One-Sided Black Holes}, we show that the effective interior degrees of freedom possess a novel nonlocal symmetry. It is nonlocal in the sense that it does not correspond to a local current but still corresponds to a global charge. To explain the notion of the nonlocal symmetry, we reformulate the notion of ordinary Lie group symmetries in Section~\ref{sec: Reformulation of Lie Group Symmetries}. This nonlocal symmetry acts trivially on the boundary gravitons. In Section~\ref{sec: Gauging Nonlocal Symmetries}, we show that the commutant of $C^\infty (\mathcal{P}_{\bdy})$ is the observable algebra $C^\infty (\mathcal{P}_\bullet)$ of the monodromy associated with the puncture. 

In Section~\ref{sec: Emergent Wormholes from the Large-$N$ Filter}, we show that filtering out the erratic observables emergent from the positivity restrictions is equivalent to gauging the nonlocal symmetries. In Section~\ref{sec: Filtering One CFT}, we show that filtering out erratic observables of one CFT leads to a filtered Hilbert space describing the global AdS$_3$ vacuum with boundary gravitons. In Section~\ref{sec: Apparent Ensemble Averaging over Entanglers}, we show that filtering out global part of the erratic observables of the two CFTs leads to a filtered Hilbert space describing both of two disconnected AdS$_3$ vacuum and wormhole spacetimes. We show that gauging the nonlocal symmetries gives rise to an ensemble averaged partition function for CFTs. After filtering out the global erratic behaviors of two CFTs, the smooth remnant of correlations of erratic behaviors between
the two CFTs survive as the wormhole contribution to the filtered partition function. So gauging the nonlocal symmetry can be interpreted as the large-$N$ filter.

\section{Completions of Boundary Gravitons}
\label{sec: Completion of Boundary Gravitons}

\subsection{Intrinsic Incompleteness of One-Sided Boundary Gravitons}
\label{sec: Intrinsic Incompleteness of One-Sided Boundary Gravitons}

The general one-sided solution to the (2+1)-dimensional vacuum Einstein equation with cosmological constant $-1$
\begin{equation}
\label{eq: Vacuum Einstein}
    R_{\mu \nu}- \frac{1}{2}R \,g_{\mu \nu} = -g_{\mu \nu}
\end{equation}
with asymptotically AdS$_3$ boundary conditions is given by the
Ba\~{n}ados metric~\cite{Navarro-Salas:1998fgp, Banados:1998gg} 
\begin{align}
    ds^2 = r^{-2}
    dr^2 - ( r dx^+ - 4 G_{\tx{N}} r^{-1} \overline{T} dx^{-} ) ( r dx^- - 4 G_{\tx{N}} r^{-1} T dx^{+} ) .
    \label{eq: Banados metric}
\end{align}
The lightcone coordinates $x^\pm$ are periodically identified as $(x^+, x^-) \cong (x^+ +  2\pi, x^- -  2\pi )$. $T$ (resp. $\overline{T}$) is an arbitrary periodic function of $x^+$ (resp. $x^-$) with period $2\pi$. The phase space $\mathcal{P}_{\scalebox{0.5}{$\bigcirc$}}$ of boundary gravitons is then parametrized by $T$ and $\overline{T}$.

Heuristically, the boundary gravitons cannot be all degrees of freedom of a one-sided black hole spacetime. In this subsection, we rephrase and justify this statement from the perspective of the observable algebra of the boundary gravitons. Then we provide a straightforward argument based on the first order formulation of (2+1)-d gravity. This subsection is a preparation for the discussion on completions of the boundary graviton observable algebra in Section~\ref{sec: Bootstrapping Completions from Asymptotic Symmetries}.

The diffeomorphisms acting nontrivially on the asymptotic boundary and preserving the asymptotically AdS$_3$ boundary conditions are of the chirally factorized form
\begin{equation}
\label{eq: asymptotic symmetry transformation}
    x^+ \to F(x^+), \quad x^- \to \bar{F}(x^-)
\end{equation}
near the asymptotic boundary. So the asymptotic symmetry group is $\tx{Diff}(S^1) \times \tx{Diff}(S^1)$. When realized by Poisson brackets, the asymptotic symmetry algebra $\mathfrak{diff}(S^1) \times \mathfrak{diff}(S^1)$ is centrally extended~\cite{Brown:1986nw}. In terms of Fourier modes of $T$ and $\overline{T}$ in Eq.~\eqref{eq: Banados metric},
\begin{align}
    L_n &\equiv  \frac{1}{2 \pi} \int_0^{2 \pi}  e^{in x^+} T dx^+,
    \label{eq: L_n def}
    \\
    \bar{L}_n &\equiv \frac{1}{2 \pi}  \int_0^{-2 \pi}  e^{in x^-} \overline{T} dx^-,
\end{align}
the Poisson bracket is given by
\begin{align}
    \{ L_m, L_n \} &= i(m-n) L_{m+n} + \frac{ic_{\tx{BH}}}{12} m^3 \delta_{m+n,0},
    \label{eq: L, L Poisson}
    \\
    \{ \overline{L}_m, \overline{L}_n \} &= i(m-n) \overline{L}_{m+n} + \frac{ic_{\tx{BH}}}{12} m^3 \delta_{m+n,0},
    \label{eq: Lbar, Lbar Poisson}
    \\
    \{ L_m, \overline{L}_n \} &= 0, \quad \forall m, n \in \mathbb{Z},
    \label{eq: L, Lbar Poisson}
\end{align}
where $c_{\tx{BH}}$ is called the Brown-Henneaux central charge
\begin{equation}
\label{eq: BH central charge}
    c_{\tx{BH}} = \frac{3}{2G_{\tx{N}}}.
\end{equation}
The classical observable algebra $C^\infty (\mathcal{P}_{\bdy})$ of the boundary gravitons is isomorphic to two decoupled copies of the Virasoro algebra of central charge $c_{\tx{BH}}$. $T$ and $\overline{T}$ are hence identified with chiral and antichiral components of the stress tensor of a 2-dimensional conformal field theory. In particular, the Hamiltonian is given by
\begin{equation}
\label{eq: bdy graviton Hamiltonian}
    H = L_0 + \overline{L}_0
\end{equation}
as the Noether charge associated with the time translation. 

The asymptotic symmetry transformation~\eqref{eq: asymptotic symmetry transformation} acts on the chiral stress tensor $T$ via the Virasoro coadjoint action
\begin{equation}
\label{eq: global Virasoro coadjoint action}
    T (x^+) \to T (F(x^+)) (F'(x^+))^2 - \frac{c_{\tx{BH}}}{12} \tx{Sch}(F, x^+),
\end{equation}
where $'$ denotes $\partial/\partial x^+$ and $\tx{Sch}$ denotes the Schwarzian derivative
\begin{equation}
    \tx{Sch}(F, x^+) \equiv \frac{F'''}{F'} - \frac{3}{2} \left(\frac{F''}{F'}\right)^2.
\end{equation}
The antichiral stress tensor transforms similarly. Not every pair of the Ba\~nados metrics is related by a Virasoro coadjoint action, so the boundary graviton phase space $\mathcal{P}_{\bdy}$ is foliated into orbits of the Virasoro coadjoint action~\cite{Navarro-Salas:1999ejl, Nakatsu:1999wt, Compere:2015knw}. If $T$ and $\overline{T}$ are positive constants, then the Ba\~{n}ados metric \eqref{eq: Banados metric} describes a BTZ black hole of energy $(T+ \overline{T})$ and angular momentum $(T - \overline{T})$~\cite{Banados:1992wn}. Since a BTZ black hole solution is the global minimum of the Hamiltonian~\eqref{eq: bdy graviton Hamiltonian} in its Virasoro coadjoint orbit~\cite{Witten:1987ty,  Balog:1997zz}, BTZ black holes of different energies and angular momenta are in different Virasoro coadjoint orbits~\cite{Garbarz:2014kaa}. In order to describe generic BTZ black holes, the corresponding Virasoro coadjoint orbits must be included in the boundary graviton phase space $\mathcal{P}_{\bdy}$.

If an observable commutes with all the observables of a complete system, then it must be proportional to the identity. In other words, the observable algebra of a complete system has a trivial center. By a complete system, we refer to a physical system without any correlation with other systems. For a classical observable algebra, the commutator is replaced by the corresponding Poisson bracket. We will show that the boundary graviton observable algebra $C^\infty (\mathcal{P}_{\bdy})$ will have a nontrivial center if the boundary graviton phase space $\mathcal{P}_{\bdy}$ includes generic black hole solutions.

The phase space of a complete system is a manifold equipped with a nondegenerate and closed 2-form called a symplectic form. Such a manifold is called a symplectic manifold. Inverting the symplectic form gives a Poisson bracket. A manifold equipped with a Poisson bracket is called a Poisson manifold. A symplectic manifold is a Poisson manifold but not vice versa. In general, a Poisson manifold is foliated into a set of symplectic manifolds called symplectic leaves~\cite{kirillov1976local,weinstein1983local}. The algebra $C^\infty (\mathcal{P})$ of classical observables on a Poisson manifold $\mathcal{P}$ has a trivial center if and only if $\mathcal{P}$ does not decompose into symplectic leaves. The dual $V^*$ of a linear space $V$ is defined as the linear space of linear functions on $V$. Given a Lie algebra~$\mg$, the Lie bracket $[,]_\mg$ of $\mg$ induces a Poisson bracket for the dual $\mg^*$ of $\mg$ by conversely viewing elements in $\mg$ as linear functions on $\mg^*$. The Poisson bracket $\{,\}_{\mg^*}$ of $\mg^*$ is then given by
\begin{equation}
\label{eq: gdual Poisson bracket}
    \{ X, Y \}_{\mg^*} \equiv [ X, Y ]_{\mg}\,, 
\end{equation}
$\forall X, Y \in \mg \subsetneq C^\infty (\mg^*)$. Let $G$ be the Lie group corresponding to $\mg$. The coadjoint action $\Ad^*_g$ of $g \in G$ on $u \in \mg^*$ is defined by
\begin{equation}
   \langle \Ad^*_g \,u, X \rangle_\mg \equiv \langle u , g^{-1} X g \rangle_\mg, \quad \forall X \in \mg,
\end{equation}
where $\langle, \rangle_\mg$ is the canonical pairing between $\mg$ and $\mg^*$. Being equipped with the Poisson bracket~\eqref{eq: gdual Poisson bracket}, $\mg^*$ decomposes into $G$-coadjoint orbits as symplectic leaves~\cite{kirillov2025lectures}.

The full stress tensor associated with the Ba\~nados metric~\eqref{eq: Banados metric} is the sum of the chiral quadratic differential $T dx^+ \otimes dx^+$ and the antichiral quadratic differential $\overline{T} dx^- \otimes dx^-$. Since the chiral Virasoro algebra~\eqref{eq: L, L Poisson} and the antichiral Virasoro algebra~\eqref{eq: Lbar, Lbar Poisson} are decoupled, we focus on the chiral part for simplicity. The chiral stress tensor $T dx^+ \otimes dx^+$ can be viewed as an element in the dual $\mathfrak{diff}(S^1)^*$ of (the chiral part of) the asymptotic symmetry algebra $\mathfrak{diff}(S^1)$. The canonical pairing $\langle, \rangle_{\mathfrak{diff}(S^1)}$ is given by Eq.~\eqref{eq: L_n def}, i.e. for $e^{inx^+} \partial_{x^+} \in \mathfrak{diff}(S^1)$,
\begin{equation}
    \langle T dx^+ \otimes dx^+, e^{inx^+} \partial_{x^+} \rangle_{\mathfrak{diff}(S^1)} \equiv L_n.
\end{equation}
The chiral stress tensor $T$ can also be viewed as an element $(T dx^+ \otimes dx^+, c_{\tx{BH}})$ in the dual $\widehat{\mathfrak{diff}}(S^1)^*$ of the centrally extended chiral asymptotic symmetry algebra $\widehat{\mathfrak{diff}}(S^1)$ which is the Virasoro algebra, i.e.
\begin{equation}
    \langle (Tdx^+ \otimes dx^+, c_{\tx{BH}}), (e^{inx^+} \partial_{x^+}, t) \rangle_{\widehat{\mathfrak{diff}}(S^1)} = L_n + t c_{\tx{BH}},
\end{equation}
where $t \in \widehat{\mathfrak{diff}}(S^1)$ is a central element. Being equipped with the Poisson bracket~\eqref{eq: L, L Poisson}, \eqref{eq: Lbar, Lbar Poisson}, and \eqref{eq: L, Lbar Poisson}, the space of the Ba\~nados metrics~\eqref{eq: Banados metric} is then foliated into Virasoro coadjoint orbits as symplectic leaves. Recall that the boundary graviton phase space $\mathcal{P}_{\bdy}$ need to include Virasoro coadjoint orbits to describe the corresponding generic BTZ black holes, so the boundary graviton observable algebra $C^\infty (\mathcal{P}_{\bdy})$ has a nontrivial center. Thus, the one-sided boundary gravitons can only be a subsystem of a complete system.

The above proof relies on the asymptotically AdS$_3$ boundary conditions, but the incompleteness of the boundary gravitons is actually intrinsic in the sense that it is true regardless of boundary conditions. To see this, we use the first order formulation of 3d gravity~\cite{Achucarro:1986uwr, Witten:1988hc} in which the Einstein equation~\eqref{eq: Vacuum Einstein} is written as the flatness conditions of $\PSL \times \PSL$ gauge potential $(A , \overline{A})$,
\begin{align}
    dA +A^2 &= 0,
    \label{eq: flatness 1}
    \\
    d\overline{A} + \overline{A}^2 &= 0.
    \label{eq: flatness 2}
\end{align}
In terms of the gauge potential $(A, \overline{A})$, the asymptotically AdS$_3$ boundary condition is given by~\cite{Coussaert:1995zp}
\begin{align}
    A &= -\left(
    \begin{array}{cc}
        \frac{dr }{2r}  & O(\frac{1}{r}) dx^+ \\
        r dx^+ & -\frac{dr }{2r} 
    \end{array}
    \right),
    \label{eq: AdS boundary condition 1}
    \\
    \overline{A} &= -\left(
    \begin{array}{cc}
        -\frac{dr }{2r}  & r dx^- \\
         O(\frac{1}{r}) dx^- & \frac{dr }{2r} 
    \end{array}
    \right),
    \label{eq: AdS boundary condition 2}
\end{align}
where $O(1/r)$ means that it is a term of order $1/r$. The Ba\~nados metric~\eqref{eq: Banados metric} is encoded in the flat gauge potential via~\cite{Banados:1998gg}
\begin{align}
    A &= -\left(
    \begin{array}{cc}
        \frac{dr }{2r}  &\frac{4 G_{\tx{N}}}{r} T dx^+ \\
        r dx^+ & -\frac{dr }{2r} 
    \end{array}
    \right),
    \label{eq: Banados potential}
    \\
    \label{eq: Banados potential bar}
    \overline{A} &= -\left(
    \begin{array}{cc}
        -\frac{dr }{2r}  & r dx^- \\
         \frac{4 G_{\tx{N}}}{r} \overline{T} dx^- & \frac{dr }{2r} 
    \end{array}
    \right).
\end{align}
For a BTZ black hole of energy $(T + \overline{T})$ and angular momentum $(T - \overline{T})$, the Wilson loops winding around the asymptotic boundary once are given by
\begin{equation}
\label{eq: BTZ wilson loop}
    \tr \, M = \cosh \left(4\pi  \sqrt{G_{\tx{N}} T } \right), \quad \tr\, \overline{M} = \cosh \left(4\pi  \sqrt{G_{\tx{N}} \overline{T} } \right) ,
\end{equation}
where the trace $\tr$ is normalized as
\begin{equation}
\label{eq: trace normalization}
    \tr \begin{pmatrix}
        1 & 0 
        \\
        0 & 1
    \end{pmatrix} = 1.
\end{equation}
In a one-sided spacetime, the presence of the nontrivial Wilson loops~\eqref{eq: BTZ wilson loop} implies that the flatness conditions \eqref{eq: flatness 1} and \eqref{eq: flatness 2} cannot hold everywhere. Equivalently, the vacuum Einstein equation~\eqref{eq: Vacuum Einstein} cannot hold everywhere. So the boundary gravitons can only be a subsystem of the one-sided black hole spacetime.

We can make this argument clearer by utilizing the fact that the gauge transformation of the gauge potential $(A, \overline{A})$ is the $\PSL \times \PSL$ Kac-Moody coadjoint action. See e.g. \cite{pressley1987loop} for a review of the Kac-Moody coadjoint action. The Poisson bracket of the gauge potentials gives rise to the $\PSL \times \PSL$ Kac-Moody algebra, so the symplectic leaves of the boundary graviton phase space $\mathcal{P}_{\bdy}$ are $\PSL \times \PSL$ Kac-Moody coadjoint orbits and hence are orbits of the large gauge transformations. The Wilson loops~\eqref{eq: BTZ wilson loop} are gauge invariant, so they are constant on each Kac-Moody coadjoint orbit. Different BTZ black hole solutions hence lie in different $\PSL \times \PSL$ Kac-Moody coadjoint orbits. A coordinate transformation of the metric corresponds to a gauge transformation of the gauge potential in the first order formulation. After imposing some boundary conditions, less large gauge transformations are allowed. This leads to a finer foliation within each Kac-Moody coadjoint orbits. But different BTZ black hole solutions still belong to different leaves. To describe generic black holes, the one-sided boundary graviton observable algebra hence must have a nontrivial center, regardless of what boundary conditions one imposes. As we will see in Section~\ref{sec: Gauging Nonlocal Symmetries}, the center is in fact generated by the Wilson loops. Thus, the incompleteness of the one-sided boundary gravitons is an intrinsic feature independent of boundary conditions.

Although the vacuum Einstein equation~\eqref{eq: Vacuum Einstein} cannot hold everywhere for one-sided black holes, one can at least impose the vacuum Einstein equation~\eqref{eq: Vacuum Einstein} outside the horizon. In this case, boundary gravitons are the only degrees of freedom outside the horizon since there is no bulk graviton. In other words, the boundary graviton phase space $\mathcal{P}_{\bdy}$ is identified with the phase space $\mathcal{P}_\ads^{\tx{exterior}}$ of degrees of freedom outside the horizon, i.e.
\begin{equation}
    \mathcal{P}_\ads^{\tx{exterior}} = \mathcal{P}_{\bdy}.
\end{equation}
The boundary graviton observable algebra $C^\infty (\mathcal{P}_{\bdy})$ is hence the algebra $C^\infty(\mathcal{P}_\ads^{\tx{exterior}})$ of observables outside the horizon
\begin{equation}
\label{eq: identifying exterior algebra}
    C^\infty(\mathcal{P}_\ads^{\tx{exterior}}) = C^\infty (\mathcal{P}_{\bdy}).
\end{equation}

\subsection{Bootstrapping Completions from Asymptotic Symmetries}
\label{sec: Bootstrapping Completions from Asymptotic Symmetries}

In Section~\ref{sec: Intrinsic Incompleteness of One-Sided Boundary Gravitons}, we showed that the boundary graviton observable algebra $C^\infty (\mathcal{P}_{\bdy})$ needs to have a nontrivial center to describe generic BTZ black holes. In other words, the one-sided boundary gravitons needs to be a subsystem. If one imposes the vacuum Einstein equation~\eqref{eq: Vacuum Einstein} outside the horizon, then $C^\infty (\mathcal{P}_{\bdy})$ is identified with the algebra $C^\infty(\mathcal{P}_\ads^{\tx{exterior}})$ of observable outside the horizon. In this subsection, under these conditions, we show that the asymptotic symmetry strongly constrains the observable algebra such that we can bootstrap a general Poisson bracket~\eqref{eq: universal WW Poisson} of Wilson lines. We will show that this Poisson bracket leads to a complete observable algebra (i.e. an algebra without center) of one-sided black holes in Section~\ref{sec: An Effective Observable Algebra of One-Sided Black Holes}.

The flatness condition~\eqref{eq: flatness 1} of the gauge potential is the equation of motion of the Chern-Simons theory~\cite{Witten:1988hf, Elitzur:1989nr}. The Chern-Simons theory is a gauge theory defined by the Lagrangian
\begin{equation}
\label{eq: chern-simons lagrangian}
    L_{\tx{CS}} = \frac{k}{4 \pi} \tr (A dA + \frac{2}{3} A^3).
\end{equation}
The gauge potential $A$ is a $\mg$-valued 1-form with $\mg$ the Lie algebra of the gauge group $G$. The coupling constant $k$ is called level. For the $\PSL \times \PSL$ Chern-Simons theory describing 3d gravity, the level $k$ is related to the Newton constant $G_{\tx{N}}$ by
\begin{equation}
\label{eq: k = 1/4G_N}
    k = \frac{1}{4 G_{\tx{N}}}.
\end{equation} 
Suppose the theory is defined on $\Sigma \times \mathbb{R}$ where $\Sigma$ is a 2-dimensional manifold with a given asymptotic boundary. For the infinitesimal gauge transformation
\begin{equation}
\label{eq: gauge transformation}
    \delta_\phi A=d\phi - [A ,\phi]
\end{equation}
nontrivially acting on the given asymptotic boundary, the corresponding Noether charge is given by~\cite{Banados:1994tn}
\begin{equation}
\label{eq: KAc-Moody charge}
    Q[\phi] = \frac{k}{2\pi} \oint_{\bdy } \tr ( A \phi ).
\end{equation}
$\bigcirc$ is the given asymptotic boundary of Cauchy slice on which the large gauge transformation acts. In terms of Poisson bracket, a Noether charge is defined via~\footnote{After quantization, we have the Ward identity \begin{equation*}
    \delta_\phi \mathcal{O} = i[ Q[\phi], \mathcal{O} ].
\end{equation*}}
\begin{equation}
\label{eq: general Noether charge}
    \delta_\phi \mathcal{O} = \{ Q[\phi], \mathcal{O} \}
\end{equation}
with $\mathcal{O}$ an arbitrary function on the phase space and $\phi$ parameterizing the corresponding symmetry transformation.

In general, the asymptotic symmetry is constrained to preserve asymptotic boundary conditions. With different boundary conditions, the asymptotic asymmetry algebra reduces to different algebras. For example, it will reduce to the Virasoro algebra~\eqref{eq: L, L Poisson} if the asymptotically AdS$_3$ boundary conditions~\eqref{eq: AdS boundary condition 1} and \eqref{eq: AdS boundary condition 2} are imposed. To see the full structure of the asymptotic symmetry algebra and to bootstrap a general Poisson bracket, we do not impose any asymptotic boundary condition in this subsection. One can impose boundary conditions after the Poisson bracket is derived, but the boundary conditions on boundary gravitons does not affect the commutant of the observable algebra of boundary gravitons.

Denote by $W|_{x,y}$ a Wilson line connecting a point $y \notin \bigcirc$ to a point $x \in \bigcirc$,
\begin{equation}
\label{eq: W def}
    W|_{x,y} \equiv \overleftarrow{P} \exp (  \int_y^x A )
\end{equation}
with $\overleftarrow{P}$ the path ordering. Under the infinitesimal gauge transformation \eqref{eq: gauge transformation}, its variation is
\begin{equation}
\label{eq: asymptotic symmetry}
    \delta_\phi W|_{x,y} = \phi|_x W|_{x,y} .
\end{equation}
$\phi|_x$ denotes the value of $\phi$ at $x$. Setting $\mathcal{O}=W|_{x,y}$ in Eq.~\eqref{eq: general Noether charge}, then Eq.~\eqref{eq: KAc-Moody charge} implies
\begin{equation}
\label{eq: A phi, W Poisson}
    \{ \oint_{\bdy } \tr ( A \phi ), W|_{x,y} \} = \frac{2\pi}{k} \phi|_x W|_{x,y}.
\end{equation}
$\phi$ is arbitrary, so Eq.~\eqref{eq: A phi, W Poisson} is equivalent to
\begin{equation}
\label{eq: A tensor 1 , 1 tensor W Poisson}
    \{ A|_{x_1} \otimes 1,  1 \otimes W|_{x_2,y} \} = \frac{2\pi}{k} \sum_{a=1}^{\tx{dim}\,\mg} t_a \otimes (t^a W|_{x_2,y})\, \delta (x_1 - x_2) dx_1
\end{equation}
for $x_1 \in \bigcirc$ and $|x_1-x_2|<2 \pi$. $\{t_a\}$ is a basis of $\mg$. $\{t^a\}$ is the basis dual to $\{t_a\}$, i.e.
\begin{equation}
\label{eq: dual basis}
    \tr ( t_a t^b ) = \delta^b_a.
\end{equation}
$A|_x$ is the value of $A$ at $x$. We introduce a convenient notation for elements of tensor products of the universal enveloping algebra $U(\mg)$ of $\mg$,
\begin{equation}
\label{eq: very convenient notation}
    \underset{1}{\mathcal{O}} \equiv \mathcal{O} \otimes 1, \quad \underset{2}{\mathcal{O}} \equiv 1 \otimes \mathcal{O},\quad \forall \mathcal{O} \in U(\mg).
\end{equation}
Eq.~\eqref{eq: A tensor 1 , 1 tensor W Poisson} is simplified with this notation,
\begin{equation}
\label{eq: A , W Poisson, better notation}
    \{ \underset{1}{A}|_{x_1} , \underset{2}{W}|_{x_2, y} \} = \frac{2\pi}{k} \sum_{a=1}^{\tx{dim}\,\mg} \underset{1}{t_a} \underset{2}{t^a} \underset{2}{W}|_{x,y} \delta (x_{12}) dx_1 
\end{equation}
with
\begin{equation}
\label{eq: x_12}
    x_{12} \equiv x_1 - x_2.
\end{equation}
Eq.~\eqref{eq: W def} implies
\begin{equation}
\label{eq: A = dWW-1}
    A|_x = d W|_{x,y} W^{-1}|_{x,y},
\end{equation}
where the exterior derivatives $d$ is restricted to the given asymptotic boundary. So Eq.~\eqref{eq: A , W Poisson, better notation} is equivalent to
\begin{equation}
\label{eq: d W W-1, W Poisson}
    \{ d \underset{1}{ W }|_{x_1, y_1} \underset{1}{W}^{-1}|_{x_1, y_1} , \underset{2}{ W}|_{x_2, y_2} \} = \frac{2 \pi}{k} \underset{12}{K} \underset{2}{W}|_{x_2, y_2} \delta (x_{12}) dx_1
\end{equation}
with the tensor quadratic Casimir $K$ defined as
\begin{equation}
\label{eq: K def}
    K \equiv \sum_{a=1}^{\tx{dim}\,\mg} t_a \otimes t^a.
\end{equation}
Using the Leibniz rule for the Poisson bracket 
\begin{equation}
\label{eq: Leibniz rule for the Poisson bracket}
    \{ \mathcal{O}_1 \mathcal{O}_2, \mathcal{O}_3 \} = \mathcal{O}_1\{ \mathcal{O}_2, \mathcal{O}_3  \} + \{ \mathcal{O}_1, \mathcal{O}_3 \} \mathcal{O}_2
\end{equation}
and the Leibniz rule for the exterior derivatives
\begin{equation}
    0 = d(WW^{-1}) = d W W^{-1} + W dW^{-1},
\end{equation}
it is straightforward to verify that Poisson bracket \eqref{eq: d W W-1, W Poisson} is equivalent to
\begin{equation}
\label{eq: ODE}
    d_{x_1} \left( \underset{1}{W}^{-1}|_{x_1, y_1} \underset{2}{W}^{-1}|_{x_2, y_2} \{ \underset{1}{W}|_{x_1, y_1}, \underset{2}{W}|_{x_2, y_2} \} \right) = \frac{\pi}{k} d_{x_1}  \left( \underset{1}{W}|_{y_1, y_2} \underset{12}{K} \underset{1}{W}|_{y_2, y_1}  \tx{sgn}(x_{12}) \right) 
\end{equation}
where 
\begin{equation}
\label{eq: sgn def}
    \tx{sgn}(x) \equiv \frac{x}{|x|}.
\end{equation}
The subscript ``$_{x_1}$'' of the exterior derivatives $d_{x_1}$ indicates that it only acts on $x_1$. Note that both sides of Eq.~\eqref{eq: ODE} are total derivatives (i.e. exact forms) of the form $d_{x_1} (...)$, so the Wilson line bracket is determined up to a $\mg \otimes \mg$-valued function $r$ independent of $x_1$, i.e.
\begin{equation}
\label{eq: universal WW Poisson}
    \{ \underset{1}{W}|_{x_1, y_1}, \underset{2}{W}|_{x_2, y_2} \} = \frac{2 \pi}{k} \underset{1}{W}|_{x_1, y_1} \underset{2}{W}|_{x_2, y_2} \left( \underset{12}{r} + \frac{1}{2} \underset{1}{W}|_{y_1, y_2} \underset{12}{K} \underset{1}{W}|_{y_2, y_1} \tx{sgn}(x_{12})  \right).
\end{equation}
Poisson bracket is antisymmetric, so $r$ is also independent of $x_2$. In other words, $r$ is independent of local data of the gauge potential on the given asymptotic boundary $\bigcirc$. 

\begin{figure}
    \centering
    \includegraphics[width=0.25\linewidth]{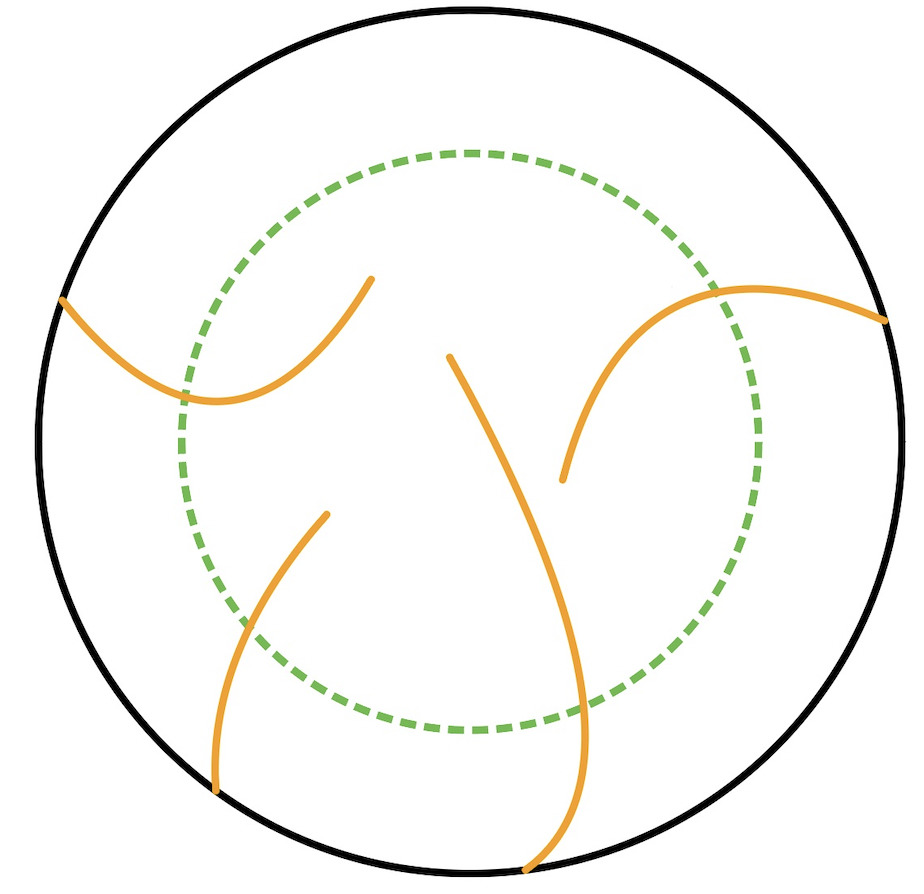}
    \caption{The outer solid circle represents the asymptotic boundary where Eq.~\eqref{eq: A phi, W Poisson} holds. The orange lines are Wilson lines connecting to the asymptotic boundary. We do not specify the topology of the region inside the green dashed circle. All the information inside the green dashed circle is encoded in the $r$ function in Eq.~\eqref{eq: universal WW Poisson}.}
    \label{fig: general WW Poisson}
\end{figure}

The topology of the Cauchy slice is not specified in the above derivation. See Figure~\ref{fig: general WW Poisson} for a illustration. So the Poisson bracket~\eqref{eq: universal WW Poisson} holds for any Cauchy slice topology with $x_1, x_2 \in \bigcirc$ and $y_1, y_2 \notin \bigcirc$. Different choices of $r$ in Eq.~\eqref{eq: universal WW Poisson} describe different systems containing the given asymptotic boundary. In Appendix~\ref{app: Two-Sided Wilson Line Poisson Bracket}, we will illustrate the choice of $r$ in Eq.~\eqref{eq: universal WW Poisson} that gives rise to the Poisson bracket of Chern-Simons theory defined on the cylinder Cauchy slice with two asymptotic boundaries as an example of a completion of the boundary graviton observable algebra $C^\infty (\mathcal{P}_{\bdy})$.

\subsection{An Effective Observable Algebra of One-Sided Black Holes}
\label{sec: An Effective Observable Algebra of One-Sided Black Holes}

As discussed in Section~\ref{sec: Intrinsic Incompleteness of One-Sided Boundary Gravitons}, the vacuum Einstein equation~\eqref{eq: Vacuum Einstein} must break down at least somewhere behind the horizon of one-sided black holes. In other words, there is matter confined in the black hole interior. 3d gravity coupled with matter in general is not equivalent to a topological theory, so the Chern-Simons formulation may not be applicable. However, one may still be able to derive an effective Chern-Simons description by integrating out the matter sector~\cite{Castro:2023bvo, Bourne:2024ded, Bourne:2025azc}. In this subsection, we assume that the one-sided black holes including the effective interior degrees of freedom are described by Wilson lines in $\PSL \times \PSL$ Chern-Simons theory. We apply the general Poisson bracket~\eqref{eq: universal WW Poisson} to construct a complete observable algebra containing the boundary graviton observable algebra $\mathcal{P}_{\bdy}$. This complete observable algebra will be interpreted as an effective observable algebra of one-sided black holes.

As shown in Eq.~\eqref{eq: BTZ wilson loop}, the energy $ (T + \overline{T})$ and angular momentum $ (T - \overline{T})$ of a BTZ black hole are encoded in Wilson loops around the black hole horizon. For the phase space of black holes to be complete, the Wilson loops need to be allowed vary freely. In the Chern-Simons theory, the simplest Cauchy slice topology that allows the Wilson loops to freely vary is a disc with a bulk puncture. We will focus on the chiral part $A$ to avoid redundant discussion for the anti-chiral part $\overline{A}$. Coordinates $y_{1}$ and $y_2$ in Eq.~\eqref{eq: universal WW Poisson} are not on the given asymptotic boundary, so they can only be at the puncture to ensure $W_{x_i, y_i}$ is a physical operator. Then $W_{y_1, y_2} = 1$ and Eq.~\eqref{eq: universal WW Poisson} reduces to
\begin{equation}
\label{eq: simplest WW Poisson}
    \{ \underset{1}{W}|_{x_1}, \underset{2}{W}|_{x_2} \} = \frac{2 \pi}{k} \underset{1}{W}|_{x_1} \underset{2}{W}|_{x_2} \left( \underset{12}{r} + \frac{1}{2} \underset{12}{K}  \tx{sgn}(x_{12})  \right).
\end{equation}
$W|_x$ is the Wilson line from the puncture to a point $x$ on the boundary. See Figure~\ref{fig: punctured disc} for a illustration. 

\begin{figure}
    \centering
    \includegraphics[width=0.25\linewidth]{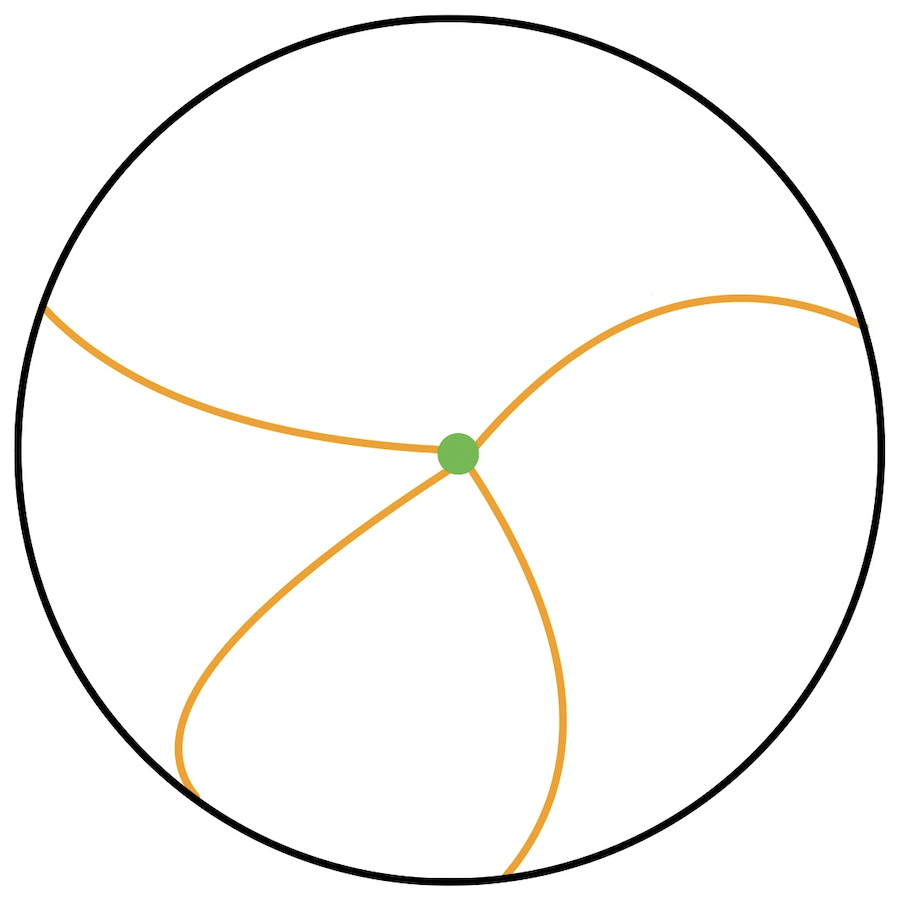}
    \caption{The outer black circle represents the asymptotic boundary. The green dot represents the puncture as an effective bulk illustration of the nonlocal boundary degrees of freedom holographically dual to the black hole. The orange lines represent Wilson lines.}
    \label{fig: punctured disc}
\end{figure}

The simplest choice of $r$ for Eq.~\eqref{eq: simplest WW Poisson} is a constant function antisymmetric with respect to the permutation $1 \leftrightarrow 2$. The Jacobi identity of Poisson bracket \eqref{eq: simplest WW Poisson} is equivalent to the modified classical Yang-Baxter equation (MCYBE)
\begin{equation}
\label{eq: MCYBE}
     [\underset{12}{r}, \underset{23}{r}] + [\underset{23}{r}, \underset{31}{r}] + [\underset{31}{r}, \underset{12}{r}] = -\frac{1}{4} \underset{123}{f},
\end{equation}
with $f \in \mg \otimes \mg \otimes \mg$ defined by
\begin{equation}
    \underset{123}{f} \equiv [ \underset{13}{K}, \underset{23}{K} ].
\end{equation}
We used the notation
\begin{equation}
    \underset{1}{\mathcal{O}} \equiv \mathcal{O} \otimes 1 \otimes 1, \quad \underset{2}{\mathcal{O}} \equiv 1 \otimes \mathcal{O} \otimes 1,\quad \underset{3}{\mathcal{O}} \equiv 1 \otimes 1 \otimes \mathcal{O}.
\end{equation}
Solutions to the MCYBE \eqref{eq: MCYBE} for simple Lie algebras were systematically classified in \cite{belavin1982solutions}. See e.g. \cite{chari1995guide} for a review. 

Eq.~\eqref{eq: simplest WW Poisson} with constant $r$-matrix is the simplest possible form of Poisson bracket of Wilson lines in Chern-Simons theory defined on a punctured disc. It was derived by symmetry bootstrap in chiral WZNW model on a cylinder \cite{Balog:1999pj} and in Chern-Simons theory on a punctured disc \cite{Mertens:2025ydx}. We generalized this bootstrap method to arbitrary topologies in Section \ref{sec: Bootstrapping Completions from Asymptotic Symmetries} and reproduce the special case of punctured disc in this section. The presence of MCYBE \eqref{eq: MCYBE} by the Poisson bracket bootstrap shows that the Yang-Baxter type structure is inevitable in Chern-Simons theory and 3d gravity.

The sign of the coefficient on the right hand side of MCYBE \eqref{eq: MCYBE} significantly affects the solutions. Eq.~\eqref{eq: MCYBE} has a negative coefficient and such an MCYBE is called the split type. It has no solution for compact groups \cite{cahen1994some}. Relevant to AdS$_3$ gravity is the Lie algebra $\mathfrak{sl}(2, \mathbb{R})$, for which there is a unique solution to Eq.~\eqref{eq: MCYBE} up to an isomorphism \cite{belavin1982solutions, chari1995guide}, i.e.

\begin{equation}
\label{eq: sl2 r-matrix}
    r = \mathrm{E} \otimes \mathrm{F} - \mathrm{F} \otimes \mathrm{E}
\end{equation}
with $\mathfrak{sl}(2, \mathbb{R})$ generators
\begin{equation}
\label{eq: sl2 generators}
\mathrm{H} = \left(\begin{array}{cc} 1 & 0 \\ 0  & -1\end{array}\right), \quad 
\mathrm{E} = \left(\begin{array}{cc} 0 & 1 \\ 0  & 0\end{array}\right), \quad  \mathrm{F} = \left(\begin{array}{cc} 0 & 0 \\ 1  & 0\end{array}\right),
\end{equation}
and tensor quadratic Casimir $K \in \mathfrak{sl}(2,\mathbb{R}) \otimes \mathfrak{sl}(2,\mathbb{R})$
\begin{equation}
\label{eq: sl2 K}
    K =  \mathrm{H} \otimes \mathrm{H} + 2 \mathrm{E} \otimes \mathrm{F} + 2 \mathrm{F} \otimes \mathrm{E}.
\end{equation}

Poisson bracket \eqref{eq: simplest WW Poisson} was originally found as Poisson bracket of chiral WZNW fields $W$ and its corresponding symplectic form is given by \cite{Gawedzki:1990jc, Alekseev:1990vr, Alekseev:1991wq}
\begin{equation}
\label{eq: Omega odot}
    \Omega_\odot = \frac{k}{4 \pi} \left(  \tr ( W_0^{-1} \delta W_0 \delta m m^{-1})  + \oint_{\scalebox{0.5}{$\bigcirc$}} \tr ( \delta W  W^{-1} \delta (dW W^{-1}))  - \tr ( \delta m_- m_-^{-1} \delta m_+ m^{-1}_+ ) \right).  
\end{equation}
$\delta$ and $d$ are exterior derivatives of the phase space and the underlying spacetime respectively. We choose the convention $d \delta = \delta d$. $W_0$ is the Wilson line  connecting the puncture to the base point $x^+ = 0$ on the boundary circle $\bigcirc$. $x^+ = 0$ is the starting point of the contour integral along the boundary circle $\bigcirc$ \footnote{Choosing a base point for the contour integral matters since $W$ is not a single-value function on $\bigcirc$.}. The puncture monodromy $m$ is defined as
\begin{equation}
\label{eq: m def}
    m \equiv W^{-1}|_{0 } W|_{2 \pi}.
\end{equation}
See Figure~\ref{fig: puncture monodromy} for a illustration.
\begin{figure}
    \centering
    \includegraphics[width=0.25\linewidth]{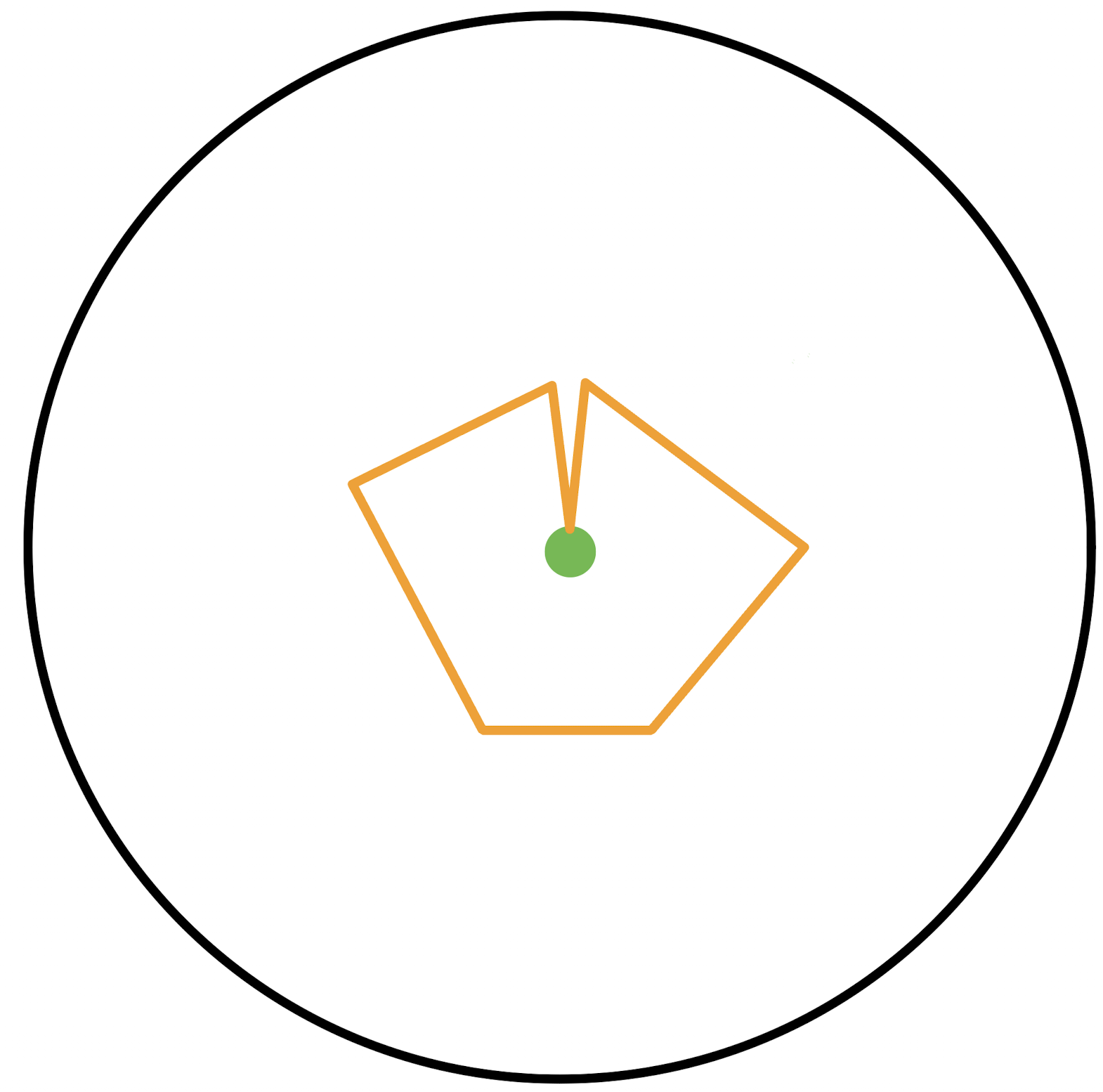}
    \caption{The outer black circle represents the asymptotic boundary. The green dot represents the puncture. The orange polygon represents a Wilson line starting and ending at the puncture and winding around it once. Such a Wilson line is defined as the puncture monodromy~\eqref{eq: m def}.}
    \label{fig: puncture monodromy}
\end{figure}
If we parametrize $m \in \tx{PSL}(2, \mathbb{R})$ by
\begin{equation}
\label{eq: m_e,f,k parameterization}
    m=\left(
\begin{array}{cc}
 -m_{\mathrm{K}}^2 &   -m_{\mathrm{K}} m_{\mathrm{f}} \\
 m_{\mathrm{e}} m_{\mathrm{K}}  &   m_{\mathrm{e}} m_{\mathrm{f}} -m_{\mathrm{K}}^{-2} \\
\end{array}
\right)
\end{equation}
with $m_{\mathrm{K}}^2,m_{\mathrm{e}}, m_{\mathrm{f}} \in \mathbb{R}$, then $m_\pm$ in the last term of \eqref{eq: Omega odot} are defined by
\begin{align}
    m_- &\equiv -\left(
\begin{array}{cc}
 m_{\mathrm{K}}^{-1} & 0 \\
 m_{\mathrm{e}} & m_{\mathrm{K}} \\
\end{array}
\right),
\label{eq: m-}
\\
m_+ &\equiv \left(
\begin{array}{cc}
 m_{\mathrm{K}}  & m_{\mathrm{f}} \\
 0 & m_{\mathrm{K}}^{-1} \\
\end{array}
\right),
\label{eq: m+}
\end{align}
such that
\begin{equation}
\label{eq: Borel decomposition}
    m = m_-^{-1} m_+.
\end{equation}
We call Eq.~\eqref{eq: Borel decomposition} the Semenov-Tian-Shansky decomposition of $m$ or STS decomposition for short~\cite{Semenov-Tian-Shansky:1985mgd}. With the symplectic form \eqref{eq: Omega odot} (and the anti-chiral counterpart) and imposing the asymptotically AdS$_3$ boundary conditions~\eqref{eq: AdS boundary condition 1} and \eqref{eq: AdS boundary condition 2}, the Wilson lines connecting the puncture to the asymptotic boundary form a complete phase space $\mathcal{P}_\odot$ of the one-sided black hole.

The boundary gauge potential $A$ is the charge generating the large gauge transformation~\eqref{eq: gauge transformation} on the asymptotic boundary, so
\begin{equation}
\label{eq: locality}
    \{ A, m \} = 0.
\end{equation}
This is a manifestation of the locality and can also be verified by a straightforward calculation using Eqs.~\eqref{eq: A = dWW-1}, \eqref{eq: Leibniz rule for the Poisson bracket}, \eqref{eq: simplest WW Poisson}, and \eqref{eq: m def}. We have identified the boundary graviton observable algebra $C^\infty (\mathcal{P}_{\bdy})$ with the algebra $C^\infty (\mathcal{P}_\ads^{\tx{exterior}})$ of observables outside the horizon in Eq.~\eqref{eq: identifying exterior algebra}. From the perspective of an infalling observer in a one-sided black hole spacetime, the commutant of $C^\infty (\mathcal{P}_\ads^{\tx{exterior}})$ is identified with the algebra $C^\infty (\mathcal{P}_\ads^{\tx{interior}})$ of observables behind the horizon. Eq.~\eqref{eq: locality} implies that $C^\infty (\mathcal{P}_\ads^{\tx{interior}})$ contains the Poisson algebra $C^\infty (\mathcal{P}_\bullet)$ generated by the puncture monodromy observables. In Section~\ref{sec: Gauging Nonlocal Symmetries}, we will conversely show that the monodromy observable algebra~$C^\infty (\mathcal{P}_\bullet)$ is actually the commutant of the boundary graviton algebra~$C^\infty (\mathcal{P}_{\bdy})$ and hence is identified with the interior algebra~$C^\infty (\mathcal{P}_\ads^{\tx{interior}})$, i.e.
\begin{equation}
\label{eq: interior algebra = monodromy algebra}
    C^\infty (\mathcal{P}_\ads^{\tx{interior}}) = C^\infty (\mathcal{P}_\bullet).
\end{equation}
Notice that Eq.~\eqref{eq: interior algebra = monodromy algebra} is a sensible interpretation only for an infalling observer who can access the degrees of freedom behind the horizon. We will provide a complementary interpretation in Section~\ref{sec: Positivity Restrictions Lorentzian Wormholes}.

The Wilson line $W$ in Poisson bracket~\eqref{eq: simplest WW Poisson} can be identified as the chiral WZNW filed $W$ in the symplectic form~\eqref{eq: Omega odot}. From the boundary observer perspective, the boundary gauge potential~\eqref{eq: A = dWW-1} can be constructed from local data, while the puncture monodromy~\eqref{eq: m def} is only accessible to nonlocal measurement, i.e. a boundary observer needs to travel around the universe once to collect the necessary data to measure the puncture monodromy. More precisely, Eqs.~\eqref{eq: A phi, W Poisson} and \eqref{eq: A = dWW-1} imply 
\begin{equation}
    \{ A|_{x_1}, A|_{x_2} \} = 0, \quad x_1 \ne x_2.
\end{equation}
So the boundary gauge potential are local degrees of freedom. Notice that the WZNW field $W$ are not local degrees of freedom since Eq.~\eqref{eq: simplest WW Poisson} implies
\begin{equation}
    \{ W|_{x_1}, W|_{x_2} \} \ne 0, \quad x_1 \ne x_2.
\end{equation}
Eq.~\eqref{eq: A = dWW-1} implies
\begin{equation}
    \overleftarrow{P} \exp (  \int_y^x A ) = W|_x W^{-1}|_y.
\end{equation}
So the puncture monodromy~\eqref{eq: m def} cannot be constructed from the boundary gauge potential~\eqref{eq: A = dWW-1}.
It leads us to identify the observable algebra $C^\infty (\mathcal{P}_{\bdy})$ of the boundary gauge potential~\eqref{eq: A = dWW-1} with the algebra $C^\infty (\mathcal{P}_\cft^{\tx{local}})$ of local classical observables in the boundary theory, i.e.
\begin{equation}
    C^\infty (\mathcal{P}_\cft^{\tx{local}}) = C^\infty (\mathcal{P}_{\bdy}), 
\end{equation}
Correspondingly, the observable algebra $C^\infty (\mathcal{P}_\bullet)$ of the puncture monodromy~\eqref{eq: m def} is hence identified with the algebra~$C^\infty (\mathcal{P}_\cft^{\tx{nonlocal}})$ of nonlocal classical
observables in the boundary theory, i.e.
\begin{equation}
\label{eq: identifying classical nonlocal algebra}
    C^\infty (\mathcal{P}_\cft^{\tx{nonlocal}}) = C^\infty (\mathcal{P}_\bullet).
\end{equation}
The puncture then can be interpreted as an effective bulk illustration of the nonlocal boundary degrees of freedom holographically dual to the black hole interior.

\section{Effective Phase Space of Black Hole Interior}

\subsection{Interior Phase Spaces from Gauging Asymptotic Symmetries}
\label{sec: Interior Phase Spaces from Gauging Asymptotic Symmetries}

In this subsection, we identify an effective phase spaces associated with the black hole interior by noting that the asymptotic symmetries are gauge symmetries of the effective interior degrees of freedom. We focus on the chiral sector to avoid redundant discussions on the anti-chiral sector.

As explained at the end of Section \ref{sec: An Effective Observable Algebra of One-Sided Black Holes}, the puncture monodromy $m$ should be interpreted as degrees of freedom associated to one-sided BTZ black hole interior. The space $\mathcal{P}_\bullet$ of puncture monodromy $m$ inherits a Poisson bracket from the Poisson bracket~\eqref{eq: simplest WW Poisson} on $\mathcal{P}_\odot$ via the puncture reduction map
\begin{align}
\label{eq: puncture reduction map}
    \mathcal{F}_{\bullet} : \mathcal{P}_\odot &\to \mathcal{P}_\bullet
    \\
    W &\to m = W^{-1}|_0 W|_{2 \pi}.
    \label{eq: W to m}
\end{align}
$\mathcal{P}_\odot$ is the phase space of Wilson lines connecting the puncture to the asymptotic boundary. Using Eqs.~\eqref{eq: Leibniz rule for the Poisson bracket}, \eqref{eq: simplest WW Poisson}, and \eqref{eq: W to m}, we have
\begin{equation}
\label{eq: STS Poisson}
    \{ \underset{1}{m}, \underset{2}{m} \} = \frac{2 \pi}{k} (- \underset{1}{m} \underset{12}{r_+} \underset{2}{m} + \underset{12}{r_-} \underset{1}{m} \underset{2}{m} + \underset{2}{m} \underset{1}{m} \underset{12}{r_+} - \underset{2}{m} \underset{12}{r_-} \underset{1}{m} ),
\end{equation}
with
\begin{equation}
\label{eq: r_pm def}
    r_\pm \equiv r \pm \frac{K}{2}. 
\end{equation}
$r$ is the $r$-matrix in Eq.~\eqref{eq: simplest WW Poisson} with the explicit form \eqref{eq: sl2 r-matrix}. $K$ is the tensor quadratic Casimir defined in Eq.~\eqref{eq: K def} with the explicit form \eqref{eq: sl2 K}. Poisson Bracket \eqref{eq: STS Poisson} is called the Semenov-Tian-Shansky bracket \cite{Semenov-Tian-Shansky:1985mgd}.

The Poisson bracket \eqref{eq: STS Poisson} of the puncture monodromy is independent of asymptotic boundary condition. Recall that we chose the free boundary condition in Section \ref{sec: An Effective Observable Algebra of One-Sided Black Holes}. If we impose further constraints on the gauge potential, the Wilson line Poisson bracket \eqref{eq: simplest WW Poisson} will be replaced by a Dirac bracket accordingly. But the corresponding Dirac bracket of the puncture monodromy variables will be the same as \eqref{eq: STS Poisson} due to the locality condition \eqref{eq: locality}. In this subsection, we leave the asymptotic boundary conditions and the corresponding asymptotic symmetry unspecified.

Poisson bracket \eqref{eq: STS Poisson} is degenerate on $\mathcal{P}_\bullet$ since $\tx{PSL}(2, \mathbb{R})$ is 3-dimensional and a symplectic manifold must be even-dimensional~\footnote{This is a quick way to see the degeneracy for the current case. The Semenov-Tian-Shansky bracket \eqref{eq: STS Poisson} on a Lie group is generically degenerate \cite{Semenov-Tian-Shansky:1985mgd}.}. So $\mathcal{P}_{\bullet}$ itself cannot be a physical phase space (i.e. a symplectic manifold) associated to the black hole interior. Recall that the space $\mathcal{P}_{\bdy}$ of Ba\~nados metrics is decomposed into the Virasoro coadjoint orbits as physical phase subspaces of boundary gravitons. Similarly, $\mathcal{P}_\bullet$ is also decomposed into physical phase subspaces which we refer to as interior phase spaces. They correspond to superselection sectors after quantization. In the following, we identify the interior phase spaces by interpreting the puncture reduction \eqref{eq: puncture reduction map} as gauging the asymptotic symmetry of the boundary gravitons. 

By definition \eqref{eq: m def}, the puncture monodromy $m$ is invariant under the asymptotic symmetry
\begin{align}
\label{eq: ASG action}
    \tx{ASG} 
    \times \mathcal{P}_\odot &\to \mathcal{P}_\odot
    \\
    (g, W) &\to g W.
    \label{eq: boundary large gauge transformation}
\end{align}
$\tx{ASG}$ denotes an unspecified asymptotic symmetry group. Eq.~\eqref{eq: ASG action} is the finite form of the infinitesimal large gauge transformation \eqref{eq: gauge transformation} preserving some unspecified boundary conditions. So the asymptotic symmetry \eqref{eq: ASG action} is a gauge symmetry of $\mathcal{P}_\bullet$. Conversely, we can show that an observable $\mathcal{O} \in C^{\infty}(\mathcal{P}_\odot)$ is invariant under the asymptotic symmetry \eqref{eq: ASG action} only if it is a function of the puncture monodromy $m$.  $\mathcal{O}$ as an observable must be a functional of Wilson lines ending at the puncture or the asymptotic boundary. $\mathcal{O}$ is invariant under the asymptotic symmetry only if it is not composed of Wilson lines ending at the asymptotic boundary. A Wilson line ending at the puncture must be a function of the puncture monodromy. Collecting the above, we see that an observable $\mathcal{O} \in C^\infty (\mathcal{P}_\odot)$ is invariant under the asymptotic symmetry \eqref{eq: ASG action} if and only if $\mathcal{O} \in C^{\infty} (\mathcal{P}_\bullet)$~\footnote{Strictly speaking, $\mathcal{O}$ is an element in the pullback $\mathcal{F}_{\bullet}^* (C^\infty (\mathcal{P}_\bullet))$ of $C^{\infty} (\mathcal{P}_\bullet)$ via the puncture reduction map \eqref{eq: puncture reduction map}. But we avoid introducing unnecessary notations here.}. Thus, the observable algebra $ C^\infty (\mathcal{P}_\bullet)$ of the black hole interior is identified as the gauge-invariant observable subalgebra of $C^{\infty} (\mathcal{P}_\odot)$ with the asymptotic symmetry $\eqref{eq: ASG action}$ as the gauge symmetry. 

The charge associated with a gauge symmetry is constant
on a physical phase subspace. The charge \eqref{eq: KAc-Moody charge} generating the asymptotic symmetry \eqref{eq: ASG action} is a functional of the boundary gauge potential $A$. So each physical phase subspace associated to the black hole interior can be identified as a subspace $\mathcal{L}_{\bullet} (A) \subset \mathcal{P}_\bullet$  corresponding to a fixed boundary gauge potential configuration $A$. More precisely,
\begin{equation}
\label{eq: L (A) leaf}
    \mathcal{L}_{\bullet} (A)  = \mathcal{F}_{\bullet} \left( \mathcal{F}^{-1}_{\bdy} (A) \right),
\end{equation}
where $\mathcal{F}_{\bullet}$ is the puncture reduction map \eqref{eq: puncture reduction map} and $\mathcal{F}_{\bdy}^{-1} (A)$ is the preimage of a given fixed configuration $A$ under the boundary reduction map
\begin{align}
    \mathcal{F}_{\bdy} : \mathcal{P}_\odot &\to \mathcal{P}_{\scalebox{0.5}{$\bigcirc$}}
    \label{eq: bdy reduction map}
    \\
    W &\to A = dW W^{-1}.
    \label{eq: W to A}
\end{align}

Now we describe the interior phase spaces $\mathcal{L}_\bullet (A)$ in detail. The preimage $\mathcal{F}_{\bdy}^{-1} (A) \subset \mathcal{P}_\odot$ is composed of all Wilson lines sharing the same boundary gauge potential configuration $A$, i.e.
\begin{equation}
\label{eq: A preimage}
    \mathcal{F}_{\bdy}^{-1} (A) = \{ W \in \mathcal{P}_\odot ;\, dW W^{-1}|_{\bdy} = A \}.
\end{equation}
$|_{\bdy}$ indicates restriction to the circle boundary $\bigcirc$. Any two elements $W$ and $W'$ in $\mathcal{F}_{\bdy}^{-1} (A)$ are related by a puncture large gauge transformation
\begin{align}
    \mathcal{P}_\odot \times \tx{PSL}(2, \mathbb{R}) &\to \mathcal{P}_\odot
    \label{eq: puncture gauge transformation}
    \\
    (W, h) &\to Wh.
\end{align}
So every $W \in \mathcal{F}_{\bdy}^{-1} (A) $ can be decomposed as
\begin{equation}
\label{eq: W = checkW h}
    W = \check{W} h
\end{equation}
with $h$ a puncture large gauge transformation \eqref{eq: puncture gauge transformation} and $\check{W}$ a fixed representative in $\mathcal{F}_{\bdy}^{-1} (A)$. The puncture large gauge transformation \eqref{eq: puncture gauge transformation} also keeps $\mathcal{F}_{\bdy}^{-1}(A)$ invariant, so $\mathcal{F}_{\bdy}^{-1} (A)$ is an orbit of the puncture large gauge transformations \eqref{eq: puncture gauge transformation} acting on the representative $\check{W}$. Using Eq.~\eqref{eq: W to m} and \eqref{eq: W = checkW h}, we have
\begin{equation}
    \mathcal{F}_{\bullet} (W) = h^{-1} \check{m} h
\end{equation}
with $\check{m}$ the puncture monodromy of the representative $\check{W}$, i.e.
\begin{equation}
    \check{m} = \check{W}^{-1}|_0 \check{W}|_{2 \pi}.
\end{equation}
The representative $\check{m} \in \tx{PSL}(2, \mathbb{R})$ can be any element conjugated to the boundary monodromy
\begin{equation}
\label{eq: M def}
    M \equiv \overleftarrow{P} \exp \oint_{\bdy} A  = W|_{2 \pi} W^{-1}|_0.
\end{equation}
$\oint_{\bdy}$ is an integral along the asymptotic boundary $\bigcirc$ from $x^+ = 0$ to $x^+= 2\pi$. See Figure~\ref{fig: boundary monodromy} for a illustration. Combining the above, we characterize $\mathcal{L}_\bullet (A)$ as follows. Given the degenerate Poisson bracket \eqref{eq: STS Poisson} on $\mathcal{P}_\bullet$, each interior phase space \eqref{eq: L (A) leaf} is an orbit of the puncture large gauge transformations \eqref{eq: puncture gauge transformation} acting on a representative puncture monodromy $\check{m}$ conjugated to the boundary monodromy $M$, i.e.
\begin{equation}
\label{eq: chracterizing L (A)}
    \mathcal{L}_\bullet (A) = \{ h^{-1}  \check{m} h\in \mathcal{P}_\bullet; \, h \in \tx{PSL}(2, \mathbb{R}), \check{m} \sim M \}.
\end{equation}
In other words, every interior phase space is labeled by a conjugacy class of the monodromy
and hence is identified as a right coset space 
\begin{equation}
\label{eq: leaf as coset}
     \mathcal{L}_\bullet \cong \tx{Stab}(M) \backslash \tx{PSL}(2, \mathbb{R}),
\end{equation}
where $\tx{Stab}(M)$ is the stabilizer group
\begin{equation}
\label{eq: Stab def}
    \tx{Stab} (M) \equiv \{ g \in \tx{PSL}(2, \mathbb{R});\, g M g^{-1} = M \}.
\end{equation}

\begin{figure}
    \centering
    \includegraphics[width=0.25\linewidth]{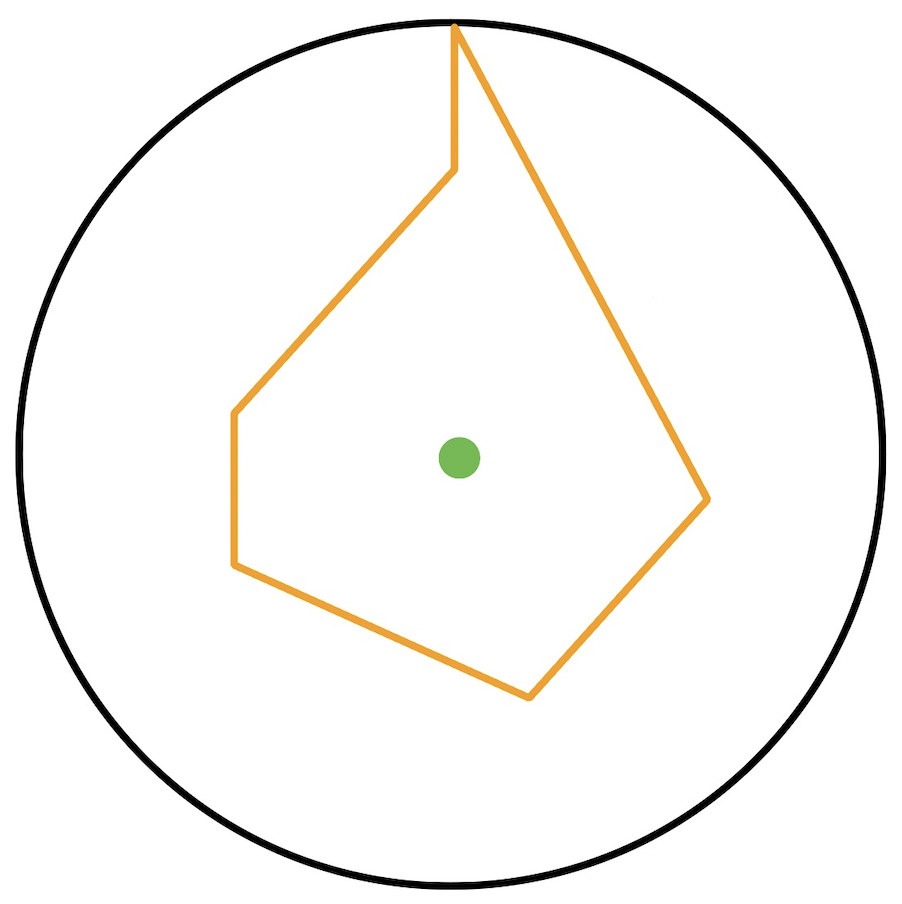}
    \caption{The outer black circle represents the asymptotic boundary. The green dot represents the puncture. The orange polygon represents a Wilson line starting and ending at the same point on the asymptotic boundary and winding around the puncture once. Such a Wilson line is defined as the boundary monodromy~\eqref{eq: M def}.}
    \label{fig: boundary monodromy}
\end{figure}

Besides the puncture large gauge transformation \eqref{eq: puncture gauge transformation}, the $\tx{PSL}(2, \mathbb{R})$ in \eqref{eq: leaf as coset} can also be interpreted as the Wilson line $W_0$ connecting the puncture to the base point $x^+ = 0$ on the boundary circle $\bigcirc$. Since the gauge potential configuration $A$ is fixed on the preimage \eqref{eq: A preimage} by definition \eqref{eq: W to A}, i.e.
\begin{equation}
\label{eq: delta A = 0}
    \delta A = 0,
\end{equation}
all the Wilson lines on $\mathcal{F}^{-1}_{\bdy} (A)$ are determined by $W_0$ via
\begin{equation}
    W|_x = \overleftarrow{P} ( e^{\int_0^x A} ) W_0.
\end{equation}
Hence,
\begin{equation}
\label{eq: F^-1 (A) cong PSL(2, R)}
    \mathcal{F}^{-1}_{\bdy} (A) \cong \tx{PSL}(2, \mathbb{R}).
\end{equation}
The boundary monodromy \eqref{eq: M def} is also fixed on $\mathcal{F}^{-1}_{\bdy} (A)$ due to Eq.~\eqref{eq: delta A = 0}, i.e.
\begin{equation}
\label{eq: delta M = 0}
    \delta M = 0.
\end{equation}
By definition \eqref{eq: m def} and \eqref{eq: M def}, the puncture monodromy $m$ is then parametrized by $W_0$ via
\begin{equation}
\label{eq: m = W_0^-1 M W_0}
    m = W^{-1}_0 M W_0,
\end{equation}
up to a left-$\tx{Stab}(M)$ action 
\begin{align}
\label{eq: left Stab(M) action}
    \tx{Stab}(M) \times \mathcal{F}^{-1}_{\bdy} (A) &\to \mathcal{F}^{-1}_{\bdy} (A)
    \\
    ( g, W_0 ) &\to g W_0.
\end{align}
So the interior phase space $\mathcal{L}_\bullet (A)$ can be identified with the quotient of $\mathcal{F}^{-1}_{\bdy} (A)$ by the left-$\tx{Stab}(M)$ action \eqref{eq: left Stab(M) action}, i.e.
\begin{equation}
\label{eq: left quotient stab (M)}
    \mathcal{L}_\bullet (A) \cong \tx{Stab}(M) \backslash \mathcal{F}^{-1}_{\bdy} (A).
\end{equation}
Combining Eq.~\eqref{eq: F^-1 (A) cong PSL(2, R)} and \eqref{eq: left quotient stab (M)} gives rise to the identification \eqref{eq: leaf as coset}.

So far, we only identified the interior phase space as a manifold \eqref{eq: leaf as coset} while a phase space is a symplectic manifold. Now we turn to derive the symplectic form on it. Denote by $\Omega_\bullet$ the restriction of the symplectic form \eqref{eq: Omega odot} of the full phase space $\mathcal{P}_\odot$ to the preimage \eqref{eq: A preimage}. Using Eq.~\eqref{eq: Omega odot}, \eqref{eq: delta M = 0}, and \eqref{eq: m = W_0^-1 M W_0}, we have
\begin{equation}
   \label{eq: Omega_bullet} \Omega_\bullet = \frac{k}{4 \pi}  \tr (  \delta W_0 W^{-1}_0 M \delta W_0 W^{-1}_0 M^{-1}  +  \delta m_+ m^{-1}_+ \delta m_- m_-^{-1} ).
\end{equation}
$\Omega_{\bullet}$ is invariant under the left-$\tx{Stab}(m)$ action \eqref{eq: left Stab(M) action}, so it is well-defined on the interior phase space $\mathcal{L}_\bullet (A)$ by the identification  \eqref{eq: left quotient stab (M)}. $\Omega_\bullet$ should be a symplectic form because it is reduced from the symplectic form $\Omega_\odot$ to a physical phase subspace. In fact, it was proved that $\Omega_\bullet$ is indeed a symplectic form on the $\tx{PSL}(2, \mathbb{R})$-adjoint orbit \eqref{eq: leaf as coset} \cite{Alekseev:1993qs}.

In general, a (co)adjoint orbit of a Lie group admits a canonical symplectic form called the Kirillov–Kostant–Souriau 2-form \cite{kirillov2025lectures, Kirillov1976ElementsOT}, which is different from the symplectic form $\Omega_\bullet$ given by Eq.~\eqref{eq: Omega_bullet}. However, the interior phase space $\mathcal{L}_\bullet$ admits two interpretations as a gauge orbit and $\Omega_\bullet$ is the canonical symplectic form for the other interpretation which identifies $\mathcal{L}_\bullet$ as an orbit of the so-called dressing transformations \cite{Semenov-Tian-Shansky:1985mgd, lu1990poisson,Alekseev:1993qs}. It reflects an essential difference between the boundary large gauge transformation \eqref{eq: ASG action} and the puncture large gauge transformation \eqref{eq: puncture gauge transformation}, which we will exploit in Section \ref{sec: Nonlocal Symmetries of One-Sided Black Holes}. For now, we only point out that the boundary large gauge transformation \eqref{eq: ASG action} as a symmetry corresponds to a local current while the puncture large gauge transformation \eqref{eq: puncture gauge transformation} does not. In this sense, the puncture large gauge transformation~\eqref{eq: puncture gauge transformation} is a nonlocal symmetry.

\subsection{Classification of Interior Phase Spaces}
\label{sec: Classification of Interior Phase Spaces}

As reviewed in Section \ref{sec: Intrinsic Incompleteness of One-Sided Boundary Gravitons}, the physical phase subspaces of boundary gravitons are the Virasoro coadjoint orbits. They were systematically classified in \cite{Witten:1988hc}. In this subsection, we classify physical phase subspaces associated to the effective interior degrees of freedom. We impose the asymptotically AdS$_3$ boundary condition \eqref{eq: AdS boundary condition 1} such that the gauge potential is characterized by the (chiral) stress tensor $T$. We restrict to orbits with a constant representative stress tensor $T$, which are general enough to include those corresponding to the BTZ black holes. We will see that the classification exhibits a similar pattern to the classification of Virasoro coadjoint orbits.

Given a constant representative boundary gauge potential ( i.e. with constant $T$ in \eqref{eq: Banados potential}), the boundary monodromy \eqref{eq: M def} associated to a contour of constant radial coordinate $r$ is
\begin{equation}
\label{eq: boundary monodromy at r}
    M =  \left(
\begin{array}{cc}
 \cosh \left(4\pi  \sqrt{G_{\tx{N}} T } \right) & \frac{2 \sqrt{G_{\tx{N}} T}}{r}  \sinh \left( 4\pi  \sqrt{G_{\tx{N}} T } \right) \\
  \frac{r}{2 \sqrt{ G_{\tx{N}} T }  } \sinh \left( 4\pi  \sqrt{G_{\tx{N}} T }\right) & \cosh \left( 4\pi  \sqrt{G_{\tx{N}} T } \right) \\
\end{array}
\right).
\end{equation}
The conjugacy classes of $\tx{PSL}(2, \mathbb{R})$ are classified into four types according to the stabilizer group \eqref{eq: Stab def}. $\tx{PSL}(2, \mathbb{R})$ admits the Iwasawa decomposition which is a diffeomorphism induced by group multiplication, i.e.
\begin{equation}
\label{eq: Iwasawa}
    \tx{PSL}(2, \mathbb{R}) = K \times A \times N
\end{equation}
with $K$ the maximal compact subgroup
\begin{equation}
\label{eq: K cong SO(2)}
    K \cong \tx{SO}(2),
\end{equation}
$A$ the split Cartan subgroup
\begin{equation}
\label{eq: A cong SO(1,1)}
    A \cong \tx{SO}(1,1),
\end{equation}
and $N$ the maximal nilpotent subgroup 
\begin{equation}
\label{eq: N cong R}
    N \equiv \{ \begin{pmatrix}
        1 & 0
        \\
        \gamma & 1
    \end{pmatrix} \in \tx{PSL}(2, \mathbb{R});\, \gamma \in \mathbb{R} \}.
\end{equation}
See e.g. \cite{knapp1996lie} for a review of the Iwasawa decomposition. The type of the conjugacy classes of $\tx{PSL}(2, \mathbb{R})$ can be labeled by subgroups $K$, $A$, and $N$. 

Besides the monodromy stabilizer group $\tx{Stab}(M)$, central elements can also classify physical phase subspaces (i.e. symplectic leaves) since they correspond to quantum numbers labeling superselection sectors \cite{Kirillov1976ElementsOT}. As we explain now, they are Wilson loops. The equation of motion~\eqref{eq: flatness 1} of the Chern-Simons theory
shows that the gauge potential $A$ is a flat connection. Then the trace of the puncture monodromy \eqref{eq: m def}, as a Wilson loop around the puncture, is equal to the Wilson loop located at the asymptotic boundary $\bigcirc$, i.e.
\begin{equation}
    \tr\, m = \tr \, \overleftarrow{P} \exp ( \oint_{\bdy} A ) = \tr\, M.
\end{equation}
The second equality holds by definition \eqref{eq: M def}. Hence the Wilson loop $\tr\, m$ is a central element due to the locality condition \eqref{eq: locality}, i.e.
\begin{equation}
\label{eq: Casimir m}
    \{ \tr\, m, m\} = 0.
\end{equation}
For the same reason, a Wilson loop of winding number $w$ is also central,
\begin{equation}
\label{eq: Casimir m^k}
    \{ \tr\, m^w, m\} = 0, \quad \forall w \in \mathbb{Z}.
\end{equation}
Eq.~\eqref{eq: Casimir m} and \eqref{eq: Casimir m^k} can also be verified by a straightforward calculation using Eq.~\eqref{eq: STS Poisson}. In Section \ref{sec: Gauging Nonlocal Symmetries}, we will show that the center of the observable algebra of the black hole interior is generated by Wilson loops $\tr\, m^w$. Using Eqs.~\eqref{eq: trace normalization}, \eqref{eq: m = W_0^-1 M W_0}, and \eqref{eq: boundary monodromy at r}, we have
\begin{equation}
\label{eq: tr m^k}
    \tr\, m^w = |\cosh \left(4 w \pi  \sqrt{G_{\tx{N}} T } \right) |.
\end{equation}
The right hand side of Eq.~\eqref{eq: tr m^k} is an absolute value because it is a trace of an element in $\tx{PSL}(2, \mathbb{R}) \cong \tx{SL}(2, \mathbb{R})/\mathbb{Z}_2$. Note that the Wilson loops \eqref{eq: tr m^k} of different winding numbers are functionally dependent, so we only need to consider
\begin{equation}
\label{eq: tr m = cosh}
        \tr\, m  = |\cosh \left(4\pi  \sqrt{G_{\tx{N}} T } \right) |.
\end{equation}

\subsubsection{Trivial Class}
\label{sec: Trivial class}
    
    If the chiral stress tensor satisfies
    \begin{equation}
    \label{eq: trivial T}
        T = - \frac{n^2}{16 G_{\tx{N}}}\ne 0, \quad n \in \mathbb{Z},
    \end{equation}
    then \begin{equation}
        \tx{Stab}(M) = \tx{PSL}(2, \mathbb{R}).
    \end{equation}
    Hence $\mathcal{L}_\bullet$ is a single point due to Eq.~\eqref{eq: leaf as coset}, i.e. no degree of freedom in the bulk. We refer to it as the trivial class. The bulk degrees of freedom disappear because the puncture monodromy is fixed to be the identity by the condition \eqref{eq: trivial T}. In other words, the condition \eqref{eq: trivial T} reduces the full bulk phase space $\mathcal{P}_\odot$ to be the moduli space of flat $\tx{PSL}(2, \mathbb{R})$-connections on a disc without the puncture. From this perspective, the nonzero integer $n$ in Eq.~\eqref{eq: trivial T} can be interpreted as the Euler class of the gauge potential $A$. The Euler class is essentially an element of the fundamental group of the gauge group \cite{milnor1958existence}. To have an intuitive understanding, we consider a Wilson line with one end moving along a circle $C$ of constant radial coordinate $r$,
    \begin{equation}
    \label{eq: W(theta)}
        W(\theta) = \overleftarrow{P} \exp ( \int_{0}^\theta A).
    \end{equation}
    This Wilson line can be viewed as a map from the circle $C$ to a path in the gauge group $\tx{PSL}(2, \mathbb{R})$, i.e. 
    \begin{align}
        W: [0,2\pi] &\to \tx{PSL}(2, \mathbb{R})
        \label{eq: W(theta) as path map}
        \\
        \theta &\to W(\theta).
    \end{align}
    See Figure~\ref{fig: Euler class} for a illustration. Using Eqs. \eqref{eq: Banados potential}, \eqref{eq: trivial T}, and \eqref{eq: W(theta)}, we have
    \begin{equation}
    \label{eq: W(theta) in a loop}
        W(\theta) = \left(
\begin{array}{cc}
 \sqrt{\frac{n}{2 r}} & 0 \\
 0 & \sqrt{\frac{2 r}{n}} \\
\end{array}
\right) \left(
\begin{array}{cc}
 \cos \left(\frac{n \theta}{2} \right) & -  \sin \left( \frac{n \theta}{2} \right) \\
  \sin \left( \frac{n \theta}{2}\right) & \cos \left( \frac{n \theta}{2} \right) \\
\end{array}
\right) \left(
\begin{array}{cc}
 \sqrt{\frac{2 r}{n}} & 0 \\
 0 & \sqrt{\frac{n}{2 r}} \\
\end{array}
\right) .
    \end{equation}
    So $W(\theta)$ lies in a maximal compact subgroup $\tx{SO}(2)$ which is a non-contractible loop in the gauge group $\tx{PSL}(2, \mathbb{R})$ \footnote{The subgroups $A$ \eqref{eq: A cong SO(1,1)} and $N$ \eqref{eq: N cong R} in the Iwasawa decomposition \eqref{eq: Iwasawa} are simply-connected, so $\tx{PSL}(2, \mathbb{R})$ is homotopy equivalent to $K \cong \tx{SO}(2)$ which is non-contractible.}. From decomposition \eqref{eq: W(theta) in a loop}, one can see that the Wilson line \eqref{eq: W(theta) as path map} maps the circle $C$ on the Cauchy slice to a path winding around a non-contractible loop in $\tx{PSL}(2, \mathbb{R})$ $n$ times \footnote{Notice that $\begin{pmatrix}
        1 &0
        \\
        0 &1
    \end{pmatrix}$ and $\begin{pmatrix}
        -1 &0
        \\
        0 & -1
    \end{pmatrix}$ are identified in $\tx{PSL}(2, \mathbb{R})$.}. Thus, the nonzero integer $n$ in Eq.~\eqref{eq: trivial T} is identified as the winding number which is an element in the fundamental group $\pi_1(\tx{PSL}(2, \mathbb{R})) \cong \mathbb{Z}$ of the gauge group.

    \begin{figure}
        \centering
        \includegraphics[width=0.5\linewidth]{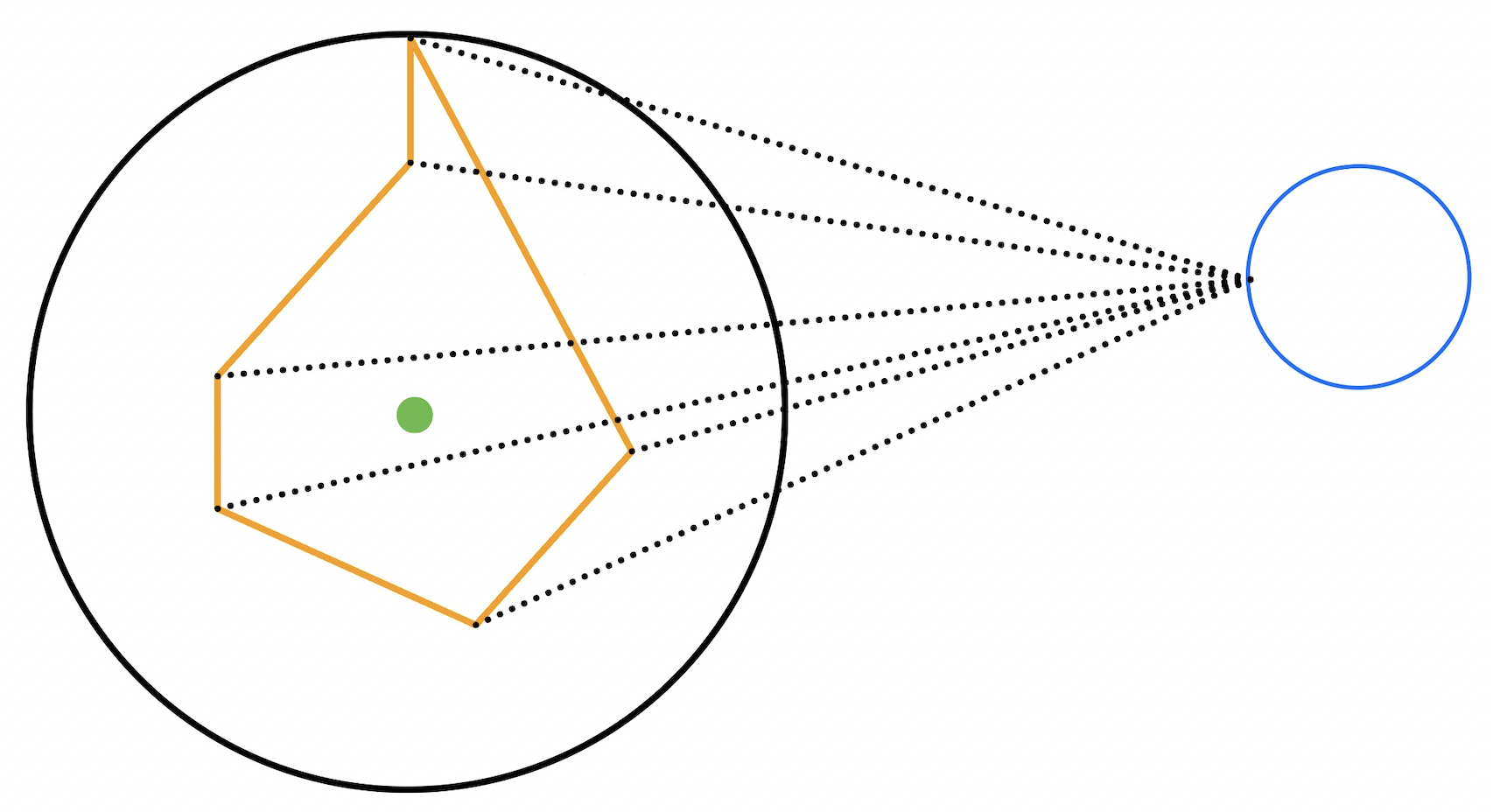}
        \caption{The outer black circle represents the asymptotic boundary. The green dot represents the puncture. The orange polygon represents the boundary monodromy. The blue circle represents a non-contractible circle in $\PSL$. The Wilson line maps a path on the punctured disc to a path in the gauge group $\PSL$. Each segment of the orange polygon is mapped to a path winding around the blue circle once. For the clarity of the illustration, we draw it such that each vertex of the orange polygon is mapped to the same point on the blue circle. The map is represented by six black dashed lines. This picture describes the case where the Euler class $n$ is equal to 6.}
        \label{fig: Euler class}
    \end{figure}

    The Euler class $n$ (i.e. the winding number of the map \eqref{eq: W(theta) as path map}) of flat $\tx{PSL}(2, \mathbb{R})$-connections on a Riemann surface $S$ of \textit{negative} Euler characteristic $\chi$ is bounded as \cite{milnor1958existence, wood1971bundles, goldman1988topological}
    \begin{equation}
    \label{eq: Milnor-Wood}
        |n| \leq -\chi.
    \end{equation}
    The phase space of the $\tx{PSL}(2, \mathbb{R})$ Chern-Simons theory is composed of disconnected components that are labeled by the Euler class $n$. The components satisfying
    \begin{equation}
    \label{eq: Teich component}
        n = \pm \chi
    \end{equation}
    are called the Teichm\"uller components since they are isomorphic to the Teichm\"uller space of hyperbolic metrics on $S$ \cite{goldman1988topological}. The Teichm\"uller component (combined with its anti-chiral counterpart and with the mapping class group modded out) is identified as the phase space of (2+1)-d pure gravity with negative cosmological constant \cite{mess2007lorentz, krasnov2007minimal, Scarinci:2011np, Kim:2015qoa, Maloney:2015ina, Eberhardt:2022wlc}. However, the Milnor-Wood inequality \eqref{eq: Milnor-Wood} does not hold in our case. The Euler class $n$ is unbounded as shown in Eq.~\eqref{eq: trivial T}. This is because the Euler characteristic of a disc is 1 which is \textit{positive}. In this case, disconnected components of Euler class
    \begin{equation}
    \label{eq: n = -+1}
        n = \pm1
    \end{equation}
    can still be interpreted as the Teichm\"uller components according to Eq.~\eqref{eq: Teich component}. Because they (combined with the anti-chiral counterpart) correspond to the smooth global AdS$_3$ spacetime. Using Eq.~\eqref{eq: trivial T}, the (chiral part of) the vacuum energy is given by
    \begin{equation}
    \label{eq: vacuum energy}
        L_0 = -\frac{1}{16\G}. 
    \end{equation}

\subsubsection{Hyperbolic Class}
\label{sec: Hyperbolic class}

    If the chiral stress tensor satisfies
    \begin{equation}
    \label{eq: T > 0}
        T >0,
    \end{equation}
    then the stabilizer group $\tx{Stab}(M)$ of the boundary monodromy \eqref{eq: boundary monodromy at r} is isomorphic to the split Cartan subgroup \eqref{eq: A cong SO(1,1)}, i.e.
\begin{equation}
        \tx{Stab}(M) \cong A.
    \end{equation}
    According to Eq.~\eqref{eq: leaf as coset} and \eqref{eq: A cong SO(1,1)}, the interior phase space \eqref{eq: L (A) leaf} is identified as 
    \begin{equation}
    \label{eq: leaf cong PSL(2, R)/SO(1,1)}
       \mathcal{L}_\bullet \cong \tx{SO}(1,1) \backslash \tx{PSL}(2, \mathbb{R})  .
    \end{equation}
    $\mathcal{L}_\bullet$ is diffeomorphic to the AdS$_2$ space (i.e. a cylinder) since it is a homogeneous space with isometry group $\tx{PSL}(2, \mathbb{R})$ and stabilizer group $\tx{SO}(1,1)$. Eq.~\eqref{eq: tr m = cosh} with $T>0$ implies
    \begin{equation}
        \tr\, m >1,
    \end{equation}
    so we refer to the interior phase spaces \eqref{eq: leaf cong PSL(2, R)/SO(1,1)} as the hyperbolic class. The interior phase spaces of the hyperbolic class correspond to BTZ black holes (assuming the anti-chiral sector is also of the hyperbolic class). Combining Eq.~\eqref{eq: vacuum energy} and \eqref{eq: T > 0} implies that observables on phase spaces in the hyperbolic class are heavy, i.e. the energy is at least of order $T \sim O(\G^{-1})$. In \cite{Schlenker:2022dyo}, it was argued that such heavy observables may behave erratically in the large $N$ limit ($\G \to 0$), which may lead to an apparent ensemble averaging. We will make this proposal more explicit in Section~\ref{sec: Apparent Ensemble Averaging over Entanglers}.

\subsubsection{Parabolic Class}
\label{sec: Parabolic Class}

    If the chiral stress tensor $T = 0$, then the stabilizer group $\tx{Stab}(M)$ of the monodromy \eqref{eq: boundary monodromy at r} is isomorphic to the maximal nilpotent subgroup \eqref{eq: N cong R}, i.e.
    \begin{equation}
        \tx{Stab}(M) \cong N .
    \end{equation}
    According to Eq.~\eqref{eq: leaf as coset}, the interior phase space \eqref{eq: L (A) leaf} is identified as 
    \begin{equation}
    \label{eq: leaf cong PSL(2, R)/N}
        \mathcal{L}_\bullet \cong N \backslash \tx{PSL}(2, \mathbb{R}).
    \end{equation}
    Eq.~\eqref{eq: tr m = cosh} with $T=0$ implies
    \begin{equation}
        |\tr\, m| = 1,
    \end{equation}
    so we refer to the interior phase space \eqref{eq: leaf cong PSL(2, R)/N} as the parabolic class. The parabolic class may be the most subtle class. It (combined with its anti-chiral counterpart) corresponds to a massless spinless BTZ black hole. Given this singular feature at the classical level, it is unclear whether it is still a meaningful notion after quantization. We won't discuss the parabolic class further and only remark that the Wilson loop does not distinguish it from the trivial class.

\subsubsection{Elliptic Class}

    If the chiral stress tensor satisfies
    \begin{equation}
    \label{eq: elliptic T}
        T<0, \qquad T \ne - \frac{n^2}{16 G_{\tx{N}}}, \,\forall n \in \mathbb{N}^+,
    \end{equation}
    then the stabilizer group $\tx{Stab}(M)$ of the boundary monodromy \eqref{eq: boundary monodromy at r} is isomorphic to the maximal compact subgroup \eqref{eq: K cong SO(2)}, i.e.
    \begin{equation}
        \tx{Stab}(M) \cong K.
    \end{equation}
    According to Eq.~\eqref{eq: leaf as coset} and \eqref{eq: K cong SO(2)}, the interior phase space \eqref{eq: L (A) leaf} is identified as 
    \begin{equation}
    \label{eq: leaf cong PSL(2, R)/SO(2)}
       \mathcal{L}_\bullet \cong \tx{SO}(2) \backslash \tx{PSL}(2, \mathbb{R})  .
    \end{equation}
    $\mathcal{L}_\bullet$ is diffeomorphic to the Poincar\'e disc since it is a homogeneous space with isometry group $\tx{PSL}(2, \mathbb{R})$ and stabilizer group $\tx{SO}(2)$. Eq.~\eqref{eq: tr m = cosh} and \eqref{eq: elliptic T} imply
    \begin{equation}
        \tr \, m <1,
    \end{equation}
    so we refer to the interior phase spaces \eqref{eq: leaf cong PSL(2, R)/SO(2)} as the elliptic class. The interior phase spaces (combined with the anti-chiral sector) of the elliptic class correspond to conical defects.

\subsection{Positivity Restrictions from Lorentzian Wormholes}
\label{sec: Positivity Restrictions Lorentzian Wormholes}

In this subsection, we show that the effective black hole interior degrees of freedom \eqref{eq: m def} admit a geometric interpretation from the perspective of observers outside horizons of Lorentzian wormholes. This interpretation results in positivity restrictions on the interior phase spaces, which will be exploited in Section~\ref{sec: Nonlocal observable Algebra with Erratic $N$-Dependence} to illustrate emergent observables with the erratic $N$-dependence. 

There are Lorentzian multi-boundary wormhole solutions~\cite{Aminneborg:1997pz, Aminneborg:1998si, Brill:1998pr, barbot2008causal, Barbot:2005qk} to the vacuum Einstein equation~\eqref{eq: Vacuum Einstein} with the asymptotically AdS$_3$ boundary conditions~\eqref{eq: AdS boundary condition 1} and \eqref{eq: AdS boundary condition 2}. Such a wormhole spacetime $M$ is homeomorphic to $S_{g,n} \times \mathbb{R}$ with $S_{g,n}$ a surface of genus $g$ and with $n$ boundary components. It is composed of $n$ exterior regions (i.e. regions outside horizons) and an interior region (i.e. a region behind all horizons). Each exterior region is isometric to the exterior region of a one-sided black hole~\cite{Aminneborg:1997pz, Brill:1998pr, Barbot:2005qk}. The degrees of freedom are divided into local part and global part. The local part is the boundary gravitons associated to the exterior regions. The global part is characterized by Wilson lines winding around the interior region. More precisely, there exists a domain $D$ in the AdS$_3$ spacetime such that the wormhole spacetime $M$ is a quotient of $D$ by the action of a discrete subgroup $\Gamma$ of the isometry group $\PSL \times \PSL$ of AdS$_3$, i.e.
\begin{equation}
\label{eq: multiboundary wormholes}
    M = D/\Gamma.
\end{equation}
The discrete subgroup $\Gamma$ is the image of the Wilson lines $W$,
\begin{align}
    W: \pi_1 (M) &\to \Gamma \subset \PSL \times \PSL
    \\
    \gamma &\to W[\gamma] = ( m, \overline{m} ).
\end{align}
$m, \overline{m} \in \PSL$. $\pi_1(M)$ is the fundamental group of the wormhole spacetime $M$.~\footnote{The fundamental group of a manifold $M$ is essentially the group of topological loops on $M$ with a common base point $P$. Every loop is oriented and hence has an incoming end and an outgoing end at $P$. Two loops $a$ and $b$ can compose into another loop $ab$ by gluing the outgoing end of $b$ with the incoming end of $a$. The group multiplication is the composition of loops. The inverse of a loop is the same loop with opposite orientation. Contractible loops are equivalent to the identity.} $W[\gamma]$ is the Wilson line evaluated along a loop $\gamma \subset M$ starting and ending at a given base point $P\in M$. A gauge transformation then maps $\Gamma$ to another discrete subgroup $h \Gamma h^{-1}$ with $h \in \PSL \times \PSL$ the gauge transformation at the base point $P$. It does not change the geometry, so the wormhole $M$ is determined by $\Gamma$ up to a $\PSL \times \PSL$-adjoint action. 

We define the phase space $\mathcal{P}_{\tx{wormhole}}^{\tx{interior}}$ of the interior region of the wormhole $M$ as the phase subspace invariant under the asymptotic symmetries on all the asymptotic boundaries. Note that the wormhole spacetime $M$ is homotopy equivalent to the surface $S_{g, n}$. The wormhole interior phase space $\mathcal{P}_{\tx{wormhole}}^{\tx{interior}}$ is hence parametrized by flat $\PSL \times \PSL$-connections on $S_{g, n}$. Only the flat connections with the Euler class saturating the Milnor-Wood inequality~\eqref{eq: Milnor-Wood} correspond to smooth geometries. Similar to the case of a disc in Section~\ref{sec: Trivial class}, the Euler class of flat $\PSL\times \PSL$-connections on general Riemann surfaces can be defined as a pair of winding numbers. So the wormhole interior phase space $\mathcal{P}_{\tx{wormhole}}^{\tx{interior}}$ is identified with a Teichm\"uller component of the moduli space of flat $\PSL \times \PSL$-connections on $S_{g, n}$, which is equivalent to two copies of the Teichm\"uller space $\mathcal{T}(S_{g,n})$ of hyperbolic structures on $S_{g, n}$~\cite{goldman1988topological}. Every wormhole horizon corresponds to two boundaries associated with the two hyperbolic structures on $S_{g,n}$. Without loss of generality, we assume the boundaries are geodesics. See \cite{Scarinci:2011np} for a more compact description of such wormholes. 

In Eq.~\eqref{eq: interior algebra = monodromy algebra}, the puncture monodromy~\eqref{eq: m def} is interpreted as the effective interior degrees of freedom of the one-sided black hole from the perspective of an infalling observer who can access the black hole interior. In the spirit of the black hole complementarity~\cite{Susskind:1993if,Lowe:1995ac}, the puncture monodromy~\eqref{eq: m def} admits another interpretation complementary to the infalling observer perspective. For an exterior observer outside the horizon, the puncture monodromy~\eqref{eq: m def}, as an effective description of the black hole, should be interpreted as the effective degrees of the freedom of the stretched horizon. We call it the horizon edge modes since they are essentially equivalent to the edge modes used to factorize the bulk Hilbert space of (2+1)-d gravity~\cite{Mertens:2022ujr}.

Now we turn to explain the geometric meaning of the horizon edge modes. A (decorated) Teichm\"{u}ller space can be parametrized by the Kashaev coordinates together with a set of homology constraints \cite{Kashaev:1998fc}. Every geodesic boundary component of a hyperbolic surface is characterized by 6 Kashaev coordinates that can be effectively reduced to three independent variables $(x, p , \lambda) \in \mathbb{R} \times \mathbb{R} \times \mathbb{R}^+$ \cite{Teschner:2005bz, Nidaiev:2013bda}. The Kashaev coordinates admit a Poisson structure which gives
\begin{equation}
\label{eq: classical Heisenberg}
    \{ x, p \} = \frac{1}{2 \pi}, \quad \{ \lambda, x \} = 0, \quad \{ \lambda, p \} = 0,
\end{equation}
for each geodesic boundary. It was proven that the Poisson structure of the Kashaev coordinates is equivalent to the Weil-Peterson form on the Teichm\"uller space~\cite{Teschner:2005bz}. It is also known that the Weil-Peterson form is equivalent to the Atiyah-Bott~\cite{Atiyah:1982fa} form on the Teichm\"uller component of the moduli space of $\PSL$ flat connections~\cite{Goldman:1984bow}. The Atiyah-Bott form is the symplectic form determined by the Lagrangian~\eqref{eq: chern-simons lagrangian} of the Chern-Simons theory defined on closed Riemann surfaces. So the Poisson bracket of Wilson loops is equivalent to that of the Kashaev coordinates. Taking the geodesic boundaries into account, there should also be a Poisson map (i.e. a map preserving Poisson brackets) relating the puncture monodromy~\eqref{eq: m def} and the Kashaev coordinates describing the same geodesic boundary. Using Eqs.~ \eqref{eq: k = 1/4G_N}, \eqref{eq: sl2 r-matrix}, \eqref{eq: sl2 K}, \eqref{eq: m_e,f,k parameterization}, \eqref{eq: STS Poisson}, and \eqref{eq: classical Heisenberg}, it is straightforward to check that 
\begin{align}
    p &= \frac{1}{2\pi \sqrt{\G}} \log m_{\mathrm{K}},
    \label{eq: p def}
    \\
    x&= \frac{1}{4 \pi \sqrt{\G}} \log \frac{m_{\mathrm{K}}}{m_{\mathrm{e}}},
    \label{eq: x def}
    \\
    \lambda&= \frac{1}{4\pi \sqrt{\G}} \tx{arccosh} ( \tr\, m),
    \label{eq: lambda  def}
\end{align}
is the desired Poisson map. This map is fixed up to an automorphism preserving the Poisson structure~\eqref{eq: classical Heisenberg}. Since the Wilson loop $\tr \,m$ is a central element, $\lambda$ is also a central element because of Eq.~\eqref{eq: lambda  def}. Every physical phase subspace \eqref{eq: L (A) leaf} hence corresponds to a fixed value of $\lambda$. Then $x$ and $p$ form a pair of canonically conjugate variables for each physical phase subspace. The inverse maps of \eqref{eq: p def}, \eqref{eq: x def}, and \eqref{eq: lambda  def} are given by 
\begin{align}
    m_{\mathrm{K}} &= e^{2 \pi \sqrt{\G} p},
    \label{eq: mK Kashaev}
    \\
    m_{\mathrm{e}} &= e^{2 \pi \sqrt{\G} (p-2x)},
    \label{eq: me Kashaev}
    \\
    m_{\mathrm{f}} &=
    \left(2 \cosh (4\pi \sqrt{\G} \lambda) + 2 \cosh(4\pi \sqrt{\G} p)\right) e^{2\pi \sqrt{\G} (2x-p )}.
    \label{eq: mf Kashaev}
\end{align}
A crucial feature is that the monodromy variables only need to range over
\begin{equation}
\label{eq: positivity m_e, m_f, m_K}
    m_{\mathrm{e}}>0, \quad m_{\mathrm{f}} > 0, \quad m_{\mathrm{K}} >0,
\end{equation}
to describe the Teichm\"uller space. In Section~\ref{sec: Nonlocal observable Algebra with Erratic $N$-Dependence}, we will show that the positivity restrictions~\eqref{eq: positivity m_e, m_f, m_K} gives rise to an emergent extension of the monodromy algebra with an erratic $N$-dependence.

Note that the monodromies associated with $n$ horizons are not enough to describe the interior region of the wormhole. This may be interpreted as the same kind of information loss mentioned in Section~\ref{sec: Introduction}. The wormholes are the smooth remnants of correlations among erratic quantum degrees of freedom which cannot be characterized by smooth observables. In Section~\ref{sec: Emergent Wormholes from the Large-$N$ Filter}, we will provide a quantum mechanical interpretation of such wormholes from the perspective of the large-$N$ filter.

\section{Loss of Erratic Information in the Large $N$ Limit}
\label{ref: Loss of Erratic Information in the Large $N$ Limit}

\subsection{Quantum Nonlocal Observable Algebra}
\label{sec: Quantum Nonlocal Observable Algebra}

In Section~\ref{sec: Positivity Restrictions Lorentzian Wormholes}, the puncture monodromy~\eqref{eq: m def} is identified with the horizon edge modes from the perspective of an exterior observer outside the horizon. According to classical causal structure, it will take the exterior observer an infinitely long time to access the horizon edge modes. However, it is expected that the classical spacetime breaks down on a timescale exponential in black hole entropy due to non-perturbative quantum gravitational effects~\cite{Susskind:2015toa,Susskind:2020wwe, Iliesiu:2021ari}. So the puncture monodromy~\eqref{eq: m def} is at least exponentially hard to access for a boundary observer. Hence we may identify the quantization $\mathcal{A}_\bullet$ of the classical observable algebra $C^\infty (\mathcal{P}_\bullet)$ of the puncture monodromy \eqref{eq: m def} with the algebra $\mathcal{A}_\cft^{\tx{smooth}}$ of smooth observables of high complexity, i.e.
\begin{equation}
\label{eq: Asmooth= quantum monodromy algebra}
    \mathcal{A}_\cft^{\tx{smooth}} = \mathcal{A}_\bullet.
\end{equation}
The smoothness of $\mathcal{A}_\bullet$ inherits from the smoothness of $C^\infty (\mathcal{P}_\bullet)$. 

Combining Eqs.~\eqref{eq: m_e,f,k parameterization} and \eqref{eq: STS Poisson} gives rise to the classical monodromy algebra $C^\infty (\mathcal{P}_\bullet)$,
\begin{align}
    \{ m_{\mathrm{K}}, m_{\mathrm{e}} \} &=  \frac{2\pi}{k} m_{\mathrm{e}} m_{\mathrm{K}},
    \label{eq: K, e Poisson}
    \\
    \{ m_{\mathrm{K}}, m_{\mathrm{f}} \} &= -\frac{2 \pi}{k} m_{\mathrm{f}} m_{\mathrm{K}},
    \label{eq: K, f Poisson}
    \\
    \{ m_{\mathrm{e}}, m_{\mathrm{f}} \} &= \frac{4\pi}{k} ( m_{\mathrm{K}}^{-2} - m_{\mathrm{K}}^2).
    \label{eq: e, f, Poisson}
\end{align}
Poisson brackets~\eqref{eq: K, e Poisson}, \eqref{eq: K, f Poisson}, and \eqref{eq: e, f, Poisson} is the classical limit ($\hbar \to 0$) of the Drinfel'd-Jimbo quantum group algebra $U_q (\mathfrak{sl}(2,\mathbb{R}))$ \cite{Jimbo:1985zk, drinfeld1986quantum}
\begin{align}
    m_{\mathrm{K}} \,m_{\mathrm{e}} &= q\, m_{\mathrm{e}}\, m_{\mathrm{K}},
    \label{eq: m_K m_e = q m_e m_K}
    \\
    m_{\mathrm{K}}\, m_{\mathrm{f}} &= q^{-1} m_{\mathrm{f}}\, m_{\mathrm{K}},
    \label{eq: m_K m_f = q m_f m_K}
    \\
    [m_{\mathrm{e}}, m_{\mathrm{f}}] &=(q-q^{-1}) (m_{\mathrm{K}}^{-2} - m_{\mathrm{K}}^2),
    \label{eq: [m_e, m_f]}
\end{align}
with $q$ defined by
\begin{equation}
    q \equiv \exp ( \frac{ 2\pi i \hbar}{2 \hbar-k} ).
\end{equation}
More precisely, Eq.~\eqref{eq: m_K m_e = q m_e m_K} implies
\begin{equation}
    [m_{\mathrm{K}}, m_{\mathrm{e}}] = -\frac{2\pi i \hbar}{k} m_{\mathrm{e}} m_{\mathrm{K}} + O(\hbar^2)
\end{equation}
with the leading term on the right hand side given by the Poisson bracket~\eqref{eq: K, e Poisson} (multiplied by the standard quantization factor ``$-i\hbar$''). The other two Poisson brackets \eqref{eq: K, f Poisson} and \eqref{eq: e, f, Poisson} are reproduced similarly from Eqs.~\eqref{eq: m_K m_f = q m_f m_K} and \eqref{eq: [m_e, m_f]} in the classical limit ($\hbar \to 0$). Notice that the classical limit refers to the limit ``$\lim_{\hbar \to 0} i\partial_\hbar$'' relating the commutators of quantum operators to the Poisson brackets, i.e.
\begin{equation}
\label{eq: classical limit}
    \{ \mathcal{O}_1, \mathcal{O}_2 \} = \lim_{\hbar \to 0} i\partial_\hbar [ \mathcal{O}_1, \mathcal{O}_2 ].
\end{equation}
While taking the large $N$ limit is taking the limit ``$\lim_{\G \to 0}$'' for observables.

The quantum monodromy algebra $\mathcal{A}_\bullet$ is defined as a quantization $U_q (\psl)$ of the classical monodromy algebra $C^\infty(\mathcal{P}_\bullet)$. In the large $N$ limit ($k \to \infty$), the quantum monodromy algebra $\mathcal{A}_\bullet$ reduces to a trivial commutative algebra. There is also a nontrivial double-scaling large $N$ limit ($k \to \infty$) which keeps the following rescaled variables finite,
\begin{equation}
\label{eq: resacled m def}
    m_{\mathrm{E}} \equiv \frac{m_{\mathrm{e}}}{2 \sin (\frac{2\pi\hbar}{k-2\hbar})},
    \quad
    m_{\mathrm{F}} \equiv \frac{m_{\mathrm{f}}}{2 \sin (\frac{2\pi\hbar}{k-2\hbar})},
    \quad
    m_{\mathrm{H}} \equiv i\log_q m_{\mathrm{K}}^2.
\end{equation}
In terms of the rescaled variables~\eqref{eq: resacled m def}, the quantum monodromy algebra \eqref{eq: m_K m_e = q m_e m_K}, \eqref{eq: m_K m_f = q m_f m_K}, and \eqref{eq: [m_e, m_f]} is rewritten as
\begin{align}
    q^{-\frac{i}{2} m_{\mathrm{H}} }  m_{\mathrm{E}}\, q^{\frac{i}{2} m_{\mathrm{H}} }&= q\, m_{\mathrm{E}}  ,
    \label{eq: sl_q(2, R) mE}
    \\
    q^{-\frac{i}{2} m_{\mathrm{H}} }  m_{\mathrm{F}}\, q^{\frac{i}{2} m_{\mathrm{H}} } &= q^{-1} m_{\mathrm{F}}  ,
    \label{eq: sl_q(2, R) mF}
    \\
    [m_{\mathrm{E}}, m_{\mathrm{F}}] &=\frac{q^{-im_{\mathrm{H}}}  - q^{i m_{\mathrm{H}}} }{q-q^{-1}} .
    \label{eq: sl_q(2, R) mH}
\end{align}
Eqs.~\eqref{eq: sl_q(2, R) mE} and \eqref{eq: sl_q(2, R) mF} are equivalent to
\begin{align}
    [m_{\mathrm{H}}, m_{\mathrm{E}}] &= 2i\, m_{\mathrm{E}},
    \\
   [m_{\mathrm{H}}, m_{\mathrm{F}}] &= -2i\, m_{\mathrm{F}}.
\end{align}
In the large $N$ limit ($k \to \infty$), Eq.~\eqref{eq: sl_q(2, R) mH} reduces to
\begin{align}
[m_{\mathrm{E}}, m_{\mathrm{F}}] &= -i\,m_{\mathrm{H}}.
\end{align}
$U_q(\psl)$ hence reduces to the matrix algebra $\psl$ defined by matrices~\eqref{eq: sl2 generators} in the large $N$ limit. According to parametrization~\eqref{eq: m_e,f,k parameterization}, $m_{\mathrm{e}}$, $m_{\mathrm{f}}$, $m_{\mathrm{K}}^2$ are real numbers at the classical level. So the corresponding quantum operators are hermitian,
\begin{equation}
\label{eq: star structure}
    m_{\mathrm{e}}^\dagger = m_{\mathrm{e}}, \quad m_{\mathrm{f}}^\dagger = m_{\mathrm{f}}, \quad (m_{\mathrm{K}}^2)^\dagger = m_{\mathrm{K}}^2. 
\end{equation}
Eqs.~\eqref{eq: resacled m def} and \eqref{eq: star structure} imply 
\begin{equation}
    m_{\mathrm{E}}^\dagger = m_{\mathrm{E}}, \quad m_{\mathrm{F}}^\dagger = m_{\mathrm{F}}, \quad m_{\mathrm{H}}^\dagger = m_{\mathrm{H}}. 
\end{equation}

In Eq.~\eqref{eq: interior algebra = monodromy algebra}, the classical observable algebra $C^\infty (\mathcal{P}_\ads^{\tx{interior}})$ of the black hole interior is identified with the classical monodromy algebra $C^\infty (\mathcal{P}_\bullet)$. According to the Schlenker-Witten proposal~\cite{Witten:2021jzq, Schlenker:2022dyo}, some observables involving black hole states do not have a large $N$ limit. However, one may expect that quantizing a classical algebra will give rise to a quantum algebra with a classical limit. So it seems to be unlikely to see erratic $N$-dependence from the quantum monodromy algebra $\mathcal{A}_\bullet$. In Section~\ref{sec: Nonlocal observable Algebra with Erratic $N$-Dependence}, we will show that the quantum monodromy algebra $\mathcal{A}_\bullet$ is automatically extended under the positivity restriction~\eqref{eq: positivity m_e, m_f, m_K} such that an erratic $N$-dependence emerges.

\subsection{Nonlocal Observable Algebra with Erratic $N$-Dependence}
\label{sec: Nonlocal observable Algebra with Erratic $N$-Dependence}

In Section~\ref{sec: Positivity Restrictions Lorentzian Wormholes}, we showed that the puncture monodromy~\eqref{eq: m def} satisfying the positivity restriction~\eqref{eq: positivity m_e, m_f, m_K} is sufficient to describe the horizon. In this subsection, we show that the positivity restriction~\eqref{eq: positivity m_e, m_f, m_K} induces an extension of the quantum monodromy algebra. The extended algebra contains a subalgebra ill-defined in the large $N$ limit. 

In general, new structures may emerge under restrictions. We start with the following observation. Given a subspace $\mathcal{M}$ of a phase space $\mathcal{N}$, it is trivially true that an observable on $\mathcal{N}$ only involving degrees of freedom in $\mathcal{M}$ is also an observable on $\mathcal{M}$, i.e.
\begin{equation}
\label{eq: emergence under restriction}
    C^\infty (\mathcal{\mathcal{N}})|_{\mathcal{M}} \subset C^\infty (\mathcal{M}).
\end{equation}
$C^\infty(\mathcal{M})$ denotes the space of smooth functions on $\mathcal{M}$. $C^\infty (\mathcal{N})|_{\mathcal{M}}$ denotes the restriction of $C^\infty(\mathcal{N})$ to $\mathcal{M}$. But the converse of \eqref{eq: emergence under restriction} is not necessarily true. Hence restricting to the phase subspace $\mathcal{M}$ may automatically give rise to an extension $C^\infty (\mathcal{M})$ of the original observable algebra $C^\infty (\mathcal{N})$. It is indeed the case for the classical monodromy algebra $C^\infty (\mathcal{P}_\bullet)$ with the positivity restrictions~\eqref{eq: positivity m_e, m_f, m_K}. 

Before proceeding, we briefly clarify the relation between classical observables and smooth functions on phase space. In principle, infinity is not measurable, so the classical observable algebra should be composed of smooth functions on phase space. This is also a consistency condition. A Poisson bracket is defined by derivatives, so nested Poisson brackets are well-defined, i.e.
\begin{equation}
\label{eq: nested Poisson}
  \{ \mathcal{O}_n, \{\dots , \{ \mathcal{O}_2, \{\mathcal{O}_1,\mathcal{O}_0 \underbrace{\} \} \dots \}}_n < \infty, \quad \forall n \in \mathbb{N},
\end{equation}
if and only if classical observables $\mathcal{O}_i$'s are smooth functions on the phase space.  

Now we turn to show that the positivity restriction~\eqref{eq: positivity m_e, m_f, m_K} nontrivially realizes the extension~\eqref{eq: emergence under restriction} for the classical monodromy algebra $C^\infty (\mathcal{P}_\bullet)$. Restricting all smooth functions on the real line $\mathbb{R}$ to the positive half real line $\mathbb{R}^+$ does not give rise to all smooth functions on $\mathbb{R}^+$, i.e.
\begin{equation}
    C^{\infty} (\mathbb{R})|_{\mathbb{R}^+} \subsetneq C^\infty (\mathbb{R}^+).
\end{equation}
This is because the function $f(x) =x^\nu$ is smooth everywhere on $\mathbb{R}$ but at $x=0$ for $\nu \notin \mathbb{N}$, i.e. $\exists n \in \mathbb{N}$ such that
\begin{equation}
     \lim_{x \to 0}\partial^n_x (x^\nu) = \infty.
\end{equation}
The quantum monodromy algebra $\mathcal{A}_\bullet$ is generated by $m_{\mathrm{e}}$, $m_{\mathrm{f}}$, and $ m_{\mathrm{K}}$. To have a well-defined classical limit $C^\infty (\mathcal{P}_\bullet)$, the quantum monodromy algebra $\mathcal{A}_\bullet$ without the positivity restriction~\eqref{eq: positivity m_e, m_f, m_K} is composed of linear combinations of elements of the form
\begin{equation}
\label{eq: m n_e m n_f m nu}
    m_{\mathrm{e}}^{n_{\mathrm{e}}} m_{\mathrm{f}}^{n_{\mathrm{f}}} m_{\mathrm{K}}^\nu
\end{equation}
with $n_{\mathrm{e}}, n_{\mathrm{f}}  \in \mathbb{N}$ and $ \nu \in \mathbb{C}$. With the positivity restriction~\eqref{eq: positivity m_e, m_f, m_K}, the exponents $n_{\mathrm{e}}$ and $n_{\mathrm{f}}$ in \eqref{eq: m n_e m n_f m nu} can be arbitrary numbers. In particular, the exponents $n_{\mathrm{e}}$, $n_{\mathrm{f}}$, and $\nu$ can scale as $1/\G$. However, Poisson brackets of such observables are divergent in the large $N$ limit ($\G \to 0$), which is in contradiction with the definition of of Poisson bracket as the classical limit ($\hbar \to 0$). For example, Eq.~\eqref{eq: K, e Poisson} implies
\begin{equation}
\label{eq: e.g. O(k) Poisson}
    \{ m_{\mathrm{K}}^{O(k)}, m_{\mathrm{e}}^{O(k)} \} = O(k)\, m_{\mathrm{e}} m_{\mathrm{K}}. 
\end{equation}
To see the contradiction with the classical limit, note that the Chern-Simons level $k$ and the Plank constant $\hbar$ should always appear together as the combination $\hbar/k$ since $k$ as the coupling constant is an overall factor in the Lagrangian~\eqref{eq: chern-simons lagrangian}. Using Eq.~\eqref{eq: k = 1/4G_N}, the commutator of two operators $\mathcal{O}_1$ and $\mathcal{O}_2$ hence can be written as
\begin{equation}
\label{eq: [ O_1, O_2 ] = f}
    [ \mathcal{O}_1, \mathcal{O}_2 ] = f(\G \hbar),
\end{equation}
with $f$ a function of $\G \hbar$. Recall that the Poisson bracket and the commutator are related by Eq.~\eqref{eq: classical limit}. Eqs.~\eqref{eq: [ O_1, O_2 ] = f} and \eqref{eq: classical limit} imply
\begin{equation}
\label{eq: O ,O Poisson, G f'}
    \{ \mathcal{O}_1,  \mathcal{O}_2\} = i \G f'(0).
\end{equation}
In the case of Eq.~\eqref{eq: e.g. O(k) Poisson}, the Poisson bracket diverges in the large $N$ limit, i.e.
\begin{equation}
    \lim_{\G \to 0} \G f'(0) = \infty.
\end{equation}
$f'(0)$ by definition is independent of $\G$, so Eq.~\eqref{eq: O ,O Poisson, G f'} implies that the Poisson bracket also diverges at finite $\G$, i.e.
\begin{equation}
\label{eq: divergent Poisson}
    \{ \mathcal{O}_1, \mathcal{O}_2 \} = \infty.
\end{equation}
This is in contradiction with the finite Poisson bracket~\eqref{eq: e.g. O(k) Poisson}. Combining Eqs.~\eqref{eq: classical limit} and \eqref{eq: divergent Poisson} imply that the commutator $[\mathcal{O}_1, \mathcal{O}_2 ]$ is not analytic at $\hbar = 0$. Imposing positivity restriction~\eqref{eq: positivity m_e, m_f, m_K} at the classical level hence induces a subalgebra that is not quantizable. Equivalently, the quantization of monodromy observables satisfying positivity restriction~\eqref{eq: positivity m_e, m_f, m_K} contains a subalgebra that does not have a classical limit or a large $N$ limit. This is a signature of erratic large-$N$ behaviors.

The Poisson brackets~\eqref{eq: classical Heisenberg} of the Kashaev coordinates $(x, p , \lambda)$ are quantized as
\begin{equation}
    [x,p]= \frac{\hbar}{2 \pi i},\quad [\lambda, x] = 0, \quad [\lambda, p] = 0.
\end{equation}
In terms of the quantized Kashaev coordinates, the $U_q(\psl)$ generators can be realized as~\cite{Nidaiev:2013bda}
\begin{align}
    m_{\mathrm{K}} &= e^{\pi b p},
    \label{eq: mk quantum Kashaev}
    \\
    m_{\mathrm{e}} &= e^{\pi b (p-2x)},
    \label{eq: me quantum Kashaev}
    \\
    m_{\mathrm{f}} &= e^{\pi b (x-\frac{p}{2})} \left(2 \cosh (2 \pi b \lambda) + 2 \cosh(2 \pi b p)\right) e^{\pi b (x-\frac{p}{2} )},
    \label{eq: mf quantum Kashaev}
\end{align}
with
\begin{equation}
\label{eq: b def}
    b \equiv (k - 2 \hbar)^{-\frac{1}{2}} .
\end{equation}
The algebra $\mathcal{A}^+_\bullet$ generated by positive operators \eqref{eq: mk quantum Kashaev}, \eqref{eq: me quantum Kashaev}, and \eqref{eq: mf quantum Kashaev} is called the modular double $U_{q\tilde{q}}(\psl)$ of $U_q(\psl)$~\cite{Faddeev:1999fe, Ponsot:1999uf}. With the positivity restrictions~\eqref{eq: positivity m_e, m_f, m_K}, arbitrary powers of $m_{\mathrm{e}}$ and $m_{\mathrm{f}}$ are well-defined. The positive monodromy algebra $\mathcal{A}_\bullet^+$ hence contains the original monodromy algebra $\mathcal{A}_\bullet$ as a proper subalgebra, i.e.
\begin{equation}
\label{eq: identify positive monodromy algebra}
    U_q(\psl) \cong \mathcal{A}_\bullet \subsetneq \mathcal{A}^+_\bullet \cong U_{q\tilde{q}}(\psl).
\end{equation}
It was proven that~\cite{Bytsko:2002br}
\begin{align}
    m_{\widetilde{\mathrm{K}}} \,m_{ \tilde{\mathrm{e}} } &= \tilde{q}\, m_{ \tilde{\mathrm{e}} }\, m_{\widetilde{\mathrm{K}}},
    \label{eq: dual U_q 1}
    \\
    m_{\widetilde{\mathrm{K}}}\, m_{ \tilde{\mathrm{f}} } &= \tilde{q}^{-1} m_{ \tilde{\mathrm{f}} }\, m_{\widetilde{\mathrm{K}}},
    \label{eq: dual U_q 2}
    \\
    [m_{ \tilde{\mathrm{e}} }, m_{ \tilde{\mathrm{f}} }] &=(\tilde{q}-\tilde{q}^{-1}) (m_{\widetilde{\mathrm{K}}}^{-2} - m_{\widetilde{\mathrm{K}}}^2),
    \label{eq: dual U_q 3}
\end{align}
where we defined
\begin{equation}
\label{eq: dual m def}
        m_{ \tilde{\mathrm{e}} } \equiv (m_{\mathrm{e}})^{\frac{1}{b^2 \hbar}},
    \quad
    m_{ \tilde{\mathrm{f}} } \equiv (m_{\mathrm{f}})^{\frac{1}{b^2 \hbar}},
    \quad
    m_{\widetilde{\mathrm{K}}} \equiv (m_{\mathrm{K}})^{\frac{1}{b^2 \hbar}},
\end{equation}
and
\begin{equation}
\label{eq: qdual def}
    \tilde{q} \equiv \exp\left(-  \frac{2\pi i k}{\hbar}\right).
\end{equation}
Eqs.~\eqref{eq: dual U_q 1}, \eqref{eq: dual U_q 2}, and \eqref{eq: dual U_q 3} also define a quantum group algebra $U_{\tilde{q}}(\psl)$ but with a different deformation parameter $\tilde{q}$. $U_{\tilde{q}}(\psl)$ is called the modular dual of $U_q (\psl)$. It was also proven that the subalgebra $U_{\tilde{q}}(\psl)$ commutes with the subalgebra $U_q(\psl)$~\cite{Bytsko:2002br}. For generic values of $k$, $U_q(\psl)$ and $U_{\tilde{q}}(\psl)$ are sufficient to generate (a dense subset of) all powers of the monodromy variables, so the positive monodromy algebra is the tensor product between the original monodromy algebra $\mathcal{A}_\bullet \cong U_q (\psl)$ and the modular dual algebra $U_{\tilde{q}}(\psl)$, i.e.
\begin{equation}
\label{eq: positive algebra = monodromy algebra otimes modular dual}
    \mathcal{A}_\bullet^+ \cong \mathcal{A}_\bullet \otimes U_{\tilde{q}}(\psl).
\end{equation}
Eqs.~\eqref{eq: b def} and \eqref{eq: dual m def} imply that the modular dual operators $m_{\tilde{\mathrm{e}}}$, $m_{\tilde{\mathrm{f}}}$, and $m_{\widetilde{\mathrm{K}}}$ are singular in the large $N$ limit ($k \to \infty$). The modular dual algebra $U_{\tilde{q}}(\psl)$ does not have a classical limit ($\hbar \to 0$) either, because the dual quantization parameter~\eqref{eq: qdual def} is not analytic at $\hbar = 0$. See Figure~\ref{fig: erratic q} for an illustration.
\begin{figure}[htbp]
\centering

\begin{tikzpicture}
\begin{axis}[
    width=13cm,
    height=7cm,
    domain=0.015:0.1,
    samples=2000,
    xlabel={$G_{\mathrm{N}}\hbar$},
    ylabel={$\mathrm{Re}(\tilde{q})$},
    axis lines=box,
    minor tick num=1,
    scaled x ticks=false,
    xticklabel style={/pgf/number format/fixed},
    enlargelimits=false,
]

\addplot[
    blue,
    thick
]
{cos(deg(pi/(2*x)))};

\end{axis}
\end{tikzpicture}

\caption{Erratic behavior of the real part of $\tilde{q}$ in the large $N$ limit ($\G \to 0$).}
\label{fig: erratic q}

\end{figure}
Thus, the identification~\eqref{eq: Asmooth= quantum monodromy algebra} may be extended as
\begin{equation}
\label{eq: Acomplex cong Apositive monodromy}
    \mathcal{A}_\cft^{\tx{complex}} \cong   \mathcal{A}^+_{\bullet},
\end{equation}
where $\mathcal{A}_\cft^{\tx{complex}}$ is the boundary observable algebra dual to the black hole interior introduced in Section~\ref{sec: Introduction}.

\section{Filtering CFTs via Gauging Nonlocal Symmetries}
\label{ref: Filtering CFTs via Gauging Nonlocal Symmetries}

\subsection{Reformulation of Lie Group Symmetries}
\label{sec: Reformulation of Lie Group Symmetries}

In this subsection, we review some useful facts about Lie group symmetries from a phase space perspective. See e.g. \cite{Arnold:1989who, guillemin1990symplectic} for a review. We reformulate the notion of symmetries and charges so as to facilitate the discussion of nonlocal symmetries of black holes in Section~\ref{sec: Nonlocal Symmetries of One-Sided Black Holes}.

Given a theory with a phase space $\mathcal{P}$, suppose observables of the theory transform under an action of a Lie group $G$, i.e.
\begin{align}
    \label{eq: general Lie group symmetry}
    C^{\infty} (\mathcal{P}) \times G &\to C^{\infty} (\mathcal{P}),
    \\
    (\mathcal{O}, g) &\to g \cdot \mathcal{O}.
\end{align}
Usually, the Lie group~$G$ is a symmetry group of the theory if the Lagrangian of the theory is invariant under the $G$-action~\eqref{eq: general Lie group symmetry} up to a total derivative. Given the symmetry transformation~\eqref{eq: general Lie group symmetry}, every element~$\sigma$ in the Lie algebra $\mg$ of the symmetry group $G$ corresponds to an infinitesimal symmetry transformation
\begin{align}
\label{eq: infinitesimal symmetry transformation}
    \mathcal{O} \to \mathcal{O} + \delta_\sigma \mathcal{O}
\end{align}
with $\delta_\sigma \mathcal{O}$ the infinitesimal variation of $\mathcal{O} \in C^\infty (\mathcal{P})$. According to the Noether theorem, every infinitesimal symmetry transformation \eqref{eq: infinitesimal symmetry transformation} corresponds to a local current $j[\sigma]$. The Noether charge $Q[\sigma]$ associated with the symmetry is given by the integral of the local current over a Cauchy slice~$\Sigma$,
\begin{equation}
\label{eq: Q = int j}
    Q[\sigma] = \int_\Sigma j[\sigma].
\end{equation}
The Noether charge $Q[\sigma]$ and the infinitesimal symmetry transformation \eqref{eq: infinitesimal symmetry transformation} are related by the Poisson bracket
\begin{equation}
\label{eq: delta O = Poisson (Q, O)}
    \delta_\sigma \mathcal{O} = \{ Q[\sigma], \mathcal{O} \}, \quad \forall \mathcal{O} \in C^\infty (\mathcal{P}).
\end{equation}

Now we turn to reformulate the above from a phase space perspective. The infinitesimal symmetry transformation \eqref{eq: infinitesimal symmetry transformation} induces a vector field $\mathfrak{X}[\sigma]$ on the phase space~$\mathcal{P}$, i.e. $\forall \mathcal{O} \in C^\infty (\mathcal{P})$,
\begin{equation}
\label{eq: Lie derivative def}
    \mathscr{L}_{\mathfrak{X}[\sigma]} \mathcal{O} \equiv\delta_\sigma \mathcal{O}.
\end{equation}
$\mathscr{L}_{\mathfrak{X}[\sigma]}$ denotes the Lie derivative along $\mathfrak{X}[\sigma]$. Eq.~\eqref{eq: delta O = Poisson (Q, O)} is equivalent to a relation among the Noether charge $Q[\sigma]$, the vector field $\mathfrak{X}[\sigma]$, and the symplectic form $\Omega$ of the phase space $\mathcal{P}$:
\begin{equation}
\label{eq: imath X Omega = delta Q}
    \imath_{\mathfrak{X}[\sigma]} \Omega = \delta Q [\sigma]
\end{equation}
where $\imath$ denotes the interior product. In terms of tensor indices, the interior product is an index contraction
\begin{equation}
    (\imath_{\mathfrak{X}[\sigma]} \Omega )_\alpha  \equiv \mathfrak{X}[\sigma]^\beta \,\Omega_{\beta\alpha}.
\end{equation}
With this notation, the infinitesimal variation in \eqref{eq: infinitesimal symmetry transformation} can be written as
\begin{equation}
    \delta_{\sigma} = \imath_{\mathfrak{X}[\sigma]} \,\delta.
\end{equation}
According to Cartan's magic formula
\begin{equation}
\label{eq: Cartan's magic formula}
    \mathscr{L}_{\mathfrak{X}[\sigma]} = \imath_{\mathfrak{X}[\sigma]} \,\delta + \delta \,\imath_{\mathfrak{X}[\sigma]},
\end{equation}
Eq.~\eqref{eq: imath X Omega = delta Q} implies
\begin{equation}
\label{eq: symplectomorphism}
    \mathscr{L}_{\mathfrak{X}[\sigma]} \Omega = 0.
\end{equation}
So the existence of local currents ensures that the associated symmetry transformation~\eqref{eq: general Lie group symmetry} induces a diffeomorphism preserving the symplectic form of the phase space. Such a diffeomorphism is called a symplectomorphism or a canonical transformation.

We now further reformulate the above so that we do not need assume the existence of the local current to define the charge associated with the symmetry. We consider the exponential of the Noether charge $Q[\sigma]$ as an observable
\begin{equation}
\label{eq: global charge def}
    \mathcal{Q}[\sigma] = e^{ Q[\sigma]}, \quad \sigma \in \mg.
\end{equation}
We call $\mathcal{Q}$ the global charge. The Noether charge $Q[\sigma]$ is a number, so $Q$ can be regarded as a linear function on the Lie algebra $\mg$, i.e.
\begin{align}
    Q: \mg &\to \mathbb{R}
    \\
    \sigma &\to Q[\sigma].
\end{align}
Then the global charge $\mathcal{Q}$ is an exponential function on $\mg$. In terms of the global charge $\mathcal{Q}$, Eq.~\eqref{eq: delta O = Poisson (Q, O)} and Eq.~\eqref{eq: imath X Omega = delta Q} can be rewritten as~\footnote{As a side remark, the way to bootstrap the Wilson line bracket~\eqref{eq: universal WW Poisson} is not simply a trick. The Poisson bracket associated with the asymptotic symmetry of the boundary gravitons is Eq.~\eqref{eq: A phi, W Poisson} in which $\oint_{\bdy}\tr (A \phi) $ is interpreted as the charge $Q$ and the Wilson line $W|_{x,y}$ is interpreted as the observable $\mathcal{O}$ that the asymptotic symmetry acts on. However, one can read the Poisson bracket~\eqref{eq: A phi, W Poisson} in an opposite way such that $W|_{x,y}$ is interpreted as the global charge $\mathcal{Q}$ and $\oint_{\bdy}\tr (A \phi) $ is interpreted as the observable $\mathcal{O}$ that the nonlocal symmetry associated with the global charge $W|_{x,y}$ acts on. This opposite perspective becomes clear if one compares Eq.~\eqref{eq: global charge Poisson} with Eq.~\eqref{eq: ODE}.}
\begin{equation}
\label{eq: global charge Poisson}
    \delta_\sigma \mathcal{O} = ( \mathcal{Q}^{-1}\{ \mathcal{Q}, \mathcal{O} \} ) [\sigma], \quad \forall \mathcal{O} \in C^\infty (\mathcal{P}),
\end{equation}
and
\begin{equation}
    \label{eq: imath X Omega = U^-1 delta U}
    \imath_{\mathfrak{X}[\sigma]} \Omega = (-\mathcal{Q}^{-1}\delta \mathcal{Q} ) [\sigma].
\end{equation}
Remarkably, the equivalence between Eq.~\eqref{eq: global charge Poisson} and Eq.~\eqref{eq: imath X Omega = U^-1 delta U} holds regardless of the existence of local currents. See Appendix~\ref{app: Definitions of Symmetries} for a proof. We can hence conversely define symmetries and the associated charges by Eq.~\eqref{eq: global charge Poisson} or \eqref{eq: imath X Omega = U^-1 delta U}. More precisely, we call the Lie group action~\eqref{eq: general Lie group symmetry} a symmetry if there exists an observable $\mathcal{Q}$ such that Eq.~\eqref{eq: global charge Poisson} or \eqref{eq: imath X Omega = U^-1 delta U} holds for all $\sigma \in \mg$ with $\delta_\sigma$ and $\mathfrak{X}[\sigma]$ induced by the symmetry transformation. $\mathcal{Q}$ is then called the global charge associated with the symmetry.

Note that the notion of symmetry defined by Eq.~\eqref{eq: global charge Poisson} or~\eqref{eq: imath X Omega = U^-1 delta U} generalizes the one based on the Lagrangian invariance. If a global charge cannot be constructed by a local current via Eqs.~\eqref{eq: Q = int j} and~\eqref{eq: global charge def}, then we call the associated symmetry a nonlocal symmetry. Recall that the existence of a local current makes a symmetry transformation \eqref{eq: general Lie group symmetry} a symplectomorphism according to Eq.~\eqref{eq: symplectomorphism}. In general, Eq.~\eqref{eq: symplectomorphism} does not hold for nonlocal symmetries. This provides a useful criterion for determining the nonlocality of a symmetry: if a Lie group symmetry does not generate a canonical transformation, then it is nonlocal. It is usually believed that Eq.~\eqref{eq: symplectomorphism} must hold for a symmetry. Since the unitary operator corresponding to the symmetry at the quantum level preserves commutator, the unitary operator must reduce to a canonical transformation preserving the Poisson bracket. However, if the parameter describing the symmetry transformation rule is also quantized to be an operator, the symmetry group will inherit a Poisson structure in the classical limit such that Eq.~\eqref{eq: symplectomorphism} does not hold.

\subsection{Nonlocal Symmetries of One-Sided Black Holes}
\label{sec: Nonlocal Symmetries of One-Sided Black Holes}

In this subsection, we illustrate that the puncture large gauge transformation \eqref{eq: puncture gauge transformation} is a nonlocal symmetry. According to Section~\ref{sec: Reformulation of Lie Group Symmetries}, we need to show that there exists a corresponding global charge which cannot be constructed from a local current.

By definition~\eqref{eq: m def}, the puncture large gauge transformation \eqref{eq: puncture gauge transformation} acts on the puncture monodromy via adjoint action
\begin{align}
\label{eq: adjoint action on m}
    \mathcal{P}_\bullet \times \tx{PSL}(2, \mathbb{R}) &\to \mathcal{P}_\bullet
    \\
    (m, h) &\to h^{-1} m h.
\end{align}
To derive the global charge by applying Eq.~\eqref{eq: imath X Omega = U^-1 delta U} to the symplectic form~\eqref{eq: Omega_bullet}, we need to consider the infinitesimal adjoint action on $m_\pm$. The finite form of the adjoint action \eqref{eq: adjoint action on m} on $m_\pm$ is 
\begin{equation}
\label{eq: adjoint on m_pm}
    m_\pm \to (h^{-1} m h)_\pm.
\end{equation}
$(h^{-1} m h)_\pm$ is defined by applying the STS decomposition \eqref{eq: Borel decomposition} to $h^{-1} m h$. The puncture large gauge transformation \eqref{eq: adjoint action on m} induces an infinitesimal transformation $\forall \sigma\in \mathfrak{sl}(2, \mathbb{R})$,
\begin{equation}
    m \to m + \delta_\sigma m
\end{equation}
with
\begin{equation}
\label{eq: infinitesimal adjoint action on m}
    \delta_{\sigma} m = \frac{2\pi}{k} [ m, \sigma ].
\end{equation}
The corresponding infinitesimal form of the gauge transformation~\eqref{eq: adjoint on m_pm} is given by
\begin{align}
    \delta_\sigma m_- &= \frac{2 \pi}{k} r_- ( m_+ \sigma m_+^{-1} - m_- \sigma m_-^{-1} ) m_-,
    \label{eq: infinitesimal adjoint action on m_-}
    \\
    \delta_\sigma m_+ &= \frac{2\pi}{k} r_+ ( m_+ \sigma m_+^{-1} - m_- \sigma m_-^{-1}  ) m_+.
    \label{eq: infinitesimal adjoint action on m_+}
\end{align}
$r_\pm (\bullet)$ denotes the contraction defined by
\begin{equation}
\label{eq: r_pm (sigma) def}
    r_\pm (\sigma) \equiv (r_\pm)^{ab} \sigma_b t_a, \quad \forall \sigma \in \mathfrak{sl}(2, \mathbb{R}).
\end{equation}
$r_\pm \in \mathfrak{sl}(2, \mathbb{R}) \otimes \mathfrak{sl}(2, \mathbb{R})$ is defined by Eq.~\eqref{eq: r_pm def}. $\{t_a\}$ is a basis of $\mathfrak{sl}(2, \mathbb{R})$. $\sigma_b \in \mathbb{R}$ is the coefficient in the expansion
\begin{equation}
    \sigma = \sigma_a t^a.
\end{equation}
$\{t^a\}$ is a basis of $\mathfrak{sl}(2, \mathbb{R})$ dual to $\{t_a\}$ in the sense of Eq.~\eqref{eq: dual basis}. Contracted indices are summed over the basis. The consistency of Eqs.~\eqref{eq: Borel decomposition}, \eqref{eq: infinitesimal adjoint action on m}, \eqref{eq: infinitesimal adjoint action on m_-}, and \eqref{eq: infinitesimal adjoint action on m_+} can be verified by a straightforward calculation using Eqs.~\eqref{eq: sl2 r-matrix}, \eqref{eq: sl2 generators}, \eqref{eq: sl2 K}, \eqref{eq: m-}, and \eqref{eq: m+}. See Appendix~\ref{app: Infinitesimal Adjoint Action and the STS Decomposition} for a simpler but less direct derivation.

According to Eq.~\eqref{eq: Lie derivative def}, the puncture large gauge transformation \eqref{eq: puncture gauge transformation} induces a vector field $\mathfrak{X}[\sigma]$ for every $\sigma \in \mathfrak{sl}(2, \mathbb{R})$. It acts on $W_0$ (i.e. the Wilson line connecting the puncture to the base point $x^+ = 0$ on the boundary circle $\bigcirc$) via
\begin{equation}
\label{eq: puncture gauge transformation on W_0}
    \mathscr{L}_{\mathfrak{X}[\sigma]} W_0 = \frac{2\pi}{k}W_0 \sigma.
\end{equation}
Using Eqs.~\eqref{eq: Omega_bullet}, \eqref{eq: Lie derivative def}, \eqref{eq: infinitesimal adjoint action on m_-}, \eqref{eq: infinitesimal adjoint action on m_+}, and \eqref{eq: puncture gauge transformation on W_0}, we have
\begin{align}
\label{eq: imath X[sigma] Omega}
    \imath_{\mathfrak{X}[\sigma]} \Omega_\bullet 
    = 
    -\tr ( m_-^{-1} \delta m_- \sigma ) +\tr ( m_+^{-1} \delta m_+ \sigma ) .
\end{align}
It is not straightforward to verify Eq.~\eqref{eq: imath X[sigma] Omega}. A detailed derivation is provided in Appendix~\ref{app: Global Charges of Nonlocal Symmetries}. 

The puncture large gauge transformation~\eqref{eq: puncture gauge transformation} is a symmetry since Eq.~\eqref{eq: imath X[sigma] Omega} is in fact of the form Eq.~\eqref{eq: imath X Omega = U^-1 delta U}. To make it explicit, note that
\begin{equation}
\label{eq: Q-1 delta Q = (,)}
    \mathcal{Q}^{-1} \delta \mathcal{Q} = ( m_-^{-1} \delta m_-, m_+^{-1} \delta m_+ ) 
\end{equation}
if we define
\begin{equation}
\label{eq: Q def PSL^*}
    \mathcal{Q} \equiv ( m_-, m_+ )
\end{equation}
as a group element in $\tx{PSL}(2, \mathbb{R}) \times \PSL$. According to Eqs.~\eqref{eq: m-} and \eqref{eq: m+}, $\mathcal{Q}$ is valued in a subgroup of $\PSL \times \PSL$ defined by
\begin{equation}
\label{eq: dual PSL def}
    \PSL^* \equiv \{ ( -\left(
\begin{array}{cc}
 m_{\mathrm{K}}^{-1} & 0 \\
 m_{\mathrm{e}} & m_{\mathrm{K}} \\
\end{array}
\right), \left(
\begin{array}{cc}
 m_{\mathrm{K}}  & m_{\mathrm{f}} \\
 0 & m_{\mathrm{K}}^{-1} \\
\end{array}
\right) ) \in \PSL \times \PSL \}.
\end{equation}
The Lie algebra $\psl^*$ of $\PSL^*$ is isomorphic to $\psl$ as a linear space, so $\psl^*$ can be identified as the dual (i.e. the space of linear functions) of $\psl$ by defining 
\begin{align}
    \psl^* \times \psl &\to \mathbb{R}
    \\
    (\chi, \sigma) &\to \chi[ \sigma ] \equiv -\tr ( (\chi_+ - \chi_-) \sigma ),
    \label{eq: chi[sigma]}
\end{align}
$\forall \chi= (\chi_-, \chi_+) \in \psl^*$ and $\sigma \in \psl$. Using Eqs.~\eqref{eq: Q-1 delta Q = (,)} and \eqref{eq: chi[sigma]}, one can rewrite Eq.~\eqref{eq: imath X[sigma] Omega} in the form of Eq.~\eqref{eq: imath X Omega = U^-1 delta U}. Thus, the puncture large gauge transformation \eqref{eq: puncture gauge transformation} is a symmetry with the global charge $\mathcal{Q} \in \PSL^*$ given by Eq.~\eqref{eq: Q def PSL^*}. Using Eqs.~\eqref{eq: Cartan's magic formula}, \eqref{eq: Q def PSL^*}, and \eqref{eq: imath X[sigma] Omega}, we have
\begin{equation}
\label{eq: Lie_sigma Omega_bullet = }
    \mathscr{L}_{\mathfrak{X}[\sigma]} \Omega_\bullet = ( \mathcal{Q}^{-1} \delta \mathcal{Q} \mathcal{Q}^{-1} \delta \mathcal{Q} ) [\sigma].
\end{equation}
$\tx{PSL}(2, \mathbb{R})^*$ is nonabelian, so the right hand side of Eq.~\eqref{eq: Lie_sigma Omega_bullet = } is not zero and the puncture large gauge transformation~\eqref{eq: puncture gauge transformation} hence does not preserve the symplectic form $\Omega_\bullet$. As explained in Section~\ref{sec: Reformulation of Lie Group Symmetries}, the associated global charge~\eqref{eq: Q def PSL^*} cannot be constructed from a local current. Hence the puncture large gauge transformation~\eqref{eq: puncture gauge transformation} is a nonlocal symmetry.

In deriving the global charge~\eqref{eq: Q def PSL^*}, we consider the puncture large gauge transformation~\eqref{eq: puncture gauge transformation} acting on the interior phase space equipped with the symplectic form~\eqref{eq: Omega_bullet}. If we instead consider the full system with the symplectic form~\eqref{eq: Omega odot}, we will derive the same global charge. Recall that the full symplectic form~\eqref{eq: Omega odot} reduces to the interior symplectic form~\eqref{eq: Omega_bullet} by fixing the boundary gauge potential. Since the transformation~\eqref{eq: puncture gauge transformation} acts trivially on the boundary graviton phase space $\mathcal{P}_{\bdy}$ according to Eq.~\eqref{eq: W to A}, fixing the boundary gauge potential or not does not change the calculation result. 

As a remark, Eqs.~\eqref{eq: Banados potential}, \eqref{eq: locality}, \eqref{eq: Q def PSL^*} imply
\begin{equation}
    \{ \mathcal{Q}, L_0 \} = 0.
\end{equation}
where $L_0$ is (the chiral part of) the Hamiltonian of boundary gravitons given by Eq.~\eqref{eq: bdy graviton Hamiltonian}. So the puncture large gauge transformation~\eqref{eq: puncture gauge transformation} is also a Hamiltonian symmetry in the usual sense.

\subsection{Gauging Nonlocal Symmetries}
\label{sec: Gauging Nonlocal Symmetries}

For an exterior observer, the information of observables of high complexity is lost in the large $N$ limit such that the classical algebra of simple observables has a nontrivial commutant corresponding to observables of the one-sided black hole. In this subsection, we show that the classical boundary graviton algebra $C^\infty (\mathcal{P}_{\bdy})$ and the classical monodromy algebra $C^\infty (\mathcal{P}_\bullet)$ are commutant of each other, which justifies the interpretation~\eqref{eq: interior algebra = monodromy algebra}. Boundary gravitons are described by the gauge potential $A$ and $\overline{A}$. We focus on the chiral part $A$ for simplicity. The Wilson line $W$ is reduced to the boundary gravitons via boundary reduction map $\mathcal{F}_{\bdy}$ \eqref{eq: bdy reduction map}. The Wilson line Poisson bracket \eqref{eq: simplest WW Poisson} induces a Poisson bracket on the boundary graviton phase space $\mathcal{P}_{\scalebox{0.5}{$\bigcirc$}}$ via $\mathcal{F}_{\bdy}$. As explained in Section \ref{sec: Intrinsic Incompleteness of One-Sided Boundary Gravitons}, the induced Poisson bracket on $\mathcal{P}_{\bdy}$ is degenerate. So $\mathcal{P}_{\bdy}$ itself cannot be the physical phase space (i.e. a symplectic manifold) of boundary gravitons but decomposes into physical phase subspaces (i.e. symplectic leaves) which are orbits of the large gauge transformations. If the asymptotically AdS$_3$ boundary condition is imposed, then these phase subspaces will be Virasoro coadjoint orbits. In the following, we reproduce the same result from the perspective of gauging the nonlocal symmetry~\eqref{eq: puncture gauge transformation}, which can be interpreted as filtering out the observables of high complexity according to interpretation~\eqref{eq: Asmooth= quantum monodromy algebra}.

The nonlocal symmetry~\eqref{eq: puncture gauge transformation} is a gauge symmetry in the boundary graviton phase space $\mathcal{P}_{\bdy}$ since it acts on the boundary gauge potential~\eqref{eq: A = dWW-1} trivially. Conversely, an observable $\mathcal{O} \in C^\infty (\mathcal{P}_\odot)$ is invariant under the nonlocal symmetry~\eqref{eq: puncture gauge transformation} only if $\mathcal{O}$ is not composed of Wilson lines ending at the puncture. $\mathcal{O}$ as a physical observable must be a functional of Wilson lines. Then $\mathcal{O}$ can only be a functional of the Wilson line on the circle boundary $\bigcirc$, and thus a functional of $A$. So the boundary reduction map \eqref{eq: bdy reduction map} is a map gauging the nonlocal symmetry~\eqref{eq: puncture gauge transformation} with the boundary graviton observable algebra $C^\infty (\mathcal{P}_{\bdy})$ the gauge-invariant subalgebra of $C^\infty (\mathcal{P}_\odot)$.

A symmetry is gauged if its action on physical observables is trivial and the corresponding charge is constant on the phase space. As shown in Section \ref{sec: Nonlocal Symmetries of One-Sided Black Holes}, the charge corresponding to the nonlocal symmetry~\eqref{eq: puncture gauge transformation} is the puncture monodromy. So each physical phase subspace of boundary gravitons can be identified as a space $\mathcal{L}_{\bdy} (m)$ of the gauge potential configurations corresponding to constant puncture monodromy $m$ \eqref{eq: m def}. More precisely,
\begin{equation}
\label{eq: L (m) leaf}
\mathcal{L}_{\bdy} (m)= \mathcal{F}_{\bdy} \left( \mathcal{F}_{\bullet}^{-1} (m)\right)
\end{equation}
with
$\mathcal{F}_{\bdy}$ the boundary reduction map \eqref{eq: bdy reduction map} and $\mathcal{F}_{\bullet}^{-1} (m)$ the preimage of $m$ under the puncture reduction map~\eqref{eq: puncture reduction map}.

Now we describe $\mathcal{L}_{\scalebox{0.5}{$\bigcirc$}}$ in detail. The preimage $\mathcal{F}_{\bullet}^{-1} (m) \subset \mathcal{P}_\odot$ of $m$ is composed of all Wilson lines corresponding to fixed puncture monodromy $m$, i.e.
\begin{equation}
    \mathcal{F}_{\bullet}^{-1} (m) = \{ W \in \mathcal{P}_\odot;\, W^{-1}|_{x} W|_{2 \pi} = m \}.
\end{equation}
In the full phase space $\mathcal{P}_\odot$, any pair of Wilson lines $W$ and $W'$ with the same puncture monodromy are related by a boundary large gauge transformation~\eqref{eq: boundary large gauge transformation} given by
\begin{equation}
    g = W' W^{-1}.
\end{equation}
So every $W \in \mathcal{F}_{\bullet}^{-1} (m) $ can be decomposed as
\begin{equation}
    W = g \check{W}
\end{equation}
where $g$ is a boundary large gauge transformation and $\check{W} \in \mathcal{P}_\odot$ is an arbitrary given representative whose puncture monodromy is $m$. So $\mathcal{F}_{\bullet}^{-1} (m)$ is an orbit of the boundary large gauge transformation acting on $\check{W}$. Using Eq.~\eqref{eq: W to A}, we have
\begin{equation}
    \mathcal{F}_{\bdy} ( g \check{W} ) = dg g^{-1} +  g  \check{A} g^{-1}
\end{equation}
with
\begin{equation}
\label{eq: below}
    \check{A} = d \check{W} \check{W}^{-1}.
\end{equation}
So $\mathcal{L}_{\scalebox{0.5}{$\bigcirc$}} (m) = \mathcal{F}_{\bdy} \left( \mathcal{F}_{\bullet}^{-1} (m) \right)$ is an orbit of the boundary large gauge transformations~\eqref{eq: boundary large gauge transformation} acting on the representative gauge potential $\check{A}$.

Without the asymptotically AdS$_3$ boundary condition \eqref{eq: AdS boundary condition 1}, $\mathcal{L}_{\scalebox{0.5}{$\bigcirc$}} (m) $ would be a $\PSL$ Kac-Moody coadjoint orbit. After imposing Eq.~\eqref{eq: AdS boundary condition 1}, the large gauge transformation \eqref{eq: gauge transformation} reduces to the Virasoro coadjoint action on the stress tensor $T$~\cite{Polyakov:1989dm, Banados:1998gg}

\begin{equation}
\label{eq: delta T}
    \delta_\phi T = vT'  + 2 v' T  - \frac{ v'''}{8 G_{\tx{N}}} 
\end{equation}
where $v$ is a periodic function on the boundary and is related to $\phi$ in Eq.~\eqref{eq: gauge transformation} by 
\begin{equation}
    \phi = \begin{pmatrix}
        \frac{1}{2} v'   &  -\frac{4 G_{\tx{N}} T}{r} v + \frac{1}{2r} v''
        \\
        -r v & -\frac{1}{2} v'
    \end{pmatrix}.
\end{equation}
$r$ is the radial coordinate of the holographic boundary. Eq.~\eqref{eq: delta T} is the infinitesimal version of the Virasoro coadjoint action~\eqref{eq: global Virasoro coadjoint action}. With the asymptotically AdS$_3$ boundary condition \eqref{eq: AdS boundary condition 1}, the physical phase subspaces of boundary gravitons are hence Virasoro coadjoint orbits. We thus have reproduced the intrinsic incompleteness of the boundary gravitons discussed in Section~\ref{sec: Intrinsic Incompleteness of One-Sided Boundary Gravitons}.

We have shown that gauging the nonlocal symmetry~\eqref{eq: puncture gauge transformation} gives rise to the boundary graviton observable algebra $C^\infty (\mathcal{P}_{\bdy})$ as the gauge-invariant subalgebra. Using Eqs.~\eqref{eq: global charge Poisson} and \eqref{eq: Q-1 delta Q = (,)}, it is equivalent to say that $C^\infty (\mathcal{P}_{\bdy})$ is the commutant of the monodromy observable algebra $C^\infty (\mathcal{P}_\bullet)$ since the puncture monodromy is the charge corresponding to the nonlocal symmetry~\eqref{eq: puncture gauge transformation}. To be clear, $C^\infty (\mathcal{P}_{\bdy})$ is the algebra of elements in the full observable algebra $C^\infty (\mathcal{P}_\odot)$ that are Poisson-commuting with all elements in $C^\infty (\mathcal{P}_\bullet)$. In Section~\ref{sec: Interior Phase Spaces from Gauging Asymptotic Symmetries}, we showed that gauging the asymptotic symmetry~\eqref{eq: ASG action} gives rise to the monodromy observable algebra $C^\infty (\mathcal{P}_\bullet)$ as the gauge-invariant subalgebra. Using Eqs.~\eqref{eq: KAc-Moody charge} and \eqref{eq: general Noether charge}, it is equivalent to say that $C^\infty (\mathcal{P}_\bullet)$ is the commutant of $C^\infty (\mathcal{P}_{\bdy})$ since the boundary gauge potential is the charge corresponding to the asymptotic symmetry~\eqref{eq: ASG action}. Thus, $C^\infty (\mathcal{P}_\bullet)$ and $C^\infty (\mathcal{P}_{\bdy})$ are commutant of each other in $C^\infty (\mathcal{P}_\odot)$. This justifies the interpretation~\eqref{eq: interior algebra = monodromy algebra}. As a corollary, $C^\infty (\mathcal{P}_\bullet)$ and $C^\infty (\mathcal{P}_{\bdy})$ share the same center since the center is the overlap between two mutual commutants. Wilson lines in $C^\infty (\mathcal{P}_\bullet)$ (resp. $C^\infty (\mathcal{P}_{\bdy})$) are not connected to the asymptotic boundary (resp. the puncture). The center as the overlap is hence generated by the Wilson loops. Since $\psl$ is a Lie algebra of rank 1, Wilson loops of different winding numbers are functionally dependent. The center is hence generated by the Wilson loop of winding number 1.

The presence of the nontrivial center results in  superselection sectors for the boundary gravitons. Each superselection sector corresponds to a physical phase subspace $\mathcal{L}_{\bdy}(m)$. The information of the BTZ black hole geometry is encoded in the Wilson loops~\eqref{eq: BTZ wilson loop} up to boundary graviton fluctuations. Denote by $\tr_q\, M$ and $| \lambda \rangle$ the quantization of the Wilson loop and its eigenstates respectively. According to Eq.~\eqref{eq: BTZ wilson loop}, the classical Wilson loops are real numbers for black holes, so their quantization should be hermitian operators which implies
\begin{equation}
    \langle \lambda_1 | \lambda_2 \rangle = 0, \quad \lambda_1 \ne \lambda_2.
\end{equation}
As a central element, the Wilson loop commutes with every boundary graviton observable $\mathcal{O}$. Hence $\mathcal{O}$ does not change the eigenvalue of the quantized Wilson loop, i.e.
\begin{equation}
\label{eq: <lambda1 | O | lambda2> = 0}
    \langle \lambda_1 | \mathcal{O} | \lambda_2 \rangle = 0, \quad \forall \mathcal{O} \in \mathcal{A}_{\bdy},
\end{equation}
where $\mathcal{A}_{\bdy}$ denotes the quantization of the boundary graviton observable algebra $C^\infty (\mathcal{P}_{\bdy})$. Thus, based on boundary graviton observables, one cannot distinguish a superposition $|\psi \rangle =  | \lambda_1 \rangle + | \lambda_2 \rangle$ of pure states corresponding to different geometries from a mixed state $\rho = | \lambda_1 \rangle \langle \lambda_1 | + | \lambda_2 \rangle \langle \lambda_2 | $ corresponding to the same geometries, i.e.
\begin{equation}
    \langle \psi |\mathcal{O}  | \psi \rangle = \tr ( \rho \,\mathcal{O} ) , \quad \forall \mathcal{O} \in \mathcal{A}_{\bdy}.
\end{equation}
Since the Wilson loop commutes with both of the puncture monodromy and the boundary gravitons, Eq.~\eqref{eq: <lambda1 | O | lambda2> = 0} is only broken for Wilson lines $W$ connecting the puncture to the boundary, i.e.
\begin{equation}
\label{eq: <lambda1 | W | lambda2> ne 0}
    \langle \lambda_1 | W | \lambda_2 \rangle \ne 0.
\end{equation}
To detect the quantum superposition of different geometries, one hence must be able to access the puncture monodromy~\eqref{eq: m def}, which is expected to be of exponentially high complexity for an exterior observer. Thus, we expect that implementing Wilson line operators changing geometries is at least of exponentially high complexity. Eq.~\eqref{eq: <lambda1 | W | lambda2> ne 0} in fact provides a (2+1)-d gravity realization of the quantum necromancy theorem~\cite{Aaronson:2020ncs} which states that, for a qubit model, the complexity of distinguishing the superposition state $|\psi \rangle$ from the mixed state $\rho$ is essentially the same as the complexity of changing $|\lambda_1 \rangle$ into $|\lambda_2 \rangle$.

As a side remark, the boundary graviton phase space $\mathcal{P}_{\bdy}$ and the monodromy phase space~$\mathcal{P}_\bullet$ form a structure called a symplectic dual pair~\cite{kazhdan1978hamiltonian, guillemin1980moment, weinstein1983local}. More precisely, given a symplectic manifold~$\mathcal{P}_\odot$ and a pair $(\mathcal{F}_{\bdy}, \mathcal{F}_\bullet)$ of maps preserving Poisson brackets, if the the spaces of smooth functions on images $\mathcal{P}_{\bdy} = \mathcal{F}_{\bdy} (\mathcal{P}_\odot)$ and $\mathcal{P}_{\bullet} = \mathcal{F}_{\bullet} (\mathcal{P}_\odot)$ are commutant of each other, then $(\mathcal{P}_{\bdy}, \mathcal{P}_\bullet)$ is called a symplectic dual pair. There is a correspondence among symplectic leaves for a symplectic dual pair \cite{weinstein1983local}: every element $m \in \mathcal{P}_\bullet$ (resp. $A \in \mathcal{P}_{\scalebox{0.5}{$\bigcirc$}}$) corresponds to a symplectic leaf $\mathcal{L}_{\scalebox{0.5}{$\bigcirc$}} (m) \subset \mathcal{P}_{\scalebox{0.5}{$\bigcirc$}}$ (resp. $\mathcal{L}_\bullet (A) \subset \mathcal{P}_\bullet$) via Eq.~\eqref{eq: L (m) leaf} (resp. Eq.~\eqref{eq: L (A) leaf}).

\section{Emergent Wormholes from the Large-$N$ Filter}
\label{sec: Emergent Wormholes from the Large-$N$ Filter}

\subsection{Filtering out Erratic Observables of One CFT}
\label{sec: Filtering One CFT}

In this subsection, we show that filtering out the erratic observables for one CFT is equivalent to gauging the nonlocal symmetries~\eqref{eq: puncture gauge transformation}. The filtered Hilbert space $\mathcal{F}\mathcal{H}_\odot$, as the gauge-invariant subspace, describes a global AdS$_3$ spacetime with boundary gravitons. 

Recall that the algebra of the puncture monodromy~\eqref{eq: m def} contains a subalgebra $U_{\tilde{q}}(\psl)$ of erratic observables $m_{\tilde{\mathrm{e}}}$ $m_{\tilde{\mathrm{f}}}$, and $m_{\widetilde{\mathrm{K}}}$ defined by Eq.~\eqref{eq: dual m def} and satisfying Eqs.~\eqref{eq: dual U_q 1}, \eqref{eq: dual U_q 2}, and \eqref{eq: dual U_q 3}. Using the algebra $U_{\tilde{q}}(\psl)
$, calculations involving these erratic observables in general produce erratic factors in terms of $\tilde{q}$ defined by Eq.~\eqref{eq: qdual def}. The erratic large-$N$ behavior of $\tilde{q}$ is illustrated in Figure~\ref{fig: erratic q}. To filter out such erratic observables, we to restrict the full Hilbert space $\mathcal{H}_\odot$ to a subspace $\mathcal{F}\mathcal{H}_\odot$ on which the erratic observables $m_{\tilde{\mathrm{e}}}$ $m_{\tilde{\mathrm{f}}}$, and $m_{\widetilde{\mathrm{K}}}$ act trivially, i.e.
\begin{equation}
\label{eq: mdual constraints}
    m_{\tilde{\mathrm{e}}} = \mu_{\tilde{\mathrm{e}}} ,\quad m_{\tilde{\mathrm{f}}} = \mu_{\tilde{\mathrm{f}}} ,\quad m_{\widetilde{\mathrm{K}}} = \mu_{\widetilde{\mathrm{K}}} ,
\end{equation}
where $\mu_{\tilde{\mathrm{e}}}$, $\mu_{\tilde{\mathrm{f}}}$, and $\mu_{\widetilde{\mathrm{K}}}$ are three fixed numbers. Notice that $\mu_{\tilde{\mathrm{e}}}$, $\mu_{\tilde{\mathrm{f}}}$, and $\mu_{\widetilde{\mathrm{K}}}$ are not arbitrary since $m_{\tilde{\mathrm{e}}}$ $m_{\tilde{\mathrm{f}}}$, and $m_{\widetilde{\mathrm{K}}}$ are not simultaneously diagonalizable. The compatibility between Eq.~\eqref{eq: dual U_q 3} and Eq.~\eqref{eq: mdual constraints} implies
\begin{equation}
\label{eq: mK dual  = i, ii, iii, iiii}
    m_{\widetilde{\mathrm{K}}} = e^{\frac{i\pi}{2} n_p}, \quad n_p \in \mathbb{Z}.
\end{equation}
The compatibility among Eqs.~\eqref{eq: dual U_q 1}, \eqref{eq: mdual constraints}, and \eqref{eq: mK dual  = i, ii, iii, iiii} implies
\begin{equation}
\label{eq: me dual  = 0}
    m_{\tilde{\mathrm{e}}} = 0.
\end{equation}
The compatibility among Eqs.~\eqref{eq: dual U_q 2}, \eqref{eq: mdual constraints}, and \eqref{eq: mK dual  = i, ii, iii, iiii} implies
\begin{equation}
\label{eq: m_f dual  = 0}
    m_{\tilde{\mathrm{f}}} = 0.
\end{equation}

Now we show that filtering out the erratic observables $m_{\tilde{\mathrm{e}}}$ $m_{\tilde{\mathrm{f}}}$, and $m_{\widetilde{\mathrm{K}}}$ is equivalent to gauging the nonlocal symmetry~\eqref{eq: puncture gauge transformation}. Using Eqs.~\eqref{eq: mk quantum Kashaev}, \eqref{eq: me quantum Kashaev}, \eqref{eq: mf quantum Kashaev}, and \eqref{eq: dual m def}, one can prove that~\footnote{Eqs.~\eqref{eq: mk dual quantum Kashaev}, \eqref{eq: me dual quantum Kashaev}, and \eqref{eq: mf dual quantum Kashaev} were proved in terms of the standard realization of the modular double~\cite{Bytsko:2002br}. Eqs.~\eqref{eq: mk quantum Kashaev}, \eqref{eq: me quantum Kashaev}, and \eqref{eq: mf quantum Kashaev} is called the Whittaker model of the modular double and is unitarily equivalent to the standard realization~\cite{Nidaiev:2013bda}.}
\begin{align}
    m_{\widetilde{\mathrm{K}}} &= e^{\pi \tilde{b} p},
    \label{eq: mk dual quantum Kashaev}
    \\
    m_{\tilde{\mathrm{e}}} &= e^{\pi \tilde{b} (p-2x)},
    \label{eq: me dual quantum Kashaev}
    \\
    m_{\tilde{\mathrm{f}}} &= e^{\pi \tilde{b} (x-\frac{p}{2})} \left(2 \cosh (2 \pi \tilde{b} \lambda) + 2 \cosh(2 \pi \tilde{b} p)\right) e^{\pi \tilde{b} (x-\frac{p}{2} )},
    \label{eq: mf dual quantum Kashaev}
\end{align}
where $\tilde{b}$ is defined by
\begin{equation}
\label{eq: b dual def}
    \tilde{b} \equiv \frac{1}{\hbar} (k - 2 \hbar)^{\frac{1}{2}}.
\end{equation}
Combining Eqs.~\eqref{eq: mK dual  = i, ii, iii, iiii} and \eqref{eq: mk dual quantum Kashaev} yields
\begin{equation}
\label{eq: p = n_p i hbar b/2}
      p =  \frac{i n_p  }{2\tilde{b}}, \quad n_p \in \mathbb{Z}.
\end{equation}
Eqs.~\eqref{eq: me dual  = 0}, \eqref{eq: me dual quantum Kashaev}, and \eqref{eq: p = n_p i hbar b/2}   imply
\begin{equation}
\label{eq: x= infty}
    x = \infty.
\end{equation}
Eqs.~\eqref{eq: me quantum Kashaev} and \eqref{eq: x= infty} imply
\begin{equation}
\label{eq: m_e = 0}
    m_{\mathrm{e}} = 0.
\end{equation}
The compatibility between Eqs.~\eqref{eq: [m_e, m_f]} and \eqref{eq: m_e = 0} requires
\begin{equation}
\label{eq: m_K = iiiii}
    m_{\mathrm{K}} = e^{\frac{i\pi}{2} n_b }, \quad n_b \in \mathbb{Z}.
\end{equation}
Combining Eqs.~\eqref{eq: mk quantum Kashaev}, \eqref{eq: b def}, \eqref{eq: b dual def}, \eqref{eq: p = n_p i hbar b/2}, and \eqref{eq: m_K = iiiii} gives
\begin{equation}
\label{eq: quantized k}
    k = (\frac{n_p}{n_b + 4n_k} + 2)\hbar, \quad n_k \in \mathbb{Z}.
\end{equation}
The unitarity of the theory requires $n_p/(n_b+ 4n_k) >0$. So the Chern-Simons level $k$ is quantized to consistently filter out the erratic observables. The compatibility between Eqs.~\eqref{eq: m_K m_f = q m_f m_K} and \eqref{eq: m_K = iiiii} requires
\begin{equation}
\label{eq: m_f = 0}
    m_{\mathrm{f}} = 0.
\end{equation}
According to Eqs.~\eqref{eq: m_e = 0}, \eqref{eq: m_K = iiiii}, and \eqref{eq: m_f = 0}, consistently filtering out the erratic observables fixes $m_{\mathrm{K}}$, $m_{\mathrm{e}}$, and $m_{\mathrm{f}}$. Thus, Eqs.~\eqref{eq: Q def PSL^*} and \eqref{eq: dual PSL def} imply that the corresponding nonlocal symmetry~\eqref{eq: puncture gauge transformation} is gauged. Conversely, by switching $b$ with $\tilde{b}$ in the above discussion, one can show that gauging the nonlocal symmetry~\eqref{eq: puncture gauge transformation} filters out the erratic observables $m_{\tilde{\mathrm{e}}}$ $m_{\tilde{\mathrm{f}}}$, and $m_{\widetilde{\mathrm{K}}}$.

Using parametrization~\eqref{eq: m_e,f,k parameterization},
the constraints~\eqref{eq: m_e = 0}, \eqref{eq: m_K = iiiii}, and \eqref{eq: m_f = 0} can be compactly written as
\begin{equation}
\label{eq: m = 1}
    m  = \mathbf{1}
\end{equation}
with $m$ a matrix-valued operator
\begin{equation}
\label{eq: m matrix operator def}
    m \equiv \begin{pmatrix}
        m_{\mathrm{K}}^2 & m_{\mathrm{K}} m_{\mathrm{f}}
        \\
        -m_{\mathrm{e}} m_{\mathrm{K}} & m_{\mathrm{K}}^{-2} -m_{\mathrm{f}}m_{\mathrm{e}} 
    \end{pmatrix}
\end{equation}
and with $\mathbf{1}$ the identity matrix
\begin{equation}
\label{eq: 1 matrix def}
    \mathbf{1} \equiv \begin{pmatrix}
        1 & 0
        \\
        0 & 1
    \end{pmatrix}.
\end{equation}
Notice that $\mathbf{1}$ is identified with $-\mathbf{1}$ in $\PSL$. Denote by $\mathcal{F}\mathcal{H}_\odot$ the filtered subspace defined by the constraint~\eqref{eq: m = 1}. Since gauging the nonlocal symmetry~\eqref{eq: puncture gauge transformation} reduces the full observable algebra
to the boundary graviton observable algebra as shown in Section~\ref{sec: Gauging Nonlocal Symmetries}, the filtered observable algebra $\tx{End}(\mathcal{F}\mathcal{H}_\odot)$, as the gauge-invariant observable algebra, is isomorphic to the boundary graviton observable algebra $\mathcal{A}_{\bdy}$,
\begin{equation}
\label{eq: Abdy cong End(FH_odot)}
    \mathcal{A}_{\bdy} \cong \tx{End}(\mathcal{F}\mathcal{H}_\odot).
\end{equation}
The constraint~\eqref{eq: m = 1} corresponds to the trivial class of the interior phase spaces discussed in Section~\ref{sec: Trivial class}, which excludes black hole solutions. So filtering out erratic observables for one CFT effectively filters out all black holes states. If the Euler class $n$ associated with the trivial class is set as $\pm1$ as in Eq.~\eqref{eq: n = -+1}, then $\mathcal{F}\mathcal{H}_\odot$ describes a global AdS$_3$ spacetime with boundary gravitons. See Figure~\ref{fig: Filtering one CFT} for an illustration.
\begin{figure}
    \centering
    \includegraphics[width=0.55\linewidth]{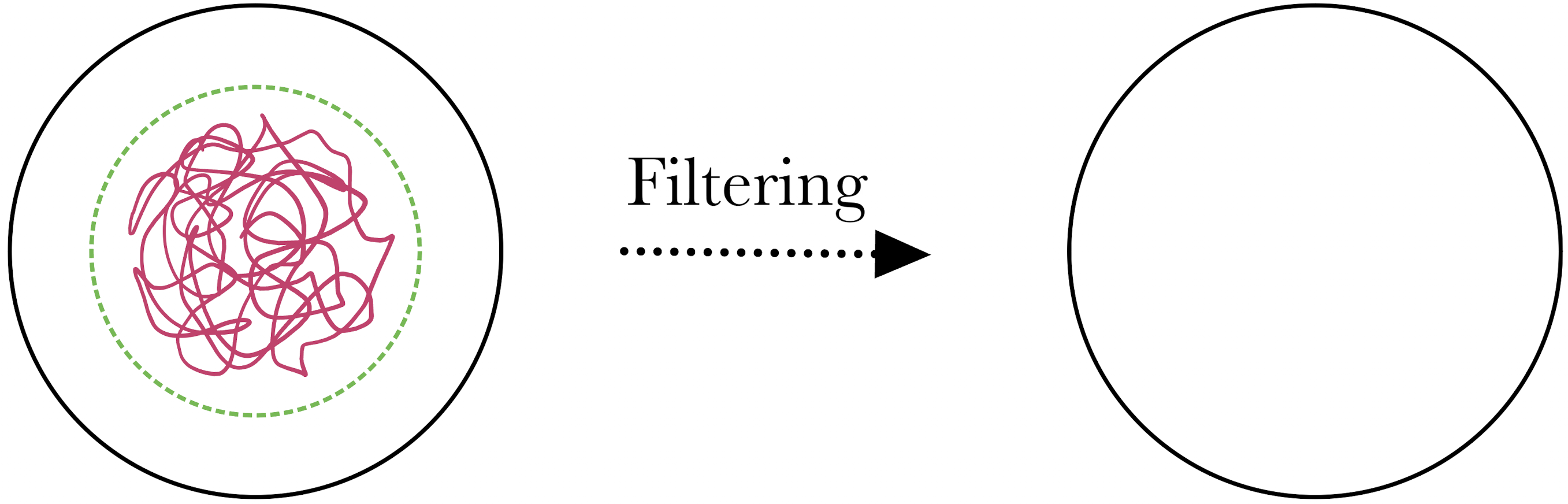}
    \caption{On the left is a schematic illustration of a bulk Cauchy slice dual to one CFT. The outer black circle represents boundary gravitons. The green dashed circle represents the horizon. The red mess inside the horizon represents the black hole degrees of freedom characterized by the erratic observables $m_{\tilde{\mathrm{e}}}$ $m_{\tilde{\mathrm{f}}}$, and $m_{\widetilde{\mathrm{K}}}$. The right picture represents the bulk Cauchy slice dual to the filtered CFT. The black hole states are filtered out. The remaining is the boundary gravitons associated with the global AdS$_3$.}
    \label{fig: Filtering one CFT}
\end{figure}
The tensor product $\mathcal{F}\mathcal{H}_\odot \otimes \mathcal{F}\mathcal{H}_\odot$ hence describes two disconnected global AdS$_3$ spacetimes. In Section~\ref{sec: Apparent Ensemble Averaging over Entanglers}, we will show that the gauge-invariant observable algebra of two CFTs is larger than the boundary graviton algebra because of contribution from the smooth remnant of correlations of erratic observables. 

In terms of the rescaled monodromy variables~\eqref{eq: resacled m def}, the constraint~\eqref{eq: m = 1} is equivalent to
\begin{equation}
\label{eq: quantum constraint in terms of rescaled monodromies}
    m_{\mathrm{E}}  = m_{\mathrm{F}} =m_{\mathrm{H}}  = 0.
\end{equation}
Denote by $\mathcal{U}$ the unitary group generated by the rescaled monodromy variables,
\begin{equation}
\label{eq: cal U def}
    \mathcal{U} \equiv \{ \exp ( i \sum_{\alpha = \mathrm{E}, \mathrm{F},\mathrm{H}} X^\alpha m_{\alpha}  );\, X^\alpha \in \mathbb{R} \}.
\end{equation}
Denote by $\mathcal{H}_\odot$ the full Hilbert space of the CFT. Then the projection $\mathcal{F}: \mathcal{H}_\odot \to \mathcal{F}\mathcal{H}_\odot$ implementing the constraint~\eqref{eq: quantum constraint in terms of rescaled monodromies} can be constructed as
\begin{equation}
\label{eq: one CFT filter}
    \mathcal{F}  \equiv \underset{U \in \mathcal{U}}{\int} dU\,U ,
\end{equation}
with the left-invariant Haar measure $dU$ normalized as~\footnote{One may worry that the Haar measure does not exist due to the non-compactness of $U_q(\psl)$. However, a left-invariant Haar measure for the modular double $U_{q\tilde{q}} (\psl) $ is well-defined from a dual perspective~\cite{Bytsko:2002br}. I believe the two dual perspectives are essentially equivalent and leave a more rigorous treatment for future works.}
\begin{equation}
\label{eq: Haar measure normalization}
    \underset{U \in \mathcal{U}}{\int} dU = 1.
\end{equation}
The left-invariance of the Haar measure $dU$ implies that the filtered Hilbert space $\mathcal{F}\mathcal{H}_\odot $ is $\mathcal{U}$-invariant, i.e.
\begin{equation}
    U \mathcal{F}  = \mathcal{F} ,\quad \forall U \in \mathcal{U},
\end{equation}
which is equivalent to the constraint~\eqref{eq: quantum constraint in terms of rescaled monodromies}. The erratic observables are also filtered out by the projection $\mathcal{F}$~\eqref{eq: one CFT filter} because of the equivalence between filtering out the erratic observables and gauging the nonlocal symmetry~\eqref{eq: puncture gauge transformation}. The projection $\mathcal{F}$ hence can also be constructed in terms of the erratic observables,
\begin{equation}
    \mathcal{F}  \equiv \underset{U \in \widetilde{ \mathcal{U}}}{\int} dU\,U ,
\end{equation}
where $\widetilde{\mathcal{U}}$ is the unitary group
\begin{equation}
    \widetilde{\mathcal{U} } \equiv \{ \exp ( i \sum_{\alpha = \widetilde{\mathrm{E}}, \widetilde{\mathrm{F}}, \widetilde{\mathrm{H}}} X^\alpha m_{\alpha}  );\, X^\alpha \in \mathbb{R} \}
\end{equation}
generated by rescaled erratic observable generators
\begin{equation}
    m_{\widetilde{\mathrm{E}}} \equiv \frac{m_{\tilde{\mathrm{e}}}}{2\sin{\frac{2 \pi k}{\hbar}}}, \quad m_{\widetilde{\mathrm{F}}} \equiv \frac{m_{\tilde{\mathrm{f}}}}{2\sin{\frac{2 \pi k}{\hbar}}}, \quad m_{\widetilde{\mathrm{H}}} \equiv i \log_{\tilde{q}} m_{\widetilde{\mathrm{K}}}^2.
\end{equation}
The gauged nonlocal symmetries admit an intuitive interpretation as follows. The puncture monodromy observables associated with the black hole interior/horizon edge modes is extremely difficult to access for a boundary observer due to the complexity obstruction. Consequently, actions of unitary operators in the algebra of monodromy observables cannot be efficiently observed and hence effectively form a gauge group.

\subsection{Apparent Ensemble Averaging over Entanglers}
\label{sec: Apparent Ensemble Averaging over Entanglers}

In this subsection, we make the notion of the large-$N$ filter explicit for two CFTs. In Section~\ref{sec: Filtering One CFT}, we introduced for one CFT the projection $\mathcal{F}$~\eqref{eq: one CFT filter} filtering the erratic observables. The filtered Hilbert space $\mathcal{F}\mathcal{H}_\odot$ and the filtered observable algebra $\tx{End}(\mathcal{F}\mathcal{H}_\odot)$ play the role analogous to the filtered partition function~\eqref{eq: filtered one CFT partition function} of one CFT. One can implement the single-CFT filter on two CFTs individually, which yields the filtered Hilbert space $\mathcal{F} \mathcal{H}_\odot \otimes \mathcal{F} \mathcal{H}_\odot$. As we explained in Section~\ref{sec: Filtering One CFT}, $\mathcal{F} \mathcal{H}_\odot \otimes \mathcal{F} \mathcal{H}_\odot$ only describes two disconnected spacetimes. $\mathcal{F} \mathcal{H}_\odot \otimes \mathcal{F} \mathcal{H}_\odot$ plays the role of the first term on the right hand side of Eq.~\eqref{eq: F( Z Z )}. To see the emergent wormhole term, we can instead gauge the ``global'' part $\hat{\mathcal{U}}$ of $\mathcal{U} \otimes \mathcal{U}$ which is isomorphic to a single group $\mathcal{U}$ defined by Eq.~\eqref{eq: cal U def}. For example, we need to consider two replicas of a system to calculate the second R\'enyi entropy. In such a case, each unitary operator has identical actions on the two replicas, which is what we mean by ``global''. In general, there may be smooth remnant of the correlations between erratic behaviors if only the net erratic behaviors are filtered out. One may expect that the charges corresponding to the $\hat{\mathcal{U}}$ are simply the sum of the charges of the two systems: $m_{\mathrm{\alpha}} \otimes 1 + 1\otimes m_{\mathrm{\alpha}}$, $\alpha = \mathrm{E}, \mathrm{F}, \mathrm{H}$. But it is straightforward to check that these naive composite charges do not satisfy the same $U_q(\psl)$ algebra. It needs a more careful treatment. 

Given two classical systems with symplectic forms $\Omega_1$ and $\Omega_2$, suppose they share the same ordinary Lie group symmetry corresponding to a phase space vector field $\mathfrak{X}$. Using Eq.~\eqref{eq: imath X Omega = delta Q}, we have
\begin{equation}
\label{eq: two charges}
    \imath_{\mathfrak{X}} \Omega_1 = \delta Q_1, \quad  \imath_{\mathfrak{X}} \Omega_2 = \delta Q_2.
\end{equation}
$Q_1$ and $Q_2$ are the associated symmetry charges. The symplectic form of the composite system is
\begin{equation}
    \Omega = \Omega_1 + \Omega_2.
\end{equation}
Eq.~\eqref{eq: two charges} then implies
\begin{equation}
    \imath_{\mathfrak{X}} \Omega = \delta (Q_1 + Q_2).
\end{equation}
So the composite charge is indeed the sum of the charges of the two subsystems. By Definition~\eqref{eq: global charge def}, the composite global charge is the product
\begin{equation}
\label{eq: fusion product}
    \mathcal{Q}  = \mathcal{Q}_1 \mathcal{Q}_2
\end{equation}
with
\begin{equation}
    \mathcal{Q} = e^{Q_1 + Q_2},
\quad
    \mathcal{Q}_1 = e^{Q_1}, \quad \mathcal{Q}_2 = e^{Q_2}.
\end{equation}
In fact, the composition rule~\eqref{eq: fusion product} holds for more general situations~\cite{alekseev1998lie}. Combining Eqs.~\eqref{eq: Q def PSL^*}, \eqref{eq: dual PSL def}, and \eqref{eq: fusion product} gives rise to the composite charges. They correspond to operators
\begin{align}
    \hat{m}_{\mathrm{e}} &= m_{\mathrm{e}} \otimes m_{\mathrm{K}}^{-1} + m_{\mathrm{K}}\otimes m_{\mathrm{e}},
        \label{eq: me coproduct}
    \\
    \hat{m}_{\mathrm{f}}  &= m_{\mathrm{f}} \otimes m_{\mathrm{K}}^{-1}  + m_{\mathrm{K}} \otimes m_{\mathrm{f}} ,
    \label{eq: mf coproduct}
    \\
    \hat{m}_{\mathrm{K}}  &= m_{\mathrm{K}} \otimes m_{\mathrm{K}}.
    \label{eq: mK coproduct}
\end{align}
Using Eqs.\eqref{eq: m_K m_e = q m_e m_K}, \eqref{eq: m_K m_f = q m_f m_K}, and \eqref{eq: [m_e, m_f]}, it is straightforward to check that the composite charges satisfy the same $U_q(\psl)$ algebra. Using Eqs.\eqref{eq: dual U_q 1}, \eqref{eq: dual U_q 2}, and \eqref{eq: dual U_q 3}, it is also straightforward to verify that the composites
\begin{align}
    \hat{m}_{\tilde{\mathrm{e}}}   &= m_{\tilde{\mathrm{e}}} \otimes m_{\widetilde{\mathrm{K}}}^{-1} + m_{\widetilde{\mathrm{K}}}\otimes m_{\tilde{\mathrm{e}}},
    \\
    \hat{m}_{\tilde{\mathrm{f}}}   &= m_{\tilde{\mathrm{f}}} \otimes m_{\widetilde{\mathrm{K}}}^{-1}  + m_{\widetilde{\mathrm{K}}} \otimes m_{\tilde{\mathrm{f}}} ,
    \\
    \hat{m}_{\widetilde{\mathrm{K}}}    &= m_{\widetilde{\mathrm{K}}} \otimes m_{\widetilde{\mathrm{K}}}
\end{align}
of the modular dual charges $m_{\tilde{\mathrm{e}}}$, $m_{\tilde{\mathrm{f}}}$, and $m_{\widetilde{\mathrm{K}}}$ defined by Eq.~\eqref{eq: dual m def} satisfy the same $U_{\tilde{q}} (\psl)$ algebra. $\hat{m}_{\tilde{\mathrm{e}}}$, $\hat{m}_{\tilde{\mathrm{f}}}$, and $\hat{m}_{\widetilde{\mathrm{K}}}$ can be viewed as the global part of the erratic observables of the two CFTs, because they are related to the global charges by
\begin{equation}
\label{eq: dual coproduct}
    \left(\hat{m}_{\mathrm{e}} \right)^{\tilde{b}^2 \hbar}  = \hat{m}_{\tilde{\mathrm{e}}} ,
\quad
    \left(\hat{m}_{\mathrm{f}} \right)^{\tilde{b}^2 \hbar}  = \hat{m}_{\tilde{\mathrm{f}}} ,
\quad
    \left(\hat{m}_{\mathrm{K}} \right)^{\tilde{b}^2 \hbar}   = \hat{m}_{\widetilde{\mathrm{K}}} ,
\end{equation}
See~\cite{Bytsko:2002br} for a derivation of Eq.~\eqref{eq: dual coproduct}.

By the same reasoning that leads to the constraint~\eqref{eq: m = 1}, we impose the constraint
\begin{equation}
\label{eq: fixing composite charges}
    \hat{m}_{\mathrm{e}}= \hat{m}_{\mathrm{f}} = 0, \quad \hat{m}_{\mathrm{K}}  = 1,
\end{equation}
on $ \mathcal{H}_\odot \otimes \mathcal{H}_\odot$ to define the filtered subspace $\mathcal{F}(\mathcal{H}_\odot \otimes \mathcal{H}_\odot)$. In terms of the rescaled monodromy variables~\eqref{eq: resacled m def}, the constraint~\eqref{eq: fixing composite charges} is equivalent to
\begin{equation}
\label{eq: fixing composite charges 2}
    \hat{m}_{\mathrm{E}} = \hat{m}_{\mathrm{F}}  = \hat{m}_{\mathrm{H}} = 0
\end{equation}
with composite rescaled charges given by
\begin{align}
    \hat{m}_{\mathrm{E}} &= m_{\mathrm{E}} \otimes q^{\frac{i}{2} m_{\mathrm{H}} } + q^{-\frac{i}{2} m_{\mathrm{H}}}\otimes m_{\mathrm{E}},
    \label{eq: composite rescaled charge E}
    \\
    \hat{m}_{\mathrm{F}}  &= m_{\mathrm{F}} \otimes q^{\frac{i}{2} m_{\mathrm{H}} } + q^{-\frac{i}{2} m_{\mathrm{H}}}\otimes m_{\mathrm{F}},
    \label{eq: composite rescaled charge F}
    \\
    \hat{m}_{\mathrm{H}}  &= m_{\mathrm{H}} \otimes 1 + 1 \otimes m_{\mathrm{H}}.
    \label{eq: composite rescaled charge H}
\end{align}
Similar to Definition~\eqref{eq: cal U def}, the gauge group $\hat{\mathcal{U}}$ of global operations of high complexity is given by
\begin{equation}
\label{eq: Uhat def}
    \hat{\mathcal{U}} \equiv \{ \exp ( i \sum_{\alpha = \mathrm{E}, \mathrm{F},\mathrm{H}} X^\alpha  \hat{m}_{\alpha}  );\, X^\alpha \in \mathbb{R} \}.
\end{equation}
The projection $\mathcal{F}$ onto the $\hat{\mathcal{U}}$-invariant subspace of the Hilbert space $\mathcal{H}_\odot \otimes \mathcal{H}_\odot$ of the full system can be constructed as
\begin{equation}
\label{eq: Uhat projection}
    \mathcal{F}  = \underset{U \in \hat{\mathcal{U}}}{\int} dU\,U . 
\end{equation}

Viewing the two CFTs as two wires in a quantum circuit, the charge $\hat{m}_{\mathrm{H}}$ \eqref{eq: composite rescaled charge H} is a 1-local quantum gate, while $\hat{m}_{\mathrm{E}}$ \eqref{eq: composite rescaled charge E} and $\hat{m}_{\mathrm{F}}$ \eqref{eq: composite rescaled charge F} are 2-local quantum gates. So typical elements in $\hat{\mathcal{U}}$ are quantum circuits generating entanglement between the monodromy/black hole degrees of freedom of the two CFTs. Such entanglement structures are indistinguishable to an exterior observer due to the complexity barrier or the erratic behavior. The projection $\mathcal{F}$~\eqref{eq: Uhat projection} as the large-$N$ filter is hence implemented by the ensemble averaging over these indistinguishable circuits in $\hat{\mathcal{U}}$. The individually filtered Hilbert space $\mathcal{F} \mathcal{H}_\odot \otimes \mathcal{F}\mathcal{H}_\odot $ is spanned by pure states satisfying the constrains
\begin{equation}
\label{eq: m O 1 = 1 O 1}
    m \otimes \mathbf{1} = \mathbf{1} \otimes \mathbf{1}
\end{equation}
and
\begin{equation}
\label{eq: 1 O m = 1 O 1}
     \mathbf{1} \otimes m = \mathbf{1} \otimes \mathbf{1}
\end{equation}
with $m$ and $\mathbf{1}$ defined by Eq.~\eqref{eq: m matrix operator def} and Eq.~\eqref{eq: 1 matrix def} respectively. Combining constraints~\eqref{eq: m O 1 = 1 O 1} and \eqref{eq: 1 O m = 1 O 1} yields the constraint~\eqref{eq: fixing composite charges}. The constraint~\eqref{eq: fixing composite charges} admits more solutions, so the individually filtered Hilbert space is a proper subspace of the filtered Hilbert space $\mathcal{F}(\mathcal{H}_\odot \otimes \mathcal{H}_\odot)$, i.e.
\begin{equation}
\label{eq: F(H otimes H) = FH otimes FH + H^wormhole}
    \mathcal{F} (\mathcal{H}_\odot \otimes \mathcal{H}_\odot) = ( \mathcal{F} \mathcal{H}_\odot \otimes \mathcal{F}\mathcal{H}_\odot )\oplus \mathcal{H}^{\tx{wormhole}},
\end{equation}
where $\mathcal{H}^{\tx{wormhole}}$ is a nontrivial subspace. Unlike the case of filtering one CFT where Eq.~\eqref{eq: Abdy cong End(FH_odot)} holds, the filtered observable algebra $\tx{End}(\mathcal{F}(\mathcal{H}_\odot \otimes \mathcal{H}_\odot))$ is hence larger than the observable algebra $\mathcal{A}_{\bdy} \otimes \mathcal{A}_{\bdy} \cong \tx{End} ( \mathcal{F}\mathcal{H}_\odot \otimes \mathcal{F}\mathcal{H}_\odot)$ of boundary gravitons on the two asymptotic boundaries, i.e.
\begin{equation}
\label{eq: filtered algebra constains boundary graviton algebras}
    \mathcal{A}_{\bdy} \otimes \mathcal{A}_{\bdy} \subsetneq \tx{End}(\mathcal{F}(\mathcal{H}_\odot \otimes \mathcal{H}_\odot)).
\end{equation}
What makes $\tx{End}(\mathcal{F}(\mathcal{H}_\odot \otimes \mathcal{H}_\odot))$ larger is a set of observables encoding an emergent wormhole bulk. 

The full unfiltered observable algebra $\mathcal{A}_\odot \otimes \mathcal{A}_\odot$ is the tensor product of the quantization $\mathcal{A}_\odot$ of the full classical observable algebra $C^\infty (\mathcal{P}_\odot)$ defined by Poisson bracket~\eqref{eq: simplest WW Poisson}. The theory is defined on two punctured discs. Denote by $W_l$ (resp. $W_r$) the Wilson line in the first disc (resp. second disc) connecting the puncture to the boundary. The operator $W_l W_r^{-1}$ is an element in $\tx{End}(\mathcal{F} (\mathcal{H}_\odot \otimes \mathcal{H}_\odot))$. In other words, $W_l W_r^{-1}$ commutes with the composite charges~\eqref{eq: me coproduct}, \eqref{eq: mf coproduct}, and \eqref{eq: mK coproduct}. Using~Eqs.\eqref{eq: simplest WW Poisson}, \eqref{eq: m def}, and \eqref{eq: m_e,f,k parameterization}, this is straightforward to verify at the classical level. It also holds at the quantum level since we expect that the quantization preserves the commutant property, otherwise there will be quantum corrections to a quantity that vanishes at the classical level. On the contrary, $W_l W^{-1}_r$ is not an element in $\mathcal{A}_{\bdy} \otimes \mathcal{A}_{\bdy}$. Because the nonlocal symmetry~\eqref{eq: puncture gauge transformation} acting on $W_l$ and $W_r$ individually does not keep $W_l W_r^{-1}$ invariant. So $W_l W_r^{-1}$ does not commute with the corresponding charges which are the puncture monodromy variables according to Eqs.~\eqref{eq: Q def PSL^*} and \eqref{eq: dual PSL def} and hence are elements in the commutant of the boundary graviton algebra. Using Eq.~\eqref{eq: simplest WW Poisson}, it is straightforward to check that $W_l W^{-1}_r$ satisfies the two-sided Wilson line algebra~\eqref{eq: Two-Sided Wilson Line Poisson Bracket}. The fact that it does not belong to the individually filtered observable algebra $\tx{End} ( \mathcal{F}\mathcal{H}_\odot \otimes \mathcal{F}\mathcal{H}_\odot) \cong \mathcal{A}_{\bdy} \otimes \mathcal{A}_{\bdy}$ is compatible with the fact that $ \mathcal{F}\mathcal{H}_\odot \otimes \mathcal{F}\mathcal{H}_\odot$ describes two disjoint spacetimes as explained in Section~\ref{sec: Filtering One CFT}. The filtered observable algebra $\tx{End}(\mathcal{F}(\mathcal{H}_\odot \otimes \mathcal{H}_\odot))$ contains a subalgebra of $W_lW^{-1}_r$ that is isomorphic to an algebra of Wilson lines supported on a two-sided wormhole spacetime. This indicates the emergence of a wormhole not encoded in the individually filtered observable algebra~$\mathcal{A}_{\bdy} \otimes \mathcal{A}_{\bdy}$. At the level of Hilbert space, the wormhole is encoded into corresponding extra part $\mathcal{H}^{\tx{wormhole}}$ in Eq.~\eqref{eq: F(H otimes H) = FH otimes FH + H^wormhole}. Thus, Eq.~\eqref{eq: F(H otimes H) = FH otimes FH + H^wormhole} is a Hilbert space version of Eq.~\eqref{eq: F( Z Z )} with the projection $\mathcal{F}$ playing the role of the large-$N$ filter.

Now we describe the wormhole subspace $\mathcal{H}^{\tx{wormhole}}$ in detail. As shown in Section~\ref{sec: Gauging Nonlocal Symmetries}, the Wilson loop $\tr\, m$ of winding number 1 generates the center of the boundary graviton algebra. The Wilson loop is related to the Kashaev coordinate $\lambda$ by Eq.~\eqref{eq: lambda  def}. So the  irreducible representations $V_\lambda$ of the boundary graviton algebra are labeled by $\lambda$. Using Eqs.~\eqref{eq: trace normalization} and Eq.~\eqref{eq: m_e,f,k parameterization}, it is straightforward to verify that the classical Wilson loop is equal to the Casimir element in the classical puncture monodromy algebra $C^\infty (\mathcal{P}_\bullet)$, i.e.
\begin{equation}
\label{eq: classical Casimir}
    \tr\, m = \frac{1}{2} ( - m_{\mathrm{K}}^2 - m_{\mathrm{K}}^{-2} + m_{\mathrm{f}} m_{\mathrm{e}}).
\end{equation}
If the quantization preserves the commutant property, then the quantized Wilson loop $\tr_q \, m$ will still be the Casimir element at the quantum level~\footnote{The erratic modular dual observables form another central element without a large $N$ limit. But its eigenvalue still depends on $\lambda$.}. The quantized Casimir element is given by~\cite{Nidaiev:2013bda}
\begin{equation}
\label{eq: quantum Casimir}
   \tr_q\,m = -q\, m_K^2 - q^{-1} m_K^{-2} + m_f m_e.
\end{equation}
With the realization \eqref{eq: mk quantum Kashaev}, \eqref{eq: me quantum Kashaev}, and \eqref{eq: mf quantum Kashaev}, we have the eigenvalue equation~\cite{Kashaev:2000ku, Nidaiev:2013bda}
\begin{equation}
\label{eq: eigen Eq of tr_q m}
    (\tr_q\, m)\, V_\lambda = \cosh (2 \pi b \lambda)\, V_\lambda.
\end{equation}
According to Section~\ref{sec: Hyperbolic class}, $V_\lambda$ with $\lambda>0$ describes black holes states. Denote by $|\lambda; a \rangle$ the basis of the irreducible representation $V_\lambda$. For example, $a$ labels the Virasoro descendants if the asymptotically AdS$_3$ boundary condition~\eqref{eq: AdS boundary condition 1} is imposed. But we leave the boundary condition unspecified since it does not affect our discussion. The basis $|\lambda; a \rangle$ by definition satisfies Eq.~\eqref{eq: eigen Eq of tr_q m}. According to Eq.~\eqref{eq: filtered algebra constains boundary graviton algebras}, the filtered subspace $\mathcal{F}(\mathcal{H}_\odot \otimes \mathcal{H}_\odot)$ is a representation of $\mathcal{A}_{\bdy} \otimes \mathcal{A}_{\bdy}$ spanned by
\begin{equation}
\label{eq: reps of Abdy otimes Abdy}
    | \lambda_1; a_1 \rangle \otimes | \lambda_2; a_2 \rangle.
\end{equation}
$\lambda_1$ and $\lambda_2$ are related by the constraint~\eqref{eq: fixing composite charges}. Combining Eqs.~\eqref{eq: me coproduct}, \eqref{eq: mf coproduct}, \eqref{eq: mK coproduct}, and \eqref{eq: fixing composite charges} yields
\begin{equation}
\label{eq: 1 otimes m = m otimes 1}
    1 \otimes m_{\mathrm{e}} = - m_{\mathrm{e}} \otimes 1, \quad 1 \otimes m_{\mathrm{f}} = - m_{\mathrm{f}} \otimes 1, \quad 1 \otimes m_{\mathrm{K}} = m_{\mathrm{K}}^{-1} \otimes 1.
\end{equation}
Combining Eq.~\eqref{eq: quantum Casimir} and Eq.~\eqref{eq: 1 otimes m = m otimes 1}, we have
\begin{equation}
   1 \otimes \tr_q\, m = \tr_q \, m \otimes 1.
\end{equation}
Eq.~\eqref{eq: eigen Eq of tr_q m} then implies
\begin{equation}
    \lambda_1 = \lambda_2.
\end{equation}
Thus, the filtered subspace $\mathcal{F}(\mathcal{H}_\odot \otimes \mathcal{H}_\odot)$ decomposes as
\begin{equation}
\label{eq: F(H otimes H) = int}
    \mathcal{F}(\mathcal{H}_\odot \otimes \mathcal{H}_\odot) = \int V_\lambda \otimes V_\lambda.
\end{equation}
According to Eq.~\eqref{eq: eigen Eq of tr_q m}, irrep $V_\lambda$ with $\lambda >0$ corresponds to geometry with Wilson loop larger than 1. As discussed in Section~\ref{sec: Hyperbolic class}, they correspond to black hole solutions. Suppose $\mathcal{H}^{\tx{wormhole}}$ only contains black hole states. Then comparing Eqs.~\eqref{eq: F(H otimes H) = FH otimes FH + H^wormhole} and Eq.~\eqref{eq: F(H otimes H) = int} gives rise to
\begin{equation}
\label{eq: H^wormhole decomposition}
    \mathcal{H}^{\tx{wormhole}} = \underset{0 < \lambda}{\int} V_\lambda \otimes V_\lambda.
\end{equation}
So $\mathcal{H}^{\tx{wormhole}}$ describes two-sided black holes~\footnote{$\mathcal{H}^{\tx{wormhole}}$ should also admit a mapping class group action which may be encoded in the algebra generated by the Wilson loop $\tr_q\, m$ and the Wilson line $W_l W_r^{-1}$. We assume this action is modded out and leave a clearer treatment for future works.}. See Figure~\ref{fig: filtering two CFTs} for an illustration.
\begin{figure}
    \centering
    \includegraphics[width=0.45\linewidth]{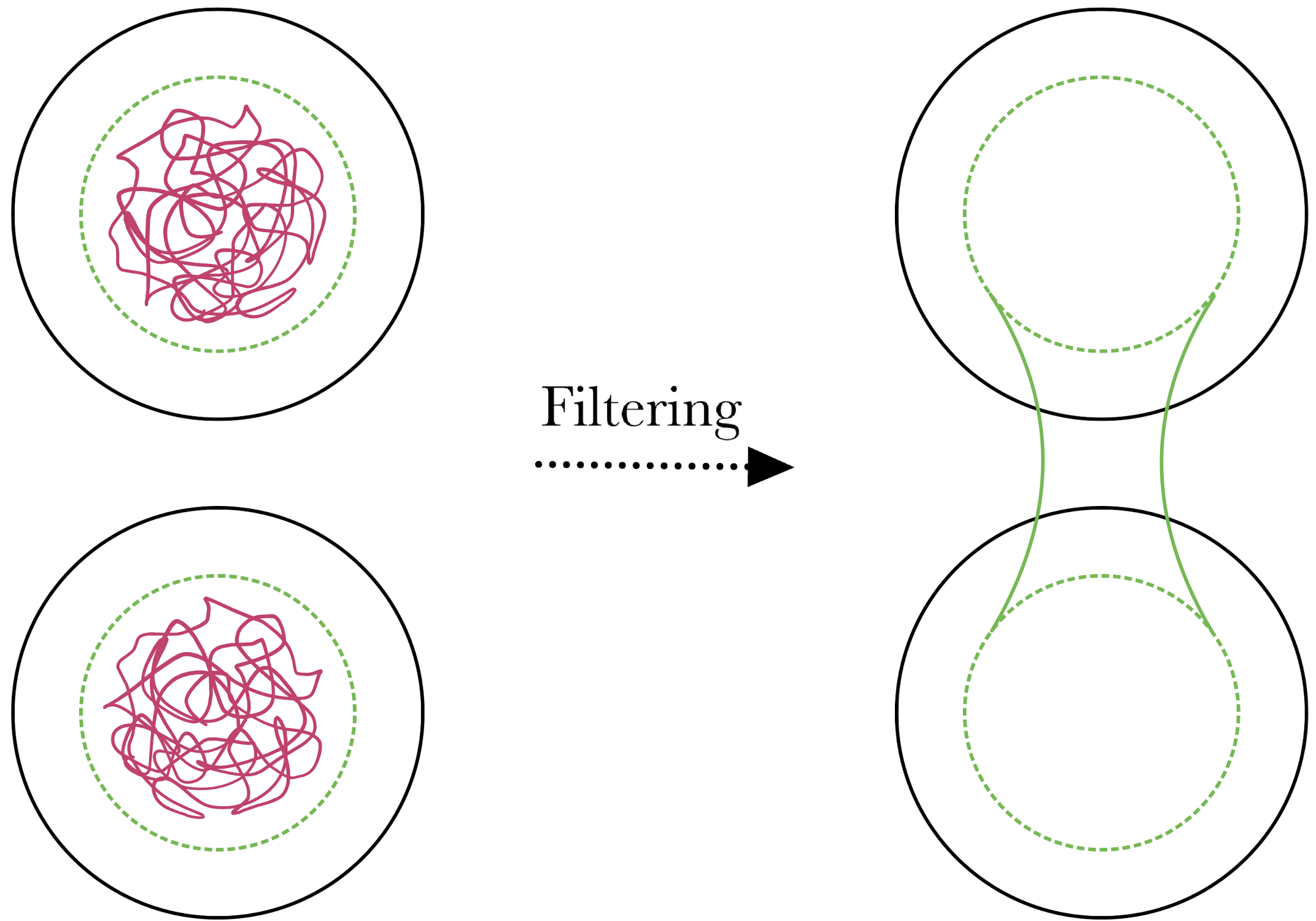}
    \caption{On the left is a schematic illustration of a bulk Cauchy slice dual to two CFTs. Each outer black circle represents boundary gravitons on each asymptotic boundary. Each green dashed circle represents the horizon for each black hole spacetime. The red mess inside the horizon represents the black hole degrees of freedom characterized by the erratic observables $m_{\tilde{\mathrm{e}}}$ $m_{\tilde{\mathrm{f}}}$, and $m_{\widetilde{\mathrm{K}}}$ of each CFT. More precisely, it represents states on which the erratic observables act nontrivially. The right picture is a schematic illustration of the bulk Cauchy slice dual to the filtered CFTs. The green solid lines represent the emergent wormhole geometry dual to the smooth remnant of the correlations between erratic observables of the two CFT.}
    \label{fig: filtering two CFTs}
\end{figure}

Notice that the filtered subspace $\mathcal{F}\mathcal{H}_\odot$ of one CFT does not describe (chiral sector of) the massless spinless BTZ black hole corresponding to the parabolic class phase spaces~\eqref{eq: leaf cong PSL(2, R)/N} discussed in Section~\ref{sec: Parabolic Class}. As explained in Section~\ref{sec: Filtering One CFT}, Eq.~\eqref{eq: m = 1} implies that $\mathcal{F}\mathcal{H}_\odot$ (with constraint~\eqref{eq: n = -+1}) is the Hilbert space of chiral boundary gravitons associated with the global AdS$_3$ spacetime. The potential confusion may be caused by the fact that, as remarked at the end of Section~\ref{sec: Parabolic Class}, the classical Wilson loop $\tr\, m$~\eqref{eq: tr m = cosh} and hence the Kashaev coordinate $\lambda$~\eqref{eq: lambda  def} does not distinguish the trivial class discussed in Section~\ref{sec: Trivial class} from the parabolic class, i.e. $\lambda = 0$ for both classes. One needs the full monodromy data, as in Eq.~\eqref{eq: m = 1}, to distinguish them.

Using Eq.~\eqref{eq: F(H otimes H) = FH otimes FH + H^wormhole}, we can make the wormhole emergence from the large-$N$ filter $\mathcal{F}$ more explicit. Consider two independent CFTs with commuting Hamiltonians $H_1 \otimes 1$ and $1 \otimes H_2$ individually. Denote the Hamiltonians by $H_1$ and $H_2$ for simplicity. Suppose $H_1 $ and $H_2$  are elements in the unfiltered observable algebra $\mathcal{A}_\odot \otimes \mathcal{A}_\odot$, which means they may behave erratically in the large $N$ limit ($\G \to 0$). Consider the partition function of the two CFTs
\begin{equation}
    Z \equiv \tx{Tr}_{\mathcal{H}_\odot \otimes \mathcal{H}_\odot} ( e^{-\beta_1 H_1} e^{ - \beta_2 H_2} ).
\end{equation}
$\tx{Tr}_{\mathcal{H}}$ denotes the trace over the Hilbert space $\mathcal{H}$. $Z$ is factorized into the product of the partition functions of the two CFTs, i.e.
\begin{equation}
    Z = Z_1 Z_2
\end{equation}
with
\begin{equation}
    Z_j \equiv \tx{Tr}_{\mathcal{H}_\odot} ( e^{-\beta_j  H_j} ), \quad j = 1,2.
\end{equation}
Recall that actions of unitaries in $\hat{\mathcal{U}}$~\eqref{eq: Uhat def} are expected to be extremely difficult to observe. The partition function
\begin{equation}
\label{eq: Z_U def}
    Z [U] \equiv \tx{Tr}_{ \mathcal{H}_\odot \otimes \mathcal{H}_\odot} (  e^{-\beta_1  H_1} e^{ - \beta_2 H_2  )} U)
\end{equation}
modified by a circuit $U \in \hat{\mathcal{U}}$ and the original partition function $Z$ are hence almost operationally indistinguishable to a boundary observer due to the complexity barrier, i.e.
\begin{equation}
\label{eq: Z indistinguishability}
    Z[U] \approx Z.
\end{equation}
From the perspective of Lorentzian evolution, $U \in \hat{\mathcal{U}}$ can be viewed as an evolution operator only involving the puncture monodromy which is an effective description of degrees of freedom behind the horizon. Hence the evolution operator $\exp{(it_1 H_1 + i t_2 H_2)}$ is almost indistinguishable from the evolution operator $\exp{(it_1 H_1 + i t_2 H_2)} U$ for a boundary observer. 

The approximate indistinguishability~\eqref{eq: Z indistinguishability} naturally leads to a partition function averaged over the ensemble of circuits in $\hat{\mathcal{U}}$~\eqref{eq: Uhat def},
\begin{equation}
\label{eq: ensemble average of ZZ}
    \mathbb{F}\{Z_1 Z_2\} \equiv   \underset{U \in \hat{\mathcal{U}}}{\int} Z[U]\, dU.
\end{equation}
Combining Eqs.~\eqref{eq: Uhat projection}, \eqref{eq: Z_U def}, and \eqref{eq: ensemble average of ZZ} yields
\begin{equation}
\label{eq: filtering ZZ}
   \mathbb{F}\{Z_1 Z_2\} = \tx{Tr}_{\mathcal{F}(\mathcal{H}_\odot \otimes \mathcal{H}_\odot)} ( e^{-\beta_1 H_1} e^{ - \beta_2 H_2} ).
\end{equation}
So the apparent ensemble averaging in Eq.~\eqref{eq: ensemble average of ZZ} is equivalent to the large-$N$ filtering in Eq.~\eqref{eq: filtering ZZ} \textit{without} the need of an ensemble. We similarly define the ensemble-averaged partition function $\mathbb{F}\{ Z_j \}$ for a single CFT
\begin{equation}
    \mathbb{F}\{Z_j \} \equiv \underset{U \in \mathcal{U}}{\int} Z_j [U] \, dU, 
\end{equation}
with the modified single CFT partition function defined by
\begin{equation}
    Z_j [U] \equiv \tx{Tr}_{ \mathcal{H}_\odot} (  e^{-\beta_j  H_j }U ),\quad j = 1,2.
\end{equation}
Eq.~\eqref{eq: one CFT filter} implies that $\mathbb{F}\{ Z_j \}$ is equal to the partition function of a single filtered CFT, 
\begin{equation}
\label{eq: single CFT filtered partition function}
    \mathbb{F}\{Z_j\} = \tx{Tr}_{\mathcal{F}\mathcal{H}_\odot} ( e^{-\beta_j H_j} ).
\end{equation}
Combining Eqs.~\eqref{eq: F(H otimes H) = FH otimes FH + H^wormhole},  \eqref{eq: filtering ZZ}, and \eqref{eq: single CFT filtered partition function} gives rise to
\begin{equation}
\label{eq: realizing large N filter}
    \mathbb{F}\{Z_1 Z_2\} = \mathbb{F}\{Z_1 \} \mathbb{F}\{ Z_2\} + \langle Z_1 Z_2\rangle_{\tx{wormhole}},
\end{equation}
with the wormhole term defined by
\begin{equation}
\label{eq: wormhole partition function def}
    \langle Z_1 Z_2\rangle_{\tx{wormhole}} \equiv \tx{Tr}_{\mathcal{H}^{\tx{wormhole}}} ( e^{-\beta_1 H_1} e^{ - \beta_2 H_2}).
\end{equation}
Eq.~\eqref{eq: single CFT filtered partition function} implies that the first term on the right hand side of Eq.~\eqref{eq: realizing large N filter} corresponds to two disconnected spacetimes.

According to the equivalence between gauging the nonlocal symmetry and filtering out the erratic observables, the constraint~\eqref{eq: fixing composite charges} is equivalent to the constraint in terms of the erratic observables,
\begin{equation}
    \hat{m}_{\tilde{\mathrm{e}}}= \hat{m}_{\tilde{\mathrm{f}}} = 0, \quad \hat{m}_{\widetilde{\mathrm{K}} } = 1.
\end{equation}
So the projection $\mathcal{F}$~\eqref{eq: Uhat projection} filters out the global erratic observables of the two CFTs. The group $\hat{\mathcal{U}}$ in the ensemble average~\eqref{eq: ensemble average of ZZ} hence can be equivalently replaced by the unitary group $\widetilde{\hat{\mathcal{U}}}$ generated by the corresponding erratic observables
\begin{equation}
    \widetilde{\hat{\mathcal{U}}} \equiv \{ \exp ( i \sum_{\alpha = \widetilde{\mathrm{E}}, \widetilde{\mathrm{F}},\widetilde{\mathrm{H}}} X^\alpha  \hat{m}_{\alpha}  );\, X^\alpha \in \mathbb{R} \},
\end{equation}
with
\begin{equation}
    \hat{m}_{\widetilde{\mathrm{E}}} \equiv \frac{\hat{m}_{\tilde{\mathrm{e}}}}{2\sin{\frac{2 \pi k}{\hbar}}}, \quad \hat{m}_{\widetilde{\mathrm{F}}} \equiv \frac{\hat{m}_{\tilde{\mathrm{f}}}}{2\sin{\frac{2 \pi k}{\hbar}}}, \quad \hat{m}_{\widetilde{\mathrm{H}}} \equiv i\log_q \hat{m}_{\widetilde{\mathrm{K}}}^2.
\end{equation}
Consequently, the ensemble averaging~\eqref{eq: ensemble average of ZZ} also washes out the net erratic observables in the product partition function. But the smooth remnant of correlations of the erratic observables between the two CFTs survives under the ensemble average~\eqref{eq: ensemble average of ZZ}. The wormhole contribution $\langle Z_1 Z_2 \rangle_{\tx{wormhole}}$ hence emerges and makes the filtered partition function $\mathbb{F}\{ Z_1 Z_2 \}$ non-factorized. Thus, Eq.~\eqref{eq: realizing large N filter} is a (2+1)-d gravity realization of Eq.~\eqref{eq: F( Z Z )}.

\section{Conclusions}

We showed that the observable algebra of one-sided boundary gravitons has a nontrivial center. In the Chern-Simons formulation of (2+1)-d gravity, this center is generated by Wilson loops. Based on asymptotic symmetries of boundary gravitons, we bootstrapped a general Poisson bracket~\eqref{eq: universal WW Poisson} to describe potential completions of the boundary graviton observable algebra. The simplest form of the the general Poisson bracket reproduces the observable algebra~\eqref{eq: simplest WW Poisson} of the Chern-Simons theory defined on a punctured disc. It is isomorphic to the observable algebra of the chiral WZNW model. This isomorphism identifies the Wilson lines in the Chern-Simons theory with the WZNW fields satisfying the same algebra. The observable algebra of the puncture monodromy and the boundary graviton observable algebra are commutants of each other. The puncture monodromy is hence interpreted as an effective description of one-sided black holes. 

Gauging the asymptotic symmetries yields an effective observable algebra~\eqref{eq: STS Poisson} of the one-sided black holes. The naive quantization of this effective black hole algebra is the quantum group algebra $U_q (\psl)$. The puncture monodromy with positivity restrictions~\eqref{eq: positivity m_e, m_f, m_K} is sufficient to describe an exterior region of the Lorentzian multi-boundary wormholes up to boundary graviton fluctuations. With the positivity restrictions, more elements emerge in the puncture monodromy observable algebra, which extends it to be the modular double $U_{q\tilde{q}} (\psl)$ of the $U_q (\psl)$. The extended monodromy algebra contains erratic observables that do not survive the large $N$ limit. 

The puncture monodromy encodes the charge~\eqref{eq: Q def PSL^*} corresponding to the large gauge transformations~\eqref{eq: puncture gauge transformation} at the puncture. They are nonlocal symmetries since they are not canonical transformations such that there is no associated local currents. At the quantum level, gauging the nonlocal symmetries is equivalent to filtering out the erratic observables. Consistently gauging the nonlocal symmetries requires a quantization condition~\eqref{eq: quantized k} on the coupling constant. For one CFT, the filtered subspace $\mathcal{F}\mathcal{H}_\odot$ is the Hilbert space of boundary gravitons associated with the global AdS$_3$ vacuum. For two CFTs, gauging the global part of the nonlocal symmetries preserves the smooth correlations between the erratic observables of the two CFTs such that a subspace $\mathcal{H}^{\tx{wormhole}}$~\eqref{eq: H^wormhole decomposition} describing wormholes survives. The filtered subspace $\mathcal{F}(\mathcal{H}_\odot \otimes \mathcal{H}_\odot)$ of the two CFTs decomposes into a direct sum~\eqref{eq: F(H otimes H) = FH otimes FH + H^wormhole} of the wormhole subspace $\mathcal{H}^{\tx{wormhole}}$ and the tensor product of two copies of the vacuum subspaces $\mathcal{F}\mathcal{H}_\odot$. Gauging the global nonlocal symmetries also filters the partition function of the two CFTs, which leads to an apparent ensemble averaged partition function $\mathbb{F}\{ Z_1Z_2 \}$~\eqref{eq: ensemble average of ZZ}. It decomposes into a sum~\eqref{eq: realizing large N filter} of a product term corresponding to two disconnected spacetimes and a non-factorized term~\eqref{eq: wormhole partition function def} corresponding to the wormhole contributions from the wormhole subspace $\mathcal{H}^{\tx{wormhole}}$.

\section*{Acknowledgments}
I thank Ling-Yan Hung, Juan Maldacena, Thomas G. Mertens, Joan Sim\'on, Gabriel Wong, Lorenzo Russo, and Meng-Yang Zhang for helpful discussions. I thank Fabio Ori for suggestions on the paper title. I thank Andreas Balaey for noticing an incorrectly sized pair of parentheses in the notation of the Schwarzian derivative. I acknowledge financial support from the European Research Council (grant BHHQG-101040024). Funded by the European Union. Views and opinions expressed are however those of the author(s) only and do not necessarily reflect those of the European Union or the European Research Council. Neither the European Union nor the granting authority can be held responsible for them.

\appendix

\section{Two-Sided Wilson Line Poisson Bracket}
\label{app: Two-Sided Wilson Line Poisson Bracket}

Consider a Chern-Simons theory defined on a finite cylinder with two circle boundaries. Denote the coordinates on the two boundaries by $x$ and $y$ respectively. The Poisson bracket of Wilson lines connecting the two two boundaries is given by~\cite{Mertens:2025ydx}
\begin{equation}
\label{eq: Two-Sided Wilson Line Poisson Bracket}
    \{ \underset{1}{W}|_{x_1, y_1}, \underset{2}{W}|_{x_2, y_2} \} = \frac{ \pi}{k} \underset{1}{W}|_{x_1, y_1} \underset{2}{W}|_{x_2, y_2}   \underset{1}{W}|_{y_1, y_2} \underset{12}{K} \underset{1}{W}|_{y_2, y_1} \left( \tx{sgn}(x_{12}) - \tx{sgn}(y_{12})  \right),
\end{equation}
with $|x_{12}|, |y_{12}| < 2 \pi$. We used notations~\eqref{eq: W def}, \eqref{eq: very convenient notation}, \eqref{eq: x_12}, and \eqref{eq: sgn def}. Setting
\begin{equation}
    \underset{12}{r} = -  \underset{1}{W}|_{y_1, y_2} \,\underset{12}{K} \,\underset{1}{W}|_{y_2, y_1} \,\tx{sgn}(y_{12})  
\end{equation}
in Eq.~\eqref{eq: universal WW Poisson} yields Eq.~\eqref{eq: Two-Sided Wilson Line Poisson Bracket}.

\section{Symmetries and Global Charges}
\label{app: Nonabelian Momentum Map}

\subsection{Definitions of Symmetries}
\label{app: Definitions of Symmetries}

In this subsection, we prove the equivalence between Eq.~\eqref{eq: global charge Poisson} and Eq.~\eqref{eq: imath X Omega = U^-1 delta U} for general Lie group symmetries. We summarize the equivalence as Proposition \ref{prop: nonabelian momentum map}. 

\begin{proposition}
\label{prop: nonabelian momentum map}
       Suppose a Lie group $G$ acts on a symplectic manifold $\mathcal{P}$ equipped with a symplectic form $\Omega$. Given a function $\mathcal{Q}: M \times \mg \to \mathbb{C}$, denote by $\mathcal{Q}[\sigma]$ its value at $\sigma \in \mg$. Then
       \begin{equation}
       \label{eq: moment map}
\mathscr{L}_{\mathfrak{X}[\sigma]} \mathcal{O} =  \mathcal{Q}^{-1} \{\mathcal{Q}, \mathcal{O} \} [\sigma], \quad  \forall \sigma \in \mg,\, \mathcal{O} \in C^\infty (\mathcal{P}),
    \end{equation}
    if and only if
    \begin{equation}
    \label{eq: Q-1 delta Q, u = i Xu Omega}
        \imath_{
        \mathfrak{X}[\sigma]} \Omega = - \mathcal{Q}^{-1} \delta \mathcal{Q} [\sigma]   , \quad \forall \sigma \in \mg. 
    \end{equation}
\end{proposition}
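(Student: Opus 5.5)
The plan is to reduce both conditions to the same pointwise identity between a vector field and a one-form, using only the definition of the Poisson bracket in terms of the symplectic form. For $\mathcal{O}\in C^\infty(\mathcal{P})$, let $\mathfrak{X}_{\mathcal{O}}$ be its Hamiltonian vector field, defined by $\imath_{\mathfrak{X}_{\mathcal{O}}}\Omega = \delta\mathcal{O}$. This is the convention already fixed in Eq.~\eqref{eq: imath X Omega = delta Q} together with Eq.~\eqref{eq: delta O = Poisson (Q, O)}. With this convention, for any functions $F,\mathcal{O}$,
\begin{equation*}
\{F,\mathcal{O}\} = \Omega(\mathfrak{X}_F,\mathfrak{X}_{\mathcal{O}}) = -\imath_{\mathfrak{X}_{\mathcal{O}}}\delta F .
\end{equation*}
The overall sign is fixed by requiring agreement with Eq.~\eqref{eq: delta O = Poisson (Q, O)}. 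I will treat $\mathcal{Q}[\sigma]$ as a function on $\mathcal{P}$ for each fixed $\sigma$; when $\mathcal{Q}$ is group-valued, as in Eq.~\eqref{eq: Q def PSL^*}, I read $\mathcal{Q}^{-1}\{\mathcal{Q},\cdot\}$ and $\mathcal{Q}^{-1}\delta\mathcal{Q}$ componentwise and then pair with $\sigma$. Both operations are $C^\infty$-linear in their argument, so the pairing commutes with them.

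First, I will rewrite the right-hand side of Eq.~\eqref{eq: moment map} as a contraction. By the Leibniz rule and the formula above,
\begin{equation*}
\mathcal{Q}^{-1}\{\mathcal{Q},\mathcal{O}\}[\sigma] = -\imath_{\mathfrak{X}_{\mathcal{O}}}\bigl(\mathcal{Q}^{-1}\delta\mathcal{Q}[\sigma]\bigr).
\end{equation*}
This holds because $\mathcal{Q}^{-1}$ is just a function multiplying the bracket, and the bracket is a derivation in its first slot. The left-hand side of Eq.~\eqref{eq: moment map} is $\mathscr{L}_{\mathfrak{X}[\sigma]}\mathcal{O} = \imath_{\mathfrak{X}[\sigma]}\delta\mathcal{O} = \imath_{\mathfrak{X}[\sigma]}\imath_{\mathfrak{X}_{\mathcal{O}}}\Omega$. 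By antisymmetry of $\Omega$ this equals $-\imath_{\mathfrak{X}_{\mathcal{O}}}\imath_{\mathfrak{X}[\sigma]}\Omega$. Hence Eq.~\eqref{eq: moment map} becomes
\begin{equation*}
\imath_{\mathfrak{X}_{\mathcal{O}}}\Bigl(\imath_{\mathfrak{X}[\sigma]}\Omega + \mathcal{Q}^{-1}\delta\mathcal{Q}[\sigma]\Bigr) = 0 \quad \forall \mathcal{O}.
\end{equation*}

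The "if" direction is now immediate: substitute Eq.~\eqref{eq: Q-1 delta Q, u = i Xu Omega}. For the "only if" direction I need one fact: a one-form $\alpha$ with $\imath_{\mathfrak{X}_{\mathcal{O}}}\alpha = 0$ for all smooth $\mathcal{O}$ must vanish. Since $\Omega$ is nondegenerate, the Hamiltonian vector fields $\mathfrak{X}_{\mathcal{O}}$ span the tangent space at every point. Indeed, every cotangent vector at a point is $\delta\mathcal{O}$ for some smooth $\mathcal{O}$, for example a local coordinate function multiplied by a bump function, and $\Omega^{-1}$ maps the cotangent space isomorphically onto the tangent space. So $\alpha=0$ pointwise, which is exactly Eq.~\eqref{eq: Q-1 delta Q, u = i Xu Omega}.

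The main obstacle is sign and convention bookkeeping rather than anything conceptual. The orientation of $\imath$ (the paper contracts the first index), the sign relating $\{\cdot,\cdot\}$ to $\Omega^{-1}$, and the fact that $\mathcal{Q}$ may be matrix-valued all have to be fixed consistently with Eqs.~\eqref{eq: delta O = Poisson (Q, O)} and \eqref{eq: imath X Omega = delta Q}. I will calibrate them with the abelian check $\mathcal{Q}=e^{Q}$: there $\mathcal{Q}^{-1}\delta\mathcal{Q}=\delta Q$ and $\mathcal{Q}^{-1}\{\mathcal{Q},\mathcal{O}\}=\{Q,\mathcal{O}\}$, so the proposition must reduce to the known equivalence of Eqs.~\eqref{eq: delta O = Poisson (Q, O)} and \eqref{eq: imath X Omega = delta Q}. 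The minus sign in Eq.~\eqref{eq: Q-1 delta Q, u = i Xu Omega} versus Eq.~\eqref{eq: imath X Omega = delta Q} must then come out of that same convention.
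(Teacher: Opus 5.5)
Your strategy is the paper's: write both sides of \eqref{eq: moment map} as contractions with $\mathfrak{X}_{\mathcal{O}}$, use antisymmetry of $\Omega$, and conclude by nondegeneracy. Your argument that Hamiltonian vector fields span each tangent space is more explicit than the paper's ``$f$ is arbitrary''. The gap is in the sign bookkeeping, and here the sign is the entire content of the statement. With $\imath_{\mathfrak{X}_F}\Omega=\delta F$ and the paper's contraction convention, $\Omega(\mathfrak{X}_F,\mathfrak{X}_{\mathcal{O}})=(\imath_{\mathfrak{X}_F}\Omega)(\mathfrak{X}_{\mathcal{O}})=+\imath_{\mathfrak{X}_{\mathcal{O}}}\delta F$. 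So your chain $\{F,\mathcal{O}\}=\Omega(\mathfrak{X}_F,\mathfrak{X}_{\mathcal{O}})=-\imath_{\mathfrak{X}_{\mathcal{O}}}\delta F$ is self-contradictory. In your first display you use the second form, $\{F,\mathcal{O}\}=-\imath_{\mathfrak{X}_{\mathcal{O}}}\delta F=\mathfrak{X}_F(\mathcal{O})$, which is indeed the convention encoded by \eqref{eq: delta O = Poisson (Q, O)} together with \eqref{eq: imath X Omega = delta Q}. With it, your two displayed identities give $-\imath_{\mathfrak{X}_{\mathcal{O}}}\imath_{\mathfrak{X}[\sigma]}\Omega=-\imath_{\mathfrak{X}_{\mathcal{O}}}\bigl(\mathcal{Q}^{-1}\delta\mathcal{Q}[\sigma]\bigr)$. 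It is therefore the \emph{difference} of the two one-forms that every $\mathfrak{X}_{\mathcal{O}}$ annihilates, and the argument proves $\imath_{\mathfrak{X}[\sigma]}\Omega=+\mathcal{Q}^{-1}\delta\mathcal{Q}[\sigma]$, the opposite of \eqref{eq: Q-1 delta Q, u = i Xu Omega}. The ``$+$'' in your combined display does not follow from the lines before it. You land on the stated sign only because this slip cancels the inconsistency in your definition.

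The minus sign is correct only in the other convention, $\{F,G\}=\Omega(\mathfrak{X}_F,\mathfrak{X}_G)=\mathfrak{X}_G(F)$, and there your argument goes through verbatim. But in that convention your proposed calibration fails instead of confirming anything. For $\mathcal{Q}=e^{Q}$ the proposition then says that \eqref{eq: delta O = Poisson (Q, O)} holds iff $\imath_{\mathfrak{X}[\sigma]}\Omega=-\delta Q[\sigma]$, which contradicts \eqref{eq: imath X Omega = delta Q}. Both equivalences depend only on the product of the signs in $\imath_{\mathfrak{X}_F}\Omega=\pm\delta F$ and $\{F,G\}=\pm\mathfrak{X}_F(G)$, and they require opposite values of that product. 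So no choice of conventions makes both the abelian equivalence of Section~\ref{sec: Reformulation of Lie Group Symmetries} and the proposition true. Your closing claim that the minus sign ``must come out of that same convention'' is therefore false.

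A correct proof must explicitly fix $\imath_{\mathfrak{X}_F}\Omega=\delta F$ and $\{F,G\}=\Omega(\mathfrak{X}_F,\mathfrak{X}_G)$, and note that \eqref{eq: imath X Omega = delta Q} then acquires the opposite sign; alternatively, keep your convention and prove the statement with $+$. The paper's own proof follows your route and is also sign-consistent only in the convention $\{F,G\}=\Omega(\mathfrak{X}_F,\mathfrak{X}_G)$. This requires reading its first-line bracket $\{f,\mathcal{Q}\}$ as $\{\mathcal{Q},f\}$. So the tension is inherited from the source, but your write-up asserts a reconciliation that does not exist.
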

\begin{proof}

Denote by $\mathfrak{X}_f$ the Hamiltonian vector field generated by a function $f \in C^\infty (M)$ associated with the symplectic form $\Omega$.

Given Eq.~\eqref{eq: moment map},
    \begin{align}
        \imath_{\mathfrak{X}_f}  \mathcal{Q}^{-1} \delta \mathcal{Q}[\sigma]
        &=
         \mathcal{Q}^{-1} \{ f, \mathcal{Q} \} [\sigma]
        \\
        &=
        \mathscr{L}_{\mathfrak{X}[\sigma]} f
        \\
        &=
        \imath_{\mathfrak{X}[\sigma] } \delta f
        \\
        &=
        \imath_{\mathfrak{X}[\sigma]} \imath_{\mathfrak{X}_f} \Omega
        \\
        &=
        -\imath_{\mathfrak{X}_f} \imath_{\mathfrak{X
        } [\sigma]} \Omega.
    \end{align}
The first equality holds by the definition of Poisson bracket. Eq.~\eqref{eq: moment map} implies the second equality. We used Cartan's magic formula~\eqref{eq: Cartan's magic formula} in the third equality. The forth equality holds by the definition of the Hamiltonian vector field. The anti-symmetry of the symplectic form implies the last equality. Then
\begin{equation}
    \imath_{\mathfrak{X}_f}  ( \mathcal{Q}^{-1} \delta \mathcal{Q}[\sigma] + \imath_{\mathfrak{X
        } [\sigma]} \Omega )= 0 .
\end{equation}
$f \in C^\infty (M)$ is arbitrary, so Eq.~\eqref{eq: Q-1 delta Q, u = i Xu Omega} holds.

    Given Eq.~\eqref{eq: Q-1 delta Q, u = i Xu Omega},
    \begin{align}
         \mathcal{Q}^{-1} \{ f, \mathcal{Q} \} [\sigma]
        &=
        \imath_{\mathfrak{X}_f}  \mathcal{Q}^{-1} \delta \mathcal{Q} [\sigma]
        \\
        &=
        -\imath_{\mathfrak{X}_f} \imath_{\mathfrak{X}[\sigma]} \Omega
        \\
        &=
        \imath_{\mathfrak{X}[\sigma]} \imath_{\mathfrak{X}_f} \Omega
        \\
        &=
        \imath_{\mathfrak{X}[\sigma] } \delta f
        \\
        &=
        \mathscr{L}_{\mathfrak{X}[\sigma]} f.
    \end{align}
The first equality holds by the definition of Poisson bracket. Eq.~\eqref{eq: Q-1 delta Q, u = i Xu Omega} implies the second equality. The anti-symmetry of the symplectic form implies the third equality. The forth equality holds by the definition of the Hamiltonian vector field. Cartan's magic formula~\eqref{eq: Cartan's magic formula} implies the last equality. $f \in C^\infty (M)$ is arbitrary, so Eq.~\eqref{eq: moment map} holds.
\end{proof}

In the proof of Proposition~\ref{prop: nonabelian momentum map}, we do not assume the existence of a local current, i.e. a codimension-1 form used to construct the charge. Since the left hand side of Eqs.~\eqref{eq: moment map} and \eqref{eq: Q-1 delta Q, u = i Xu Omega} are linear in $\sigma \in \mg$, the global charge $\mathcal{Q}$ is an exponential function on $\mg$. Equivalently, the global charge $\mathcal{Q}$ is $G^*$-valued with $G^*$ the Lie group corresponding to the dual $\mg^*$ of $\mg$. Such a global charge is called a nonabelian moment map or Lie-group valued moment map. The associated symmetry is called a Poisson-Lie symmetry in the literature~\cite{lu1991momentum, alekseev1998lie}. See also e.g. \cite{Babelon:2003qtg} for a review.  

The right-$\tx{PSL}(2, \mathbb{R})$ action \eqref{eq: puncture gauge transformation} is not an ordinary symmetry since it does not preserve the symplectic form \eqref{eq: Omega odot}. But it preserves the Poisson structure if $\tx{PSL}(2, \mathbb{R})$ itself is equipped with the Sklyanin Poisson structure \cite{sklyanin1979complete, sklyanin1982some, sklyanin1983some, faddeev1982integrable} which is the only type of Poisson structures that a semisimple Lie group admits \cite{Semenov-Tian-Shansky:1985mgd
,drinfeld1986quantum, cahen1990lie, vg1990hamiltonian}. The Sklyanin bracket is determined by the same $r$-matrix in Poisson bracket \eqref{eq: simplest WW Poisson}, so it is unique in the case of $\tx{PSL}(2, \mathbb{R})$. Ordinary symmetries can be generalized to Poisson-Lie symmetries which are maps preserving Poisson structures of the product of the phase space and the symmetry group. The right-$\tx{PSL}(2, \mathbb{R})$ action \eqref{eq: puncture gauge transformation} is then a Poisson-Lie symmetry \cite{Gawedzki:1990jc,Alekseev:1990vr,Alekseev:1991tx,Falcet:1991xt, Alekseev:1991wq}.

\subsection{Adjoint Action and the STS Decomposition}
\label{app: Infinitesimal Adjoint Action and the STS Decomposition}

In this subsection, we derive Eq.~\eqref{eq: infinitesimal adjoint action on m_-} and Eq.~\eqref{eq: infinitesimal adjoint action on m_+}.

Using Eqs.~\eqref{eq: K def} and Eq.~\eqref{eq: r_pm (sigma) def}, an element $\sigma \in \mg$ can be decomposed as
\begin{equation}
\label{eq: sigma = sigma_+ - sigma_-}
    \sigma = r_+(\sigma) - r_- (\sigma).
\end{equation}
Using Eq.~\eqref{eq: infinitesimal adjoint action on m}, we have
\begin{equation}
\label{eq: delta_(r_pm) m}
    \delta_{r_\pm(\sigma)} m = - \frac{2 \pi}{k} r_\pm (\sigma) m + \frac{2 \pi}{k} m r_\pm (\sigma).
\end{equation}
Using Eq.~\eqref{eq: Borel decomposition} and Eq.~\eqref{eq: delta_(r_pm) m}, it is straightforward to verify
\begin{align}
    \delta_{r_-(\sigma)} m_- &= \frac{2 \pi}{k} ( m_- r_-(\sigma) - r_-(m_+ \sigma m_+^{-1}) m_- ),
    \label{eq: delta_- m_-}
    \\
    \delta_{r_-(\sigma)} m_+ &= \frac{2 \pi}{k} ( m_+ r_+ ( \sigma ) - r_+ ( m_+ \sigma m_+^{-1} ) m_+ ),
    \label{eq: delta_- m_+}
    \\
    \delta_{r_+(\sigma)} m_- &= \frac{2 \pi}{k} ( m_- r_- ( \sigma ) - r_- ( m_- \sigma m_-^{-1} ) m_- ),
    \label{eq: delta_+ m_-}
    \\
    \delta_{r_+(\sigma)} m_+ &= \frac{2 \pi}{k} ( m_+ r_+(\sigma) - r_+(m_- \sigma m_-^{-1}) m_+ ).
    \label{eq: delta_+ m_+}
\end{align}
Combining Eqs.~\eqref{eq: sigma = sigma_+ - sigma_-}, \eqref{eq: delta_- m_-}, and \eqref{eq: delta_+ m_-} yields Eq.~\eqref{eq: infinitesimal adjoint action on m_-}. Combining Eqs.~\eqref{eq: sigma = sigma_+ - sigma_-}, \eqref{eq: delta_- m_+}, and \eqref{eq: delta_+ m_+} yields Eq.~\eqref{eq: infinitesimal adjoint action on m_+}.

\subsection{Global Charges of Nonlocal Symmetries}
\label{app: Global Charges of Nonlocal Symmetries}

In this subsection, we derive Eq.~\eqref{eq: imath X[sigma] Omega}.

Combining Eqs.~\eqref{eq: delta M = 0} and \eqref{eq: m = W_0^-1 M W_0} yields
\begin{equation}
    \delta ( W_0 m W_0^{-1} ) = 0,
\end{equation}
which is equivalent to
\begin{equation}
\label{eq: delta m = - W_0 delta W_0 m + ...}
    \delta m = - W_0^{-1} \delta W_0 m  + m W_0^{-1} \delta W_0 .
\end{equation}
Eq.~\eqref{eq: delta m = - W_0 delta W_0 m + ...} implies
\begin{equation}
\label{eq: Ad_m - Ad_m^-1}
    (\Ad_m - \Ad_{m}^{-1} ) (W_0^{-1} \delta W_0) = m^{-1} \delta m + \delta m m^{-1}.
\end{equation}
Via the puncture large gauge transformation~\eqref{eq: puncture gauge transformation on W_0}, $r_\pm(\sigma)$ acts on $W_0$ as
\begin{equation}
\label{eq: sigma_pm acting on W_0}
    \mathscr{L}_{\mathfrak{X}[r_\pm(\sigma)]} W_0 = \frac{2\pi}{k}W_0\, r_\pm (\sigma).
\end{equation}
Using Eq.~\eqref{eq: m = W_0^-1 M W_0}, the first term of the symplectic form $\Omega_\bullet$~\eqref{eq: Omega_bullet} can be written as
\begin{equation}
    \frac{k}{4 \pi}  \tr (  W_0^{-1} \delta W_0  m  W_0^{-1} \delta W_0 m^{-1} ).
\end{equation}
Using Eqs.~\eqref{eq: Ad_m - Ad_m^-1} and \eqref{eq: sigma_pm acting on W_0}, we have
\begin{equation}
\label{eq: super difficult}
    \frac{k}{4 \pi}\imath_{\mathfrak{X}[r_- (\sigma)]}   \tr (  W_0^{-1} \delta W_0  m  W_0^{-1} \delta W_0 m^{-1} ) =  \frac{1}{2} \tr ( r_- (\sigma) ( m^{-1} \delta m + \delta m m^{-1})  ).
\end{equation}
Using Eqs.~\eqref{eq: delta_- m_-} and \eqref{eq: delta_- m_+}, we have
\begin{align}
    &\quad\imath_{\mathfrak{X}[r_-(\sigma)]} \tr ( \delta m_- m_-^{-1} \delta m_+ m^{-1}_+ )
    \\
    &=
    \tr ( \frac{2 \pi}{k} ( m_- r_-(\sigma) - r_-(m_+ \sigma m_+^{-1}) m_- ) m_-^{-1} \delta m_+ m^{-1}_+ ) 
    \\
    &- \tr ( \delta m_- m_-^{-1} \frac{2 \pi}{k} ( m_+ r_\pm ( \sigma ) - r_\pm ( m_+ \sigma m_+^{-1} ) m_+ ) m^{-1}_+ )
    \\
    &=
    \tr ( \frac{2 \pi}{k} ( m_- r_-(\sigma) m_-^{-1} - r_-(m_+ \sigma m_+^{-1})  )  \delta m_+ m^{-1}_+ ) 
    \\
    &- \frac{2 \pi}{k} \tr ( \delta m_- m_-^{-1}  ( m_+ r_\pm ( \sigma )  m^{-1}_+ - r_\pm ( m_+ \sigma m_+^{-1} )  )  )
    \\
    &=
    \tr ( \frac{2 \pi}{k} ( m_- r_-(\sigma) m_-^{-1} - r_-(m_+ \sigma m_+^{-1})  )  \delta m_+ m^{-1}_+ ) 
    \\
    &- \frac{2 \pi}{k} \tr ( \delta m_- m_-^{-1}   m_+ r_- ( \sigma )  m^{-1}_+ - \delta m_+ m_+^{-1} r_+ ( m_+ \sigma m_+^{-1} )  )  )
    \\
    &=    \tr ( \frac{2 \pi}{k} ( m_- r_-(\sigma) m_-^{-1} + m_+ \sigma m_+^{-1}  )  \delta m_+ m^{-1}_+ - \delta m_- m_-^{-1}   m_+ r_- ( \sigma )  m^{-1}_+) 
    \\
    &=
    \frac{2 \pi}{k} \tr (  m_- r_-(\sigma) m_-^{-1} \delta m_+ m^{-1}_+ +  \sigma m_+^{-1} \delta m_+     - \delta m_- m_-^{-1}   m_+ r_- ( \sigma )  m^{-1}_+)
    \label{eq: tedious}
\end{align}
Using Eqs.~\eqref{eq: Borel decomposition}, \eqref{eq: Omega_bullet}, \eqref{eq: super difficult}, and \eqref{eq: tedious}, we have
\begin{equation}
\label{eq: r_- Omega_bullet}
    \imath_{\mathfrak{X}[r_-(\sigma)]} \Omega_\bullet = - \tr ( m^{-1}_+ \delta m_+  \sigma).
\end{equation}
Similarly, we have
\begin{equation}
\label{eq: r_+ Omega_bullet}
    \imath_{\mathfrak{X}[r_+(\sigma)]} \Omega_\bullet = - \tr (  m^{-1}_- \delta m_- \sigma ).
\end{equation}
Combining Eqs.~\eqref{eq: sigma = sigma_+ - sigma_-}, \eqref{eq: r_- Omega_bullet}, and \eqref{eq: r_+ Omega_bullet} proves Eq.~\eqref{eq: imath X[sigma] Omega}.

\bibliographystyle{unsrt}
\bibliography{ref.bib}

\end{CJK*}
\end{document}